\newtheorem{property}[section]{Property}
\newmeta{\c}{\itop{v}}
\newmeta{\ty}{\otype}
\newrmop{\Eval}{Eval}
\newcommand{\otype}{\ensuremath{\alpha}\xspace}
\newcommand{\op}{\itop{op}}
\newcommand{\deltafun}{\ensuremath{\delta}}
\def\sfop#1{\ma{\mathop{\hbox{\sf #1\/}}\nolimits}}
\renewcommand\newttop[1]{\ensuremath{\ttop{#1}}\xspace}
\newcommand\logop{\newttop{log}}
\newcommand\expop{\newttop{exp}}
\newcommand\realop{\newttop{real?}}
\newcommand{\Exptype}{\sfop{exp}\xspace}
\newcommand{\Conttype}{\sfop{cont}\xspace}
\newcommand{\conttype}[1]{\Conttype}
\newcommand{\Valtype}{\sfop{val}\xspace}
\newcommand{\Ent}{\H}
\renewcommand{\H}{\ensuremath{\mathbb{S}}\xspace}
\renewcommand{\cons}{\ensuremath{{\bfop{::}}}}
\newcommand{\R}{\ensuremath{\mathbb{R}}\xspace}
\newcommand{\Rplus}{\ensuremath{\mathbb{R}^+}\xspace}
\newcommand{\nextstep}{\ensuremath{\itop{next-state}}\xspace}
\newcommand\appexp[2]{\ensuremath{\texttt({#1}\ {#2} \texttt) }}
\newcommand\lambdaexp[2]{\ensuremath{\lambda{#1}.{#2}}}
\newcommand\letexp[3]{\ensuremath{\ttop{let}{#1}={#2}\,\ttop{in}\,{#3}}}
\newcommand\ifexp[3]{\ensuremath{\ttop{if}\,{#1}\,\ttop{then}\,{#2}\,\ttop{else}\,{#3}}}
\newcommand\opexp[2]{\ensuremath{#1\texttt{(}#2\texttt{)}}}
\newcommand\opexptt[2]{\ensuremath{\texttt{#1(}#2\texttt{)}}}
\newcommand\parenthexp[1]{\texttt{(\!}#1\texttt{)}}
\newcommand\lambdaexpp[2]{\ensuremath{\texttt{(}\lambda{#1}.{#2}\texttt{)}}}
\newcommand\letexpp[3]{\ensuremath{\parenthexp{\ttop{let}{#1}={#2}\,\ttop{in}\,{#3}}}}
\newcommand\const[1]{\ensuremath{\ttop{c}_{#1}}}
\newcommand\sampleexp{\ensuremath{\ttop{sample}}\xspace}
\newcommand\factorexp[1]{\ensuremath{\ttop{factor}\,{#1}}}
\newcommand\haltcont{\ensuremath{\ttop{halt}}}
\newcommand\letcont[3]{\ensuremath{({#1} \to {#2}){#3}}\xspace}
\newcommand\hastype[3]{\ensuremath{{#1}{\,\vdash\,}{#2}\ {#3}}}
\newcommand{\Config}{\ensuremath{\rmop{Config}}\xspace}
\newcommand{\finishcomp}{\ensuremath{\itop{finish}}\xspace}
\newcommand\csep{~\vert~}
\newcommand\configsw[2]{\nconfig{\s}{#1}{#2}{\t}{w}}
\newcommand\nconfig[5]{\ensuremath{\langle {#1} \csep {#2} \csep {#3}
    \csep {#4} \csep {#5} \rangle}\xspace}
\newcommand\ev[4]{\ensuremath{\rmop{ev}({#1},{#2},{#3},{#4})}}
\newcommand\ew[4]{\ensuremath{\rmop{ew}({#1},{#2},{#3},{#4})}}
\newcommand\eval[6]{\ensuremath{\rmop{eval}({#1},{#2},{#3},{#4},{#5}, {#6})}}
\newcommand\aeval[7]{\ensuremath{\rmop{eval}^{({#1})}({#2},{#3},{#4},{#5},{#6},{#7})}}
\newcommand\smint[2]{\textstyle \int {#1} \;{#2}}
\newcommand\smiint[3]{\textstyle \iint {#1} \;{#2}\,{#3}}
\newcommand\smiiint[4]{\textstyle \iiint {#1} \;{#2}\,{#3}\,{#4}}
\newcommand\smiiiint[5]{\textstyle \iiiint {#1} \;{#2}\,{#3}\,{#4}\,{#5}}
\newcommand\smintBB[4]{\textstyle \int_{#1}^{#2} {#3} \;{#4}}
\newcommand\leb[2]{\smint{#1}{d#2}}
\newcommand\dint[3]{\smiint{#1}{d#2}{d#3}}
\newcommand\tint[4]{\smiiint{#1}{d#2}{d#3}{d#4}}
\newcommand\qint[5]{\smiiiint{#1}{d#2}{d#3}{d#4}{d#5}}
\newcommand{\lebs}[1]{\leb {#1}{\s}}
\newcommand{\iterint}[3]{\leb{\dots \leb{#1}{#2} \dots \,}{#3}}
\newcommand\Meas[3]{\ensuremath{\mu({#1},{#2},{#3})\xspace}}
\newcommand\aMeas[4]{\ensuremath{\mu^{(#1)}({#2},{#3},{#4})\xspace}}
\newcommand\vmeasM[1]{\ensuremath{\hat{\mu}}(#1, -)}
\newcommand\vmeas[2]{\ensuremath{\hat{\mu}}(#1, #2)}
\newcommand\vmeasfun{\ensuremath{\hat{\mu}}}
\newcommand\logrel[5]{\ensuremath{({#4},{#5}) \in
    #1}^{#2}_{#3}\xspace}
\newcommand\valrel[4]{\logrel{\bbord V}{#1}{#2}{#3}{#4}}
\newcommand\exprel[4]{\logrel{\bbord E}{#1}{#2}{#3}{#4}}
\newcommand\contrel[4]{\logrel{\bbord K}{#1}{#2}{#3}{#4}}
\newcommand\ilogrel[5]{\ensuremath{({#4},{#5}) \in #1}^{#2}_{#3}\xspace}
\newcommand\ivalrel[4]{\ilogrel{\bbord V}{#1}{#2}{#3}{#4}}
\newcommand\iexprel[4]{\ilogrel{\bbord E}{#1}{#2}{#3}{#4}}
\newcommand\icontrel[4]{\ilogrel{\bbord K}{#1}{#2}{#3}{#4}}
\newcommand\iciurel[4]{\logrel{\bbord {CIU}}{#1}{#2}{#3}{#4}}
\newcommand\iarbrel[4]{\logrel{R}{#1}{#2}{#3}{#4}}
\newcommand\ciurel[4]{\logrel{\bbord {CIU}}{#1}{#2}{#3}{#4}}
\newcommand\ctxrel[4]{\logrel{\bbord {CTX}}{#1}{#2}{#3}{#4}}
\newcommand\ciurelset[2]{\logrelset{\bbord {CIU}}{#1}{#2}}
\newcommand\ctxrelset[2]{\logrelset{\bbord {CTX}}{#1}{#2}}
\newcommand\ctxeq{\ensuremath{=_{\text{\scriptsize{ctx}}}}\xspace}
\newcommand\letvax{\ensuremath{\rmop{let}_v}\xspace}
\newcommand\letidax{\ensuremath{\rmop{let}_{\itop{\scriptsize{id}}}}\xspace}
\newcommand\letsax{\ensuremath{\rmop{let}_S}\xspace}
\newcommand\ityenvrel[4]{\ilogrel{\bbord G}{#2}{#1}{#3}{#4}}
\newcommand\logrelset[3]{\ensuremath{#1}^{#2}_{#3}\xspace}
\newcommand\ivalrelset[2]{\logrelset{\bbord V}{#1}{#2}}
\newcommand\iexprelset[2]{\logrelset{\bbord E}{#1}{#2}}
\newcommand\icontrelset[2]{\logrelset{\bbord K}{#1}{#2}}
\newcommand\ityenvrelset[2]{\logrelset{\bbord G}{#1}{#2}}
\newcommand\iciurelset[2]{\logrelset{\bbord {CIU}}{#1}{#2}}
\newcommand\ictxrelset[2]{\logrelset{\bbord {CTX}}{#1}{#2}}
\newcommand\iarbrelset[2]{\logrelset{R}{#1}{#2}}
\newcommand\icontrelalt[4]{({#3},{#4}) \in \icontrelset{#1}{#2}}
\newcommand\valrelset[2]{\logrelset{\bbord V}{#1}{#2}}
\newcommand\exprelset[2]{\logrelset{\bbord E}{#1}{#2}}
\newcommand\contrelset[2]{\logrelset{\bbord K}{#1}{#2}}
\newcommand\wider{\renewcommand\arraystretch{1.5}}
\newrmop{\dom}{dom}
\newitop{\strict}{strict}
\newrmop{\lfp}{lfp}
\newcommand{\bigstep}[4]{\ensuremath{#1} \proves {#2} \Downarrow {#3},{#4}}
\newcommand\sem[1]{\ensuremath{\lbrack\!\lbrack{#1}\rbrack\!\rbrack}}
\newitop{\Measop}{Meas}
\newcommand\proves{\ensuremath{\vdash}}
\newitop{\LL}{LL}
\newitop{\RR}{RR}
\newcommand{\mathblock}[1]{{\renewcommand{\arraystretch}{1}\begin{array}[t]{@{}l} #1 \end{array}}}
\newitop{\return}{return}
\newitop{\bindop}{bind}
\newitop{\invcdf}{invcdf}
\newdom{\Bool}{Bool}
\newitop{\true}{true}
\newitop{\false}{false}
\newdom{\Unit}{\mbox{\upshape()}}  
\newitop{\Real}{Real}
\newitop{\prior}{prior}
\newitop{\model}{model}
\newitop{\bayes}{bayes}
\newitop{\post}{post}
\newcommand{\just}[1]{\quad \hbox{(#1)}}
\newcommand{\comdots}{, \dots,}
\newcommand\SigmaReal{\ensuremath{\Sigma_{\mathbb{R}}}}
\newcommand\SigmaR{\ensuremath{\Sigma_{\mathbb{R}}}}
\newcommand\proofinappendix{\begin{proof}See \theappendixref.\end{proof}}
\newcommand\theappendixref{Appendix~\ref{sec:appendix-proofs}}
\newitop{\exp}{e}
\newmeta{\e}{\itop{e}}
\newmeta{\c}{\itop{v}}
\newmeta{\ty}{\otype}
\newenvironment{Case}[1]%
   {\textbf{Case\ }{#1}\,\textbf{:}}
   {}
\newcommand\tophantom{\ensuremath{\to^{\hphantom{*}}}}
\begin{document}
\title{Contextual Equivalence for a Probabilistic Language with
  Continuous Random Variables and Recursion}
\thanks{This paper is an extended version of \citet{ICFPVersion} (ICFP 2018).}

\author{Mitchell Wand}
\affiliation{
  \department{College of Computer and Information Science}
  \institution{Northeastern University}
  \streetaddress{360 Huntington Ave, Room 202WVH}
  \city{Boston}
  \state{MA}
  \country{USA}
  \postcode{02115}}
\email{wand@ccs.neu.edu}

\author{Ryan Culpepper}
\affiliation{
  \department{Faculty of Information Technology}
  \institution{Czech Technical University in Prague}
  \streetaddress{Th\'akurova 9}
  \city{Prague}
  \postcode{16000}
  \country{Czech Republic}
}
\email{ryanc@ccs.neu.edu}

\author{Theophilos Giannakopoulos}
\affiliation{
  \department{FAST Labs}
  \institution{BAE Systems}
  \streetaddress{600 District Ave}
  \city{Burlington}
  \state{MA}
  \country{USA}
  \postcode{01803}
}
\email{tgiannak@alum.wpi.edu}

\author{Andrew Cobb}
\affiliation{
  \department{College of Computer and Information Science}
  \institution{Northeastern University}
  \streetaddress{360 Huntington Ave, Room 202WVH}
  \city{Boston}
  \state{MA}
  \country{USA}
  \postcode{02115}}
\email{acobb@ccs.neu.edu}

\begin{abstract}
  We present a complete reasoning principle for contextual equivalence
  in an untyped probabilistic language. The language includes
  continuous (real-valued) random variables, conditionals, and
  scoring. It also includes recursion, since the standard
  call-by-value fixpoint combinator is expressible.

  We demonstrate the usability of our characterization by proving
  several equivalence schemas, including familiar facts from lambda
  calculus as well as results specific to probabilistic programming.
  In particular, we use it to prove that reordering the random draws
  in a probabilistic program preserves contextual equivalence. This
  allows us to show, for example, that
  \begin{displaymath}
    ({\letexp{x}{e_1}{\letexp{y}{e_2}{e_0}}}) \ctxeq
    ({\letexp{y}{e_2}{\letexp{x}{e_1}{e_0}}})
  \end{displaymath}
  (provided $x$ does not occur free in $e_2$ and $y$ does not occur
  free in $e_1$) despite the fact that $e_1$ and $e_2$ may have
  sampling and scoring effects.
\end{abstract}


\begin{CCSXML}
<ccs2012>
<concept>
<concept_id>10002950.10003648</concept_id>
<concept_desc>Mathematics of computing~Probability and statistics</concept_desc>
<concept_significance>500</concept_significance>
</concept>
<concept>
<concept_id>10003752.10010124.10010131.10010134</concept_id>
<concept_desc>Theory of computation~Operational semantics</concept_desc>
<concept_significance>300</concept_significance>
</concept>
</ccs2012>
\end{CCSXML}

\ccsdesc[500]{Mathematics of computing~Probability and statistics}
\ccsdesc[300]{Theory of computation~Operational semantics}


\keywords{probabilistic programming, logical relations, contextual equivalence}

\maketitle

\section{Introduction} \label{sec:probprog}


A \emph{probabilistic programming language} is a programming language
enriched with two features---\emph{sampling} and \emph{scoring}---that
enable it to represent probabilistic models. We introduce these two
features with an example program that models linear regression.

The first feature, \emph{sampling}, introduces probabilistic
nondeterminism. It is used to represent random variables. For example,
let $\opexptt{normal}{m,s}$ be defined to nondeterministically produce
a real number distributed according to a normal (Gaussian)
distribution with mean $m$ and scale $s$.

Here is a little model of linear regression that uses \texttt{normal}
to randomly pick a slope and intercept for a line and then defines
\texttt{f} as the resulting linear function:
\begin{alltt}
  A = normal(0, 10)
  B = normal(0, 10)
  f(x) = A*x + B
\end{alltt}
This program defines a distribution on lines, centered on
$y = 0x+0$, with high variance.  This distribution is called the
\emph{prior}, since it is specified prior to considering any
evidence.

The second feature, \emph{scoring}, adjusts the likelihood of the
current execution's random choices. It is used to represent
conditioning on observed data.

Suppose we have the following data points: $\{ (2.0, 2.4), (3.0, 2.7),
(4.0, 3.0) \}$.  The smaller the error between the result of \texttt{f}
and the observed data, the better the choice of \texttt{A} and
\texttt{B}. We express these observations with the following addition
to our program:
\begin{alltt}
  factor normalpdf(f(2.0)-2.4; 0, 1)
  factor normalpdf(f(3.0)-2.7; 0, 1)
  factor normalpdf(f(4.0)-3.0; 0, 1)
\end{alltt}
Here scoring is performed by the \texttt{factor}
form, which takes a positive real number to multiply into the current
execution's likelihood.  We use $\opexptt{normalpdf}{\_; \texttt{0,1}}$---the
density function of the standard normal distribution---to convert the
difference between predicted and observed values into the score. This
scoring function assigns high likelihood when the error is near 0,
dropping off smoothly to low likelihood for larger errors.

After incorporating the observations, the program defines a
distribution centered near $y = 0.3x + 1.8$, with low variance.  This
distribution is often called the \emph{posterior distribution}, since it represents
the distribution after the incorporation of evidence.

Computing the posterior distribution---or a workable approximation
thereof---is the task of \emph{probabilistic inference}.  We say that
this program has \emph{inferred} (or sometimes \emph{learned}) the
parameters \texttt{A} and \texttt{B} from the data. Probabilistic
inference encompasses an arsenal of techniques of varied applicability
and efficiency. Some inference techniques may benefit if the program
above is transformed to the following shape:
\begin{alltt}
  A = normal(0, 10)
  factor \(Z\)(A)
  B = normal(\(M\)(A), \(S\)(A))
\end{alltt}
The transformation relies on the conjugacy relationship between the
normal prior for \texttt{B} and the normal scoring function of the
observations. A useful equational theory for probabilistic programming
must incorporate facts from mathematics in addition to standard
concerns such as function inlining.


In this paper we build a foundation for such an equational theory for
a probabilistic programming language. In particular, our language
supports
\begin{itemize}
\item sampling continuous random variables,
\item scoring (soft constraints), and
\item conditionals, higher-order functions, and recursion.
\end{itemize}
Other such languages include Church~\citep{Church}, its descendants
such as Venture~\citep{Venture} and Anglican~\citep{Anglican}, and
other languages~\citep{Hansei,Hakaru,ProbC}
and language models~\citep{Park08,Huang:2016,Borgstrom:2016}.
Our framework is able to justify the transformation above.

In Section~\ref{sec:lang} we present our model of a probabilistic
language, including its syntax and semantics.
We then define our logical relation (Section~\ref{sec:log-rel}), our
CIU relation (Section~\ref{sec:ciu}), and contextual ordering
(Section~\ref{sec:ciu-ctx}); and we prove that all three relations
coincide.
As usual, contextual ordering is powerful but difficult to prove
directly.  The virtue of the logical relation is that it eliminates
the need to reason about arbitrary syntactic contexts; they are boiled
down to their essential components: substitutions and continuations
(evaluation contexts). The CIU relation~\cite{MasonTalcott:1991} is a
further simplification of the logical relation; it offers the easiest
way to prove relationships between specific terms.
In Section~\ref{sec:equivs} we define contextual equivalence and use
the CIU relation to demonstrate a catalog of useful equivalence
schemas, including $\beta_v$ and \ttop{let}-associativity, as well as
a method for importing first-order equivalences from mathematics.
One unusual equivalence is \ttop{let}-commutativity:
\begin{displaymath}
  {\letexpp{x}{e_1}{\letexp{y}{e_2}{e_0}}} ~\ctxeq~
  {\letexpp{y}{e_2}{\letexp{x}{e_1}{e_0}}}
\end{displaymath}
(provided $x$ does not occur free in $e_2$ and $y$ does not occur free in
$e_1$).  This equivalence, while valid for a pure language, is certainly not
valid for all effects (consider, for example, if there were an assignment
statement in $e_1$ or $e_2$). In other words, sampling and scoring are
commutative effects.
We conclude with two related work sections:
Section~\ref{sec:mismatch} demonstrates the correspondence between our
language model and others, notably that of \citet{Borgstrom:2016}, and
Section~\ref{sec:related-work} informally discusses other related work.

Throughout the paper, we limit proofs mostly to high-level sketches
and representative cases. Additional details and cases for some proofs
can be found in the appendices, along with a sketch of the steps of
the linear regression transformation using the equivalences we prove
in this paper.

\section{Probabilistic Language Model}
\label{sec:lang}

In this section we define our probabilistic language and its
semantics. The semantics consists of three parts:
\begin{itemize}
\item A notion of \emph{entropy} for modeling random behavior.
\item An \emph{evaluation} function that maps a program and entropy to
  a real-valued result and an importance weight. We define the
  evaluation function via an abstract machine. We then define a
  big-step semantics and prove it equivalent; the big-step formulation
  simplifies some proofs in Section~\ref{sec:commut} by making the
  structure of evaluation explicit.
\item A mapping to \emph{measures} over the real numbers, calculated
  by integrating the evaluation function with respect to the entropy
  space. A program with a finite, non-zero measure can be interpreted
  as an unnormalized probability distribution.
\end{itemize}

The structure of the semantics loosely corresponds to one
inference technique for probabilistic programming languages:
importance sampling. In an importance sampler, the entropy is
approximated by a pseudo-random number generator (PRNG); the
evaluation function is run many times with different initial PRNG
states to produce a collection of weighted samples; and the weighted
samples approximate the program's measure---either directly by
conversion to a discrete distribution of results, or indirectly via
computed statistical properties such as sample mean, variance, etc.

Our language is similar to that of \citet{Borgstrom:2016}, but with
the following differences:
\begin{itemize}
\item Our language requires \ttop{let}-binding of nontrivial
  intermediate expressions; this simplifies the semantics.  This
  restriction is similar to but looser than A-normal
  form~\cite{SabryFelleisen:1993}.
\item Our model of entropy is a finite measure space made of
  \emph{splittable} entropy points, rather than an infinite measure
  space containing \emph{sequences} of real numbers.
\item Our \ttop{sample} operation models a standard uniform random
  variable, rather than being parameterized over a distribution.
\end{itemize}
We revisit these differences in Section~\ref{sec:mismatch}.

\subsection{Syntax}
\label{sec:syntax}

\begin{figure}
  \centering
\begin{displaymath}
    \begin{altgrammar}
      v &::=& x \alt \lambdaexp{x}{e} \alt \const r
      &\hbox{Syntactic Values} \\
      e &::=& v \alt \appexp{v}{v} \alt
      \letexp{x}{e}{e}  & \hbox{Expressions} \\
      &\alt& \opexp{\op^n}{v_1, \dots, v_n} \alt
      \ifexp{v}{e}{e} \\
      &\alt& \sampleexp \alt \factorexp{v} \\
      \op^1 &::=&  \logop \alt \expop \alt \realop \alt \alt \dots
        & \hbox{Unary operations}\\
      \op^2 &::=& + \alt - \alt \times \alt \div \alt < \alt \leq \alt \dots
        & \hbox{Binary operations}\\
      \op^3 &::=& \ttop{normalinvcdf} \alt \ttop{normalpdf} \alt \dots
        & \hbox{Ternary operations}\\
      K &::=& \haltcont \alt {\letcont{x}{e}{K}}  &\hbox{Continuations}\\
    \end{altgrammar}
  \end{displaymath}
  \caption{Syntax of values, expressions, and continuations}
  \label{fig:syntax}
\end{figure}

The syntax of our language is given in Figure~\ref{fig:syntax}.
For simplicity, we require sequencing to be made explicit using
$\ttop{let}$.
There is a constant $\const{r}$ for each real number $r$, and there
are various useful primitive operations.

The \sampleexp form draws from a uniform distribution on $[0,1]$. Any
other standard real-valued distribution can be obtained by applying
the appropriate \emph{inverse cumulative distribution function}.
For example, sampling from a normal distribution can be expressed as follows:
\[
\opexptt{normal}{m, s} ~\triangleq~ \letexpp{u}{\sampleexp}{\opexptt{normalinvcdf}{u; m, s}}
\]
where $\opexptt{normalinvcdf}{u; m, s}$ is the least $x$ such that if
$X \sim \mathcal{N}(m,s^2)$ then $\mathrm{Pr}[X \leq x] = u$.

Finally, $\factorexp{v}$ weights (or ``scores'') the current
execution by the value $v$.

The language is untyped, but we express the scoping relations by rules
like typing rules.  We write $\hastype{\G}{e}{\Exptype}$ for the
assertion that $e$ is a well-formed expression whose free variables
are contained in the set \G, and similarly for values and
continuations.  The scoping rules are given in Figure~\ref{fig:scoping}.

\begin{figure}
\begin{mathpar}
\inferrule{x \in \G}{\hastype{\G}{x}{\Valtype}}

\inferrule{\hastype{\G,x}{e}{\Exptype}}
          {\hastype{\G}{\lambdaexp{x}{e}}{\Valtype}}

\hastype{\G}{\const{r}}{\Valtype}
\\\\
\inferrule{\hastype{\G}{v}{\Valtype}}
          {\hastype{\G}{v}{\Exptype}}

\inferrule{{\hastype{\G}{v_1}{\Valtype}} \\
           {\hastype{\G}{v_2}{\Valtype}}}
          {\hastype{\G}{\appexp{v_1}{v_2}}{\Exptype}}

\inferrule{{\hastype{\G}{e_1}{\Exptype}} \\
           {\hastype{\G,x}{e_2}{\Exptype}}}
          {\hastype{\G}{\letexp{x}{e_1}{e_2}}{\Exptype}}

\inferrule{\hastype{\G}{v_i}{\Valtype} \quad (\forall i \in \{1, \dots, n\} )}
          {\hastype{\G}{\opexp{\op^n}{v_1, \dots, v_n}}{\Exptype}}

\inferrule{\hastype{\G}{v}{\Valtype} \\
           \hastype{\G}{e_1}{\Exptype} \\
           \hastype{\G}{e_2}{\Exptype}}
          {\hastype{\G}{\ifexp{v}{e_1}{e_2}}{\Exptype}}
\\\\
\hastype{\G}{\sampleexp}{\Exptype}

\inferrule{\hastype{\G}{v}{\Valtype}}
          {\hastype{\G}{\factorexp v}{\Exptype}}
\\\\
\hastype{}{\haltcont}{\Conttype}

\inferrule{{\hastype{\setof{x}}{e}{\Exptype}}  \\
           {\hastype{}{K}{\Conttype}}}
          {\hastype{}{\letcont{x}{e}{K}}{\Conttype}}
\end{mathpar}
\caption{Scoping rules for values, expressions, and continuations}
\label{fig:scoping}
\end{figure}

\subsection{Modeling Entropy}
\label{sec:entropy}

The semantics uses an \emph{entropy} component as the source of
randomness.
We assume an entropy space \Ent along with its stock measure
$\mu_\Ent$.  We use \s and \t to range over values in \Ent.  When we
integrate over \s or \t, we implicitly use the stock measure; that is,
we write $\leb{f(\s)}{\s}$ to mean $\smint{f(\s)}{\mu_\Ent(d\s)}$.
Following \citet{CulpepperCobb:2017}, we assume that \Ent has the
following properties:

\begin{property}[Properties of Entropy]
  \label{lemma-entropy-def}
  \label{lemma-entropy-splitting}\
  \begin{enumerate}
  \item $\mu_\Ent(\Ent) = 1$

  \item There is a function $\pi_U : \Ent \to [0,1]$ such that for all
    measurable $f : [0,1] \to \Rplus$,
    \begin{displaymath}
      \leb {f(\pi_U(\s))} {\s} = \smintBB{0}{1}{f(x)}{\lambda(dx)}
    \end{displaymath}
    where $\lambda$ is the Lebesgue measure. That is, $\pi_U$
    represents a standard uniform random variable.

  \item There is a surjective pairing function
    $`\cons` : \H \times \H \to \H$, with projections $\pi_L$ and $\pi_R$, all measurable.

  \item The projections are measure-preserving: for all
    measurable $g : \Ent \times \Ent \to \Rplus$,
    \begin{displaymath}
      {\leb {g(\pi_L(\s), \pi_R(\s))} {\s}} = {\dint {g(\s_1,\s_2)} {\s_1} {\s_2}}
    \end{displaymath}
  \end{enumerate}

\end{property}

Since $\Ent \cong \Ent \times \Ent$ and thus $\Ent \cong \Ent^n ~(n
\geq 1)$, we can also use entropy to encode non-empty \emph{sequences}
of entropy values.

One model that satisfies these properties is the space of infinite
sequences of real numbers in $[0,1]$; $\pi_L$ and $\pi_R$ take the
odd- and even-indexed subsequences, respectively, and $\pi_U$ takes
the first element in the sequence.
Another model is the space of infinite sequences of bits, where
$\pi_L$ and $\pi_R$ take odd and even subsequences and $\pi_U$
interprets the entire sequence as the binary expansion of a number in
$[0,1]$.
%
%
It is tempting to envision entropy as infinite binary trees labeled
with real numbers in $[0,1]$, but the pairing function is not
surjective.

We also use Tonelli's Theorem:

\begin{lemma}[Tonelli]
  \label{lemma-tonelli}
  Let $f: \Ent \times \Ent \to \Rplus$ be measurable.  Then
  \begin{displaymath}
    {\leb {\left({\leb {f(\s_1,\s_2)} {\s_1}} \right)} {\s_2}}
    =
    {\leb {\left({\leb {f(\s_1,\s_2)} {\s_2}} \right)} {\s_1}}
  \end{displaymath}
\end{lemma}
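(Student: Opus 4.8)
The plan is to recognize this as the classical Tonelli theorem specialized to the entropy space. By Property~\ref{lemma-entropy-def}(1) we have $\mu_\Ent(\Ent) = 1$, so $(\Ent, \mu_\Ent)$ is a finite --- hence $\sigma$-finite --- measure space, which is exactly the hypothesis that Tonelli's theorem requires for nonnegative measurable integrands. The cleanest route is therefore simply to cite a standard measure-theory reference. For a self-contained argument, however, I would run the standard ``good sets'' / monotone-class machine, which uses only the finiteness of $\mu_\Ent$ together with the monotone convergence theorem.

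First I would establish the identity for indicator functions of measurable rectangles. If $f = \mathbf{1}_{A\times B}$ with $A, B \subseteq \Ent$ measurable, then for fixed $\s_2$ the inner integral $\leb{f(\s_1,\s_2)}{\s_1}$ equals $\mu_\Ent(A)\,\mathbf{1}_{B}(\s_2)$, and integrating over $\s_2$ gives $\mu_\Ent(A)\,\mu_\Ent(B)$; the symmetric computation yields the same value, so the two iterated integrals agree. Next I would extend to the whole product $\sigma$-algebra: let $\mathcal{D}$ be the collection of measurable $E \subseteq \Ent\times\Ent$ such that (i) the section integrals are measurable functions of the remaining variable and (ii) the two iterated integrals of $\mathbf{1}_{E}$ coincide. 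Using finiteness of $\mu_\Ent$ one checks that $\mathcal{D}$ is a $\lambda$-system containing the $\pi$-system of measurable rectangles, so by Dynkin's lemma $\mathcal{D}$ contains the entire generated product $\sigma$-algebra.

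From indicators I would extend to nonnegative simple functions by linearity of both iterated integrals, and then to an arbitrary nonnegative measurable $f$ by monotone convergence: choose simple functions $f_n \uparrow f$ pointwise, apply the result to each $f_n$, and pass to the limit. The monotone convergence theorem is invoked at both nesting levels on each side --- once for the inner integral (which also preserves measurability of the resulting outer integrand) and once for the outer integral --- yielding the desired equality for $f$.

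I expect the only genuine subtlety to be the measurability bookkeeping in the monotone-class step: one must confirm that $\s_2 \mapsto \leb{f(\s_1,\s_2)}{\s_1}$ is measurable, so that the outer integral is even well-defined, and that this property is preserved under the increasing limits and the $\lambda$-system operations. This is routine precisely because $\mu_\Ent$ is finite, so no $\sigma$-finite exhaustion of $\Ent$ is needed; everything else is the standard layered induction of measure theory.
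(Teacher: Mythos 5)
Your proposal is correct and is essentially the paper's approach: the paper gives no proof of this lemma at all, invoking it as the classical Tonelli theorem, which applies precisely because $\mu_\Ent$ is a probability measure (Property~\ref{lemma-entropy-def}.1) and hence $\sigma$-finite, exactly the observation with which you open. Your additional self-contained monotone-class argument (rectangles, Dynkin's lemma, simple functions, monotone convergence) is the standard textbook proof and is sound, but it supplies detail the paper deliberately leaves to the literature.
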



\subsection{Operational Semantics}
\label{sec:op-sem}

\subsubsection{Small-Step Semantics}
\label{sec:op-sem-small}

We define evaluation via an abstract machine with a small-step
operational semantics.  The semantics rewrites configurations
\nconfig{\s}{e}{K}{\t}{w} consisting of:
\begin{itemize}
\item an entropy $\s$ (representing the ``current'' value of the
  entropy),
\item a closed expression $e$,
\item a closed continuation $K$,
\item an entropy $\t$ (encoding a stack of entropies, one for each frame of $K$), and
\item a positive real number $w$ (representing the weight of the
  current run)
\end{itemize}
The rules for the semantics are given in Figure~\ref{fig:op-sem}.

\begin{figure}
  \centering
 \begin{displaymath}
  \begin{array}{lcl}
    \nconfig{\s}{\letexp{x}{e_1}{e_2}}{K}{\t}{w}
    &\to&
    {\nconfig{\pi_L(\s)}{e_1}
          {\letcont{x}{e_2}{K}}
          {\pi_R(\s) \cons \t} {w}} \\

    {\nconfig{\s}{v}{\letcont{x}{e_2}{K}}{\s' \cons \t} {w} }
    &\to&
    {\nconfig{\s'}{e_2[v/x]}{K} {\t} {w}} \\

    {\nconfig {\s} {\appexp{\lambdaexpp{x}{e}}{v}}{K}{\t}{w}}
    &\to&
    {\nconfig {\s} {e[v/x]}{K} {\t}{w}}\\

    {\nconfig {\s}{\sampleexp}{K} {\t} {w}} &\to&
    {\nconfig {\pi_R(\s)} {\const{\pi_U(\pi_L(\s))}} {K} {\t} {w}} \\

    {\nconfig {\s}{\opexp{\op^n}{v_1, \dots, v_n}}{K} {\t}{w}} &\to&
    {\nconfig {\s}{\deltafun(\op^n, v_1, \dots, v_n)}{K} {\t}{w}} \hbox{ (if defined)}\\

    {\nconfig {\s}{\ifexp{\const r}{e_1}{e_2}}{K}{\t}{w} } &\to&
    {\nconfig {\s}{e_1}{K}{\t}{w} \hbox{ (if $r > 0$ ) }} \\

    {\nconfig {\s}{\ifexp{\const r}{e_1}{e_2}}{K}{\t}{w} } &\to&
    {\nconfig {\s}{e_2}{K}{\t}{w} \hbox{ (if $r \leq 0$ ) }} \\

    {\nconfig {\s} {\factorexp{\const{r}}} {K} {\t} {w}}&\to&
    {\nconfig {\s} {\const{r}} {K}{\t} {r \times w}} \hbox{ (provided $r > 0$)}
  \end{array}
\end{displaymath}

  \caption{Small-step operational semantics}
  \label{fig:op-sem}
\end{figure}

The semantics uses continuations for sequencing and substitutions for
procedure calls. Since \letexp{x}{e_1}{e_2} is the only sequencing
construct, there is only one continuation-builder.
The first rule recurs into the right-hand side of a \ttop{let}, using
the left half of the entropy as its entropy, and saving the right
half for use with $e_2$.
The second rule (``return'') substitutes the value of the expression
into the body of the \ttop{let} and restores the top saved entropy
value for use in the body. More precisely, we view the third component
as an encoded pair of an entropy value and an encoded entropy stack,
as mentioned in Section~\ref{sec:entropy}.\footnote{We defer the
  explanation of the initial entropy stack to
  Section~\ref{sec:measures}.}  The entropy stack $\t$ and
continuation $K$ are always updated simultaneously.
The return rule can be written using explicit projections as follows:
\[
    {\nconfig{\s}{v}{\letcont{x}{e_2}{K}}{\t} {w} }
    \quad\to\quad
    {\nconfig{\pi_L(\t)}{e_2[v/x]}{K} {\pi_R(\t)} {w}} \\
\]
Note that in the return rule the current entropy $\s$ is dead.
Except for the entropy and weight, these rules are standard for a
continuation-passing interpreter for the $\lambda$-calculus with
\ttop{let}.

The $\deltafun$ partial function interprets primitive operations.  We
assume that all the primitive operations are measurable partial
functions returning real values, and with the exception of \realop,
they are undefined if any of their arguments is a closure.
A conditional expression evaluates to its first branch if the
condition is a positive real constant, its second branch if
nonpositive; if the condition is a closure, evaluation is stuck.
Comparison operations and the $\sfop{real?}$ predicate return $1$ for
truth and $0$ for falsity.

The rule for \sampleexp uses $\pi_U$ to extract from the entropy a
real value in the interval $[0,1]$. The entropy is split first, to
make it clear that entropy is never reused, but the leftover entropy
is dead per the return rule.
The rule for $\factorexp{v}$ weights the current execution by $v$,
provided $v$ is a positive number; otherwise, evaluation is stuck.

When reduction of an initial configuration halts properly, there are
two relevant pieces of information in the final configuration: the
result value and the weight. Furthermore, we are only interested in
real-valued final results. We define \emph{evaluation} as taking an
extra parameter $A$, a measurable set of reals. Evaluation produces a
positive weight only if the result value is in the expected set.

\begin{align*}
  \eval{\s}{e}{K}{\t}{w}{A} &=
  \begin{cases}
    w' & \begin{array}[t]{l}
      \text{if $\nconfig{\s}{e}{K}{\t}{w} \to^* \nconfig{\s'}{r}{\haltcont}{\t'}{w'}$,} \\
      \text{where $r \in A$}
    \end{array} \\
    0  & \begin{array}[t]{l} \text{otherwise} \end{array}
  \end{cases}
\end{align*}


We will also need approximants to $\rmop{eval}$:

\begin{align*}
  \aeval{n}{\s}{e}{K}{\t}{w}{A} &=
  \begin{cases}
    w' & \begin{array}[t]{l}
      \text{if $\nconfig{\s}{e}{K}{\t}{w} \to^* \nconfig{\s'}{r}{\haltcont}{\t'}{w'}$} \\
      \text{in $n$ or fewer steps, where $r \in A$}
    \end{array} \\
    0  & \begin{array}[t]{l} \text{otherwise} \end{array}
  \end{cases}
\end{align*}

The following lemmas are clear from inspection of the small-step
semantics.

\begin{lemma}
  \label{lemma-opsem-trivial}
  If $ {\nconfig {\s}{e}{K}{\t}{w}} \to {\nconfig {\s'}{e'}{K'}{\t'}{w'}} $
  then
  \begin{enumerate}
  \item ${\aeval{p+1} {\s}{e}{K}{\t}{w}{A}} = {\aeval{p}{\s'}{e'}{K'}{\t'}{w'}{A}}$
  \item ${\eval {\s}{e}{K}{\t}{w}{A}} = {\eval {\s'}{e'}{K'}{\t'}{w'}{A}}$
  \end{enumerate}
\end{lemma}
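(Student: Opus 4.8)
The plan is to make explicit the one fact the authors leave implicit, namely that the small-step relation $\to$ is \emph{deterministic}. Inspecting Figure~\ref{fig:op-sem}, the shape of the expression $e$ — together with the shape of $K$ in the return rule and the sign of $r$ in the conditional and factor rules — uniquely determines which rule, if any, applies, and each rule produces a uniquely determined successor configuration (the functions $\pi_L$, $\pi_R$, $\pi_U$, and $\deltafun$ are single-valued). Hence every configuration has at most one successor. Since we are given $\nconfig{\s}{e}{K}{\t}{w} \to \nconfig{\s'}{e'}{K'}{\t'}{w'}$, this first step is forced: every reduction sequence out of $\nconfig{\s}{e}{K}{\t}{w}$ begins with exactly this step, and in particular $\nconfig{\s}{e}{K}{\t}{w}$ is not itself a halting configuration.

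For part (1) I would argue by matching reduction sequences. Write $C = \nconfig{\s}{e}{K}{\t}{w}$ and $C' = \nconfig{\s'}{e'}{K'}{\t'}{w'}$. By determinism, for any halting configuration $C_f = \nconfig{\s''}{r}{\haltcont}{\t''}{w''}$ and any $k \ge 1$, we have $C \to^* C_f$ in exactly $k$ steps iff $C' \to^* C_f$ in exactly $k-1$ steps, because the length-$k$ sequence out of $C$ is precisely the forced step $C \to C'$ followed by a length-$(k-1)$ sequence out of $C'$. Since $C$ is not halting, any such $k$ is at least $1$, so this correspondence is exhaustive. Consequently $C$ reaches a halting configuration with result $r \in A$ in $p+1$ or fewer steps iff $C'$ reaches the same configuration with $r \in A$ in $p$ or fewer steps, and in that event the reported weight $w''$ is identical. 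When no qualifying halting configuration is reached within the respective bounds, both approximants are $0$. In either case $\aeval{p+1}{\s}{e}{K}{\t}{w}{A} = \aeval{p}{\s'}{e'}{K'}{\t'}{w'}{A}$.

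Part (2) then follows either by repeating the argument with the step bound removed, or more economically by observing that $\eval{\s}{e}{K}{\t}{w}{A} = \lim_{p \to \infty} \aeval{p}{\s}{e}{K}{\t}{w}{A}$: if a halting configuration is ever reached, the approximants are $0$ below the halting length and constant at the eventual value above it, so the limit recovers $\rmop{eval}$; if none is reached, every approximant and the limit are $0$. Re-indexing and applying part (1) then gives $\eval{\s}{e}{K}{\t}{w}{A} = \lim_{p} \aeval{p+1}{\s}{e}{K}{\t}{w}{A} = \lim_{p} \aeval{p}{\s'}{e'}{K'}{\t'}{w'}{A} = \eval{\s'}{e'}{K'}{\t'}{w'}{A}$.

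I do not expect a real obstacle, and this matches the authors' remark that the result is clear by inspection. The only points needing a modicum of care are recording that the first step is forced (so the length-$k$ sequences from $C$ are in bijection with the length-$(k-1)$ sequences from $C'$), and checking the degenerate cases where no halting configuration is reached within the bound, so that the ``otherwise $0$'' branches of the two approximants genuinely agree.
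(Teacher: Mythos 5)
Your proof is correct, and it follows the same (implicit) route as the paper: the paper offers no argument beyond ``clear from inspection of the small-step semantics,'' and what you have written---determinism of $\to$, hence a forced first step $C \to C'$ and a bijection between the length-$(k)$ halting sequences from $C$ and the length-$(k-1)$ ones from $C'$, with the limit/sup argument for part (2)---is exactly the inspection being alluded to, spelled out. The two points you flag as needing care (the forced first step, and agreement of the ``otherwise $0$'' branches) are indeed the only substantive content, so nothing is missing.
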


\begin{lemma}[weights are Linear]
  \label{lemma:linear-weights}\
  \begin{enumerate}
  \item Weights can be factored out of reduction sequences. That is,
    \begin{displaymath}
      {\nconfig{\s}{e}{K}{\t}{1}} \to^* {\nconfig{\s'}{e'}{K'}{\t'}{w'}},
    \end{displaymath}
    if and only if for any $w > 0$
    \begin{displaymath}
      {\nconfig{\s}{e}{K}{\t}{w}} \to^*
      {\nconfig{\s'}{e'}{K'}{\t'}{w' \times w}}
    \end{displaymath}
  \item Weights can be factored out of evaluation. That is, for all $w > 0$,
    \begin{displaymath}
      \eval{\s}{e}{K}{\t}{w}{A} = w \times
      \eval{\s}{e}{K}{\t}{1}{A},
    \end{displaymath}
    and similarly for $\rmop{eval}^{(n)}$.
  \end{enumerate}
\end{lemma}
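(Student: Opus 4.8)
The plan is to isolate the single structural fact that makes both parts go through: in every machine transition the weight component is \emph{passive}. Inspecting Figure~\ref{fig:op-sem}, the \ttop{let}, return, application, \sampleexp, primitive-operation, and both conditional rules leave the weight unchanged (factor $1$), while the \factorexp{\const{r}} rule multiplies it by $r$. Crucially, in every rule the enabling condition and the resulting non-weight components $\s', e', K', \t'$ depend only on the non-weight part $(\s, e, K, \t)$ of the configuration, never on the incoming weight, and the weight is always scaled by a fixed nonnegative factor that is likewise independent of its current value.

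First I would prove a single-step scaling lemma: if $\nconfig{\s}{e}{K}{\t}{w_0} \to \nconfig{\s'}{e'}{K'}{\t'}{w_1}$ then for every $w > 0$ we have $\nconfig{\s}{e}{K}{\t}{w \cdot w_0} \to \nconfig{\s'}{e'}{K'}{\t'}{w \cdot w_1}$, and conversely. This is a direct case analysis over the eight rules; the only case that touches the weight nontrivially is \factorexp{\const{r}}, where $w_1 = r \cdot w_0$ and $r \cdot (w \cdot w_0) = w \cdot (r \cdot w_0) = w \cdot w_1$. Since the argument is reversible (scale by $1/w$), the single-step statement is an equivalence that also leaves the applied rule, and hence the step count, unchanged. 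Lifting to $\to^*$ by induction on the number of steps then yields: for every $w > 0$, $\nconfig{\s}{e}{K}{\t}{w_0} \to^* \nconfig{\s'}{e'}{K'}{\t'}{w_1}$ iff $\nconfig{\s}{e}{K}{\t}{w \cdot w_0} \to^* \nconfig{\s'}{e'}{K'}{\t'}{w \cdot w_1}$, with the same number of steps on both sides.

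Part 1 is then the instance $w_0 = 1$, $w_1 = w'$: the forward direction scales the weight-$1$ run up to an arbitrary $w$, and the reverse direction is the trivial specialization $w = 1$. For Part 2 I would split on whether the weight-$1$ machine halts with a result in $A$. If $\nconfig{\s}{e}{K}{\t}{1} \to^* \nconfig{\s'}{\const{r}}{\haltcont}{\t'}{w'}$ with $r \in A$, then $\eval{\s}{e}{K}{\t}{1}{A} = w'$, and the multi-step scaling lemma gives $\nconfig{\s}{e}{K}{\t}{w} \to^* \nconfig{\s'}{\const{r}}{\haltcont}{\t'}{w \cdot w'}$ with the same $r \in A$, so $\eval{\s}{e}{K}{\t}{w}{A} = w \cdot w'$. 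If no such halting run exists, then $\eval{\s}{e}{K}{\t}{1}{A} = 0$, and the reverse direction of the scaling lemma (scaling by $1/w$) shows that no halting run with result in $A$ exists from weight $w$ either, so $\eval{\s}{e}{K}{\t}{w}{A} = 0 = w \cdot 0$. Because scaling preserves the step count, the identical argument applies verbatim to the step-indexed approximant $\rmop{eval}^{(n)}$.

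The proof is essentially routine; the only places demanding attention are getting the single-step case analysis exactly right—in particular confirming that the guard $r > 0$ of the \factorexp{\const{r}} rule is independent of $w$, so that enabledness is genuinely weight-insensitive—and handling the ``otherwise $= 0$'' branch of $\rmop{eval}$ by contraposition through the reverse scaling. Determinism of the machine (at most one rule applies to any configuration) is what makes $\rmop{eval}$ well-defined and keeps the halting/non-halting dichotomy clean, but it is not otherwise needed for the scaling argument.
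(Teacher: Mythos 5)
Your proof is correct, and it coincides with the paper's intent: the paper gives no explicit proof, asserting only that the lemma is ``clear from inspection of the small-step semantics.'' Your single-step weight-passivity case analysis (with the observation that rule enabledness and step count are weight-insensitive), lifted by induction on the length of the reduction sequence and then specialized to evaluation and its approximants, is precisely that inspection spelled out in full.
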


\subsubsection{Big-Step Semantics}
\label{sec:big-step}

We regard the small-step semantics as normative, and we use it for our
primary soundness and completeness results.  However, for program
transformations it is useful to have a big-step semantics as well.  In
this section, we define a big-step semantics and characterize its
relation to the small-step semantics.

The big-step semantics is given in Figure~\ref{fig:big-step}.  It has
judgments of the form $\bigstep{\s}{e}{v}{w}$, where $\s$ is a value
of the entropy, $e$ is a closed expression, $v$ is a closed value, and
$w$ is a weight (a positive real number).  Its intention is that when
$e$ is supplied with entropy $\s$, it returns $v$ with weight $w$,
consuming some portion (possibly empty) of the given entropy $\s$.
The rules are those of a straightforward call-by-value \l-calculus,
modified to keep track of the entropy and weight.

\begin{figure}
  \centering
  \begin{mathpar}
    {\bigstep{\s}{\lambdaexp{x}{e}}{\lambdaexp{x}{e}}{1}}

    {\bigstep{\s}{\const{r}}{\const{r}}{1}} \\

    {\inferrule{\bigstep \s {e[v/x]} {v'} {w}} {\bigstep {\s} {\appexp
          {\lambdaexpp{x}{e}} {v}} {v'} {w}} }

    {\inferrule {{\bigstep {\pi_L(\s)} {e_1} {v_1} {w_1}}
        \\
        {\bigstep {\pi_R(\s)} {e_2[v_1/x]} {v_2} {w_2}}} {\bigstep
        {\s} {\letexp{x}{e_1}{e_2}} {v_2} {w_2 \times w_1}}}

    {\inferrule {\deltafun(\op^n, v_1, \dots, v_n) = v} {\bigstep {\s}
        {\opexp{\op^n}{v_1, \dots, v_n}} v 1}}
    \\
    {\inferrule {{\bigstep {\s} {e_1} {v} {w}} \\ r > 0} {\bigstep
        {\s} {\ifexp{\const r}{\e1}{\e2}} {v} {w}}}

    {\inferrule {{\bigstep {\s} {e_2} v w} \\ r \leq 0} {\bigstep {\s}
        {\ifexp {\const r} {e_1} {e_2}} {v} {w}}}
    \\
    {\bigstep {\s} {\sampleexp} {\const {\pi_U(\pi_L(\s))}} 1}

    {\inferrule {r > 0} {\bigstep {\s} {\factorexp{\const{r}}}
        {\const{r}} r}}
  \end{mathpar}
  \caption{Big-step operational semantics}
  \label{fig:big-step}
\end{figure}

The translation from big-step to small-step semantics is
straightforward:

\begin{theorem}[Big-Step to Small-Step]
  \label{thm:bigstep-smallstep} If $\bigstep{\s}{e}{v}{w}$, then for
  any $K$ and $\t$, there exists a $\s'$ such that
  \begin{displaymath}
    {\nconfig{\s}{e}{K}{\t}{1}} \to^* {\nconfig{\s'}{v}{K}{\t}{w}}
  \end{displaymath}
  
\end{theorem}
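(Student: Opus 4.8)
The plan is to proceed by induction on the derivation of $\bigstep{\s}{e}{v}{w}$, taking the statement in its fully general form---with $K$ and $\t$ universally quantified---as the induction hypothesis. This generality is essential: in the \ttop{let} case the hypothesis must be applied to an \emph{extended} continuation and an extended entropy stack, so fixing $K$ and $\t$ in advance would break the induction.

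Most cases are routine and follow a single pattern: fire the one small-step reduction that mirrors the corresponding big-step rule, then appeal to the induction hypothesis on the premise (or stop, if the rule is an axiom). For the two value axioms ($\lambdaexp{x}{e}$ and $\const{r}$) the machine is already in the required form, so zero steps suffice, with $\s' = \s$ and weight $1$ matching the big-step weight. For application I would fire the $\beta_v$ step $\nconfig{\s}{\appexp{\lambdaexpp{x}{e}}{v}}{K}{\t}{1} \to \nconfig{\s}{e[v/x]}{K}{\t}{1}$ and then invoke the hypothesis on $\bigstep{\s}{e[v/x]}{v'}{w}$. The primitive-operation, conditional, \sampleexp, and \factorexp\ cases are analogous single steps; note that the \sampleexp\ rule advances the current entropy to $\pi_R(\s)$ (so $\s'$ genuinely differs from $\s$, which is exactly why the theorem quantifies $\s'$ existentially), and that the \factorexp\ rule carries the weight from $1$ to $r$, again matching the big-step side.

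The one case that requires real work is \ttop{let}. From the premises $\bigstep{\pi_L(\s)}{e_1}{v_1}{w_1}$ and $\bigstep{\pi_R(\s)}{e_2[v_1/x]}{v_2}{w_2}$ I would assemble the reduction in four stages. First the \ttop{let} rule splits the entropy, giving
\[
  \nconfig{\s}{\letexp{x}{e_1}{e_2}}{K}{\t}{1}
  \;\to\;
  \nconfig{\pi_L(\s)}{e_1}{\letcont{x}{e_2}{K}}{\pi_R(\s) \cons \t}{1}.
\]
Second, the induction hypothesis on the first premise---instantiated with the extended continuation $\letcont{x}{e_2}{K}$ and stack $\pi_R(\s) \cons \t$---drives $e_1$ to $v_1$ with weight $w_1$, leaving both the continuation and the stack untouched. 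Third, the return rule discards the (now dead) current entropy, pops $\pi_R(\s)$ off the stack, and substitutes, reaching $\nconfig{\pi_R(\s)}{e_2[v_1/x]}{K}{\t}{w_1}$. Fourth, the induction hypothesis on the second premise handles $e_2[v_1/x]$.

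The main obstacle is the weight bookkeeping in this last stage. The hypothesis on the second premise yields a reduction starting from weight $1$, whereas after the return step the running weight is already $w_1$. I would close the gap with the weight-factoring lemma (Lemma~\ref{lemma:linear-weights}(1)): rescaling the weight-$1$ reduction of $e_2[v_1/x]$ by the factor $w_1$ gives $\nconfig{\pi_R(\s)}{e_2[v_1/x]}{K}{\t}{w_1} \to^* \nconfig{\s'}{v_2}{K}{\t}{w_2 \times w_1}$, and $w_2 \times w_1$ is exactly the big-step weight for \ttop{let}. Concatenating the four stages and taking $\s'$ to be the final entropy of the second sub-reduction finishes the case. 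The only remaining thing to verify is that the machine's entropy discipline---splitting into $\pi_L(\s)/\pi_R(\s)$ and pushing $\pi_R(\s)$ onto the stack, then restoring it on return---lines up precisely with the $\pi_L(\s)$/$\pi_R(\s)$ split used by the big-step \ttop{let} rule, which it does by construction.
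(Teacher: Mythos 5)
Your proposal is correct and follows essentially the same route as the paper: induction on the big-step derivation with $K$ and $\t$ kept general, the same routine single-step cases, and the same four-stage decomposition of the \ttop{let} case (step in, IH on $e_1$ under the extended continuation $\letcont{x}{e_2}{K}$ and stack $\pi_R(\s)\cons\t$, return step, then IH on $e_2[v_1/x]$ rescaled by $w_1$ via Lemma~\ref{lemma:linear-weights}). The paper's proof handles the weight bookkeeping with exactly the same appeal to linearity of weights, so there is nothing to add.
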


\begin{proof}
  By induction on the definition of $\Downarrow$.  We will
  show selected cases.

  \begin{Case}{${\bigstep{\s}{\lambdaexp{x}{e}}{\lambdaexp{x}{e}}{1}}$} The required small-step reduction
    is empty.  Similarly for $\const r$.
  \end{Case}

  \begin{Case}{${\bigstep {\s} {\sampleexp} {\const
          {\pi_U(\pi_L(\s))}} 1}$} The required reduction is the
    single step reduction
    \begin{displaymath}
       {\nconfig {\s}{\sampleexp}{K} {\t} {1}} \to
       {\nconfig {\pi_R(\s)} {\const{\pi_U(\pi_L(\s))}} {K} {\t} {1}}
    \end{displaymath} 
    Similarly for ${\factorexp{\const{r}}}$ and the ${\op^n}$ rules.
  \end{Case}

  \begin{Case}{\appexp{\lambdaexpp{x}{\e{}}}{v}} The rule is 
    \begin{displaymath}
      {\inferrule{\bigstep \s {e[v/x]} {v'} {w}}
                 {\bigstep {\s} {\appexp {\lambdaexpp{x}{e}} {v}} {v'} {w}} }
    \end{displaymath}
    By inversion, we have ${\bigstep \s {e[v/x]} {v'} {w}}$.  So the
    reduction sequence is:
    \begin{displaymath}
      \begin{array}{ll@{\quad}l}
        {\nconfig {\s} {\appexp{\lambdaexpp{x}{\e{}}}{v}}{K}{\t}{1}}
        \\
        \to^{\hphantom{*}} {\nconfig {\s} {\e{}[v/x]}{K} {\t}{1}} \\
        \to^* {\nconfig {\s'} {v'} {K} {\t}{w}} &\hbox{by the induction hypothesis}
      \end{array}
    \end{displaymath}
    Similarly for the \texttt{if} rules.
  \end{Case}

  \begin{Case}{\letexp{x}{\e1}{\e2}}  The rule is 
    \begin{displaymath}
      {\inferrule
        {{\bigstep {\pi_L(\s)} {e_1} {v_1} {w_1}}
          \\
         {\bigstep {\pi_R(\s)} {e_2[v_1/x]} {v_2} {w_2}}}
        {\bigstep {\s} {\letexp{x}{e_1}{e_2}} {v_2} {w_2 \times w_1}}}
    \end{displaymath}

    By inversion, we have ${\bigstep {\pi_L(\s)} {e_1} {v_1} {w_1}}$
    and ${\bigstep {\pi_R(\s)} {e_2[v_1/x]} {v_2} {w_2}}$.  So the
    required reduction sequence is:
    \begin{displaymath}
      \begin{array}{l@{\quad}l}
        \nconfig{\s}{\letexp{x}{\e1}{\e2}}{K}{\t}{1} \\
        \to^{\hphantom{*}} {\nconfig{\pi_L(\s)}{\e1}
          {\letcont{x}{\e2}{K}}
          {\pi_R(\s) \cons \t} {w}} \\
        \to^*  
           {\nconfig{\s'}{v_1}
        {\letcont{x}{\e2}{K}}
        {\pi_R(\s) \cons \t} {w_1}} \\
        \to^{\hphantom{*}}
        {\nconfig {\pi_R(\s)} {\e2[v_1/x]} {K} {\t} {w_1}} \\
        \to^*
        {\nconfig {\s''} {v_2} {K} {\t} {w_2 \times w_1}}
      \end{array}
    \end{displaymath}
    where the third line follows from the induction hypothesis, and
    the last line follows from the other induction hypothesis and the
    linearity of weights (Lemma~\ref{lemma:linear-weights}).    
  \end{Case}
\end{proof}

Note that the weak quantifier (``there exists a $\s'$'') corresponds
to the fact that the entropy is dead in the return rule.

In order to prove a converse, we need some additional results about
the small-step semantics. 

\begin{definition}
  \label{defn:cont-extension}
  Define $\succeq$ to be the smallest relation defined by the
  following rules:
  \begin{displaymath}
    \begin{array}{c}
      {\inferrule [Rule 1:] {} {(K, \t) \succeq (K, \t)}} \qquad
      {\inferrule [Rule 2:]
          {(K', \t') \succeq (K, \t)}
          {(\letcont{x}{e}{K'}, \s \cons \t') \succeq (K, \t)}}
    \end{array}
  \end{displaymath}
\end{definition}

\begin{lemma}
  \label{lemma:continuations-are-continuous}
  Let \begin{displaymath}
    {\nconfig {\s_1}{e_1}{K_1}{\t_1}{w_1}} \to {\nconfig
      {\s_2}{e_2}{K_2}{\t_2}{w_2}} \to \dots
  \end{displaymath}
  be a reduction sequence in the
  operational semantics.  Then for each $i$ in the sequence either
  \begin{enumerate}
  \item[a.] there exists a smallest $j \le i$ such that $e_j$ is a value and
    $K_j = K_1$ and $\t_j = \t_1$, or
  \item[b.] $(K_i,\t_i) \succeq (K_1,\t_1)$
  \end{enumerate}
\end{lemma}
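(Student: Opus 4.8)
The plan is to induct on the position $i$ in the reduction sequence, proving for each $i$ that at least one of the two alternatives holds. The relation $\succeq$ of Definition~\ref{defn:cont-extension} records exactly how the continuation/entropy-stack pair grows: each \ttop{let}-step pushes one frame (Rule~2), while reflexivity (Rule~1) marks the pair that is being ``returned into.'' So the invariant I want to maintain is that, as long as evaluation of $e_1$ under $K_1$ has not yet finished (alternative~(a) has not fired), the current pair $(K_i,\t_i)$ is a frame-extension of $(K_1,\t_1)$ (alternative~(b)). For the base case $i=1$ I would simply observe that $(K_1,\t_1)\succeq(K_1,\t_1)$ by Rule~1, so alternative~(b) holds.

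For the inductive step I would split on whether alternative~(a) already holds at $i$. If it does, with smallest witness $j\le i$, then that same $j$ is still the smallest witness $\le i+1$, since the newly available index $i+1$ is larger than $j$ and hence cannot be a smaller witness; thus (a) persists unchanged. This is the ``absorbing'' behavior of (a): once it fires it stays fired with a fixed witness. Otherwise, by the induction hypothesis alternative~(b) holds at $i$, i.e.\ $(K_i,\t_i)\succeq(K_1,\t_1)$, and I would case on the rule of Figure~\ref{fig:op-sem} used for the step ${\nconfig{\s_i}{e_i}{K_i}{\t_i}{w_i}}\to{\nconfig{\s_{i+1}}{e_{i+1}}{K_{i+1}}{\t_{i+1}}{w_{i+1}}}$. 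Every rule except the \ttop{let}-rule and the return rule leaves the continuation and stack untouched, so $(K_{i+1},\t_{i+1})=(K_i,\t_i)\succeq(K_1,\t_1)$ and (b) is preserved; the \ttop{let}-rule sets $(K_{i+1},\t_{i+1})=(\letcont{x}{e'}{K_i},\pi_R(\s_i)\cons\t_i)$ for the \ttop{let}-body $e'$, so applying Rule~2 to the hypothesis again yields (b).

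The hard part is the return rule, where $e_i$ is a value, $K_i=\letcont{x}{e_2}{K}$, $\t_i=\s'\cons\t$, and the step shrinks the pair to $(K_{i+1},\t_{i+1})=(K,\t)$. Here I would use an inversion principle for $\succeq$: since the first component $K_i$ of the pair $(K_i,\t_i)\succeq(K_1,\t_1)$ is a \ttop{let}-continuation, the derivation of that membership must end in either Rule~1 or Rule~2. A Rule~1 ending would force $(K_i,\t_i)=(K_1,\t_1)$; but then $e_i$ is a value with $K_i=K_1$ and $\t_i=\t_1$, which is exactly alternative~(a) at index $i$, contradicting the case assumption that (a) fails at $i$. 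Hence the derivation must end in Rule~2, whose premise is precisely $(K,\t)\succeq(K_1,\t_1)$, i.e.\ $(K_{i+1},\t_{i+1})\succeq(K_1,\t_1)$, so (b) holds at $i+1$.

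This inversion step, combined with the observation that the failure of (a) at $i$ is exactly what excludes the reflexive (Rule~1) derivation, is the crux of the argument; the remaining rules are a routine traversal. The subtlety worth emphasizing is that the lemma's disjunction is non-exclusive and the correct case split in the induction is on whether (a) \emph{already} holds at $i$ (rather than on which of (a)/(b) the hypothesis provides), since at the moment (a) fires we may well have both alternatives true simultaneously, and only this framing lets the return case go through cleanly.
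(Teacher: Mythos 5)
Your proof is correct and takes essentially the same approach as the paper's: induction on $i$, with (a) absorbing once it holds, routine preservation of (b) under non-return steps, and inversion on the $\succeq$ derivation at return steps. The only cosmetic difference is that at a return step you rule out the Rule~1 case by contradiction with the assumption that (a) fails, whereas the paper simply notes that a Rule~1 ending means (a) already holds at $i$ and hence at $i+1$; both resolutions are sound.
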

\proofinappendix

The next result is an interpolation theorem, which imposes structure
on reduction sequences: any terminating computation starting with an
expression $e$ begins by evaluating $e$ to a value $v$ and then
sending that value to the continuation $K$.

\begin{theorem}[Interpolation Theorem]
  \label{thm:interpolation}
  If
  \begin{displaymath}
    {\nconfig{\s}{e}{K}{\t}{w}} \to^* {\nconfig{\s''}{v''}{\haltcont}{\t''}{w''}}
  \end{displaymath}
  then there exists a smallest $n$ such that for some quantities $\s'$, $v$, and $w'$,
  \begin{displaymath}
    {\nconfig{\s}{e}{K}{\t}{w}} \to^n {\nconfig{\s'}{v}{K}{\t}{w'
        \times w}}\to^*
    {\nconfig{\s''}{v''}{\haltcont}{\t''}{w''}}
  \end{displaymath}
\end{theorem}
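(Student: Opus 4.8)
The plan is to read the interpolation point directly off Lemma~\ref{lemma:continuations-are-continuous}, applied to the given terminating reduction sequence at its \emph{final} index. Write the sequence as $\nconfig{\s}{e}{K}{\t}{w} = C_1 \to C_2 \to \dots \to C_m = \nconfig{\s''}{v''}{\haltcont}{\t''}{w''}$, with $C_i = \nconfig{\s_i}{e_i}{K_i}{\t_i}{w_i}$. The configuration we must interpolate is exactly the first point at which $e$ has been reduced to a value while the continuation and the saved-entropy stack have returned to their initial values $K$ and $\t$; this is precisely alternative~(a) of that lemma.

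First I would instantiate Lemma~\ref{lemma:continuations-are-continuous} at $i = m$, which yields one of its two alternatives. In alternative~(a) there is a smallest $j \le m$ with $e_j$ a value, $K_j = K$, and $\t_j = \t$, which is exactly what we want. In alternative~(b) we have $(K_m, \t_m) \succeq (K, \t)$, i.e.\ $(\haltcont, \t'') \succeq (K, \t)$. Here I would argue that (b) collapses into (a): inspecting Definition~\ref{defn:cont-extension}, the left component of any conclusion of Rule~2 has the form $\letcont{x}{e'}{K'}$, so the only rule that can derive a judgment whose left component is $\haltcont$ is Rule~1. Hence $(\haltcont, \t'') \succeq (K, \t)$ forces $K = \haltcont$ and $\t = \t''$, and then the terminal configuration $C_m$ itself witnesses (a) with $j = m$, since $e_m = v''$ is a value, $K_m = \haltcont = K$, and $\t_m = \t'' = \t$. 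So in every case there is a smallest index $j$ with $e_j$ a value, $K_j = K$, and $\t_j = \t$.

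Taking $n = j - 1$, the configuration $C_j = \nconfig{\s_j}{v_j}{K}{\t}{w_j}$ is reached from $C_1$ in exactly $n$ steps, and the tail of the original sequence supplies $C_j \to^* C_m$. Setting $\s' = \s_j$ and $v = v_j$ gives the required value. It remains to put the weight in the form $w' \times w$: since $C_1 \to^n C_j$ starting from weight $w$, Lemma~\ref{lemma:linear-weights}(1) lets me factor $w$ out of this reduction, so $w_j = w' \times w$ where $w'$ is the weight produced by the same reduction run from initial weight $1$. This yields the displayed factorization, and minimality of $n$ follows from the minimality of $j$ furnished by the lemma.

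The main obstacle is the case analysis in the second step---showing that alternative~(b) cannot produce a genuinely different situation. All the work lies in the observation that $\haltcont$ is not a $\ttop{let}$-continuation and so can appear on the left of $\succeq$ only through the reflexivity rule, which pins down $K$ and $\t$ and reduces (b) to (a). Everything else, namely identifying $C_j$ as the interpolant and factoring the weight, is routine given the two cited lemmas.
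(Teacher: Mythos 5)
Your proposal is correct and follows essentially the same route as the paper's proof: both apply Lemma~\ref{lemma:continuations-are-continuous} at the terminal configuration, exploit the fact that $\haltcont$ can appear on the left of $\succeq$ only via the reflexivity rule (the paper phrases this as $(\haltcont,\t') \not\succeq (K,\t)$ when $K \neq \haltcont$, you phrase it as alternative~(b) collapsing into alternative~(a)), and invoke linearity/multiplicativity of weights (Lemma~\ref{lemma:linear-weights}) for the factorization $w' \times w$. The only difference is organizational—the paper splits on $K = \haltcont$ up front while you split on the lemma's two alternatives—and your version spells out details the paper leaves implicit.
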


\begin{proof}
  If $K = \haltcont$, then the result is trivial.  Otherwise, apply the
  invariant of the preceding lemma, observing that $(\haltcont, \t')
  \not\succeq (K, \t)$ and that weights are multiplicative.
\end{proof}


Note that both Lemma~\ref{lemma:continuations-are-continuous} and
Theorem~\ref{thm:interpolation} would be false if our language
contained jumping control structures like \ttop{call/cc}.

Finally, we show that in the interpolation theorem, $\s'$, $v$,
and $w'$ are independent of $K$.


\begin{theorem}[Genericity Theorem]
  \label{thm-generic-conts}
  Let $w_1 > 0$ and let $n$ be the smallest integer such that for some quantities $\s'$, $v$, and $w'$,
    $${\nconfig{\s} {e}{K_1}{\t_1}{w_1}} \to^n
        {\nconfig{\s'}{v}{K_1}{\t_1}{w' \times w_1}}$$
then for any $K_2$, $\t_2$, and $w_2$,
$${\nconfig{\s} {e}{K_2}{\t_2}{w_2}} \to^n
        {\nconfig{\s'}{v}{K_2}{\t_2}{w' \times w_2}}$$
\end{theorem}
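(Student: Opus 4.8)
The plan is to exploit the fact that the first $n$ steps of the reduction merely build continuations and entropy stacks \emph{on top of} the base pair $(K_1, \t_1)$, never inspecting or discarding it; the whole initial segment can therefore be replayed unchanged after uniformly substituting $(K_2, \t_2, w_2)$ for $(K_1, \t_1, w_1)$.

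First I would establish the structural invariant. Writing $\nconfig{\s_i}{e_i}{K_i}{\t_i}{w_i}$ for the $i$-th configuration of the given reduction, minimality of $n$ means that no configuration with index $i < n$ is a value paired with exactly $(K_1, \t_1)$ --- that first happens at step $n$. Hence case~(a) of Lemma~\ref{lemma:continuations-are-continuous} is excluded for every $i < n$, and case~(b) must hold instead, giving $(K_i, \t_i) \succeq (K_1, \t_1)$ for all $0 \le i \le n$ (with equality at $i = n$). By Definition~\ref{defn:cont-extension} this says precisely that each $K_i$ is $K_1$ with a stack of \ttop{let}-frames pushed above it and each $\t_i$ is $\t_1$ with the matching entropies $\cons$-ed on top.

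Next I would define a \emph{relocation} $\Phi$ on pairs $(K, \t) \succeq (K_1, \t_1)$ by recursion on the $\succeq$-derivation: Rule~1 sends the base $(K_1, \t_1)$ to $(K_2, \t_2)$, and Rule~2 sends $(\letcont{x}{e}{K'}, \s \cons \t')$ to $(\letcont{x}{e}{K''}, \s \cons \t'')$ where $(K'', \t'') = \Phi(K', \t')$, leaving the pushed frames and their entropies fixed. The heart of the argument is a single-step simulation: for each $i < n$, the step $\nconfig{\s_i}{e_i}{K_i}{\t_i}{w} \to \nconfig{\s_{i+1}}{e_{i+1}}{K_{i+1}}{\t_{i+1}}{w''}$ is matched, \emph{under the same rule}, by $\nconfig{\s_i}{e_i}{\Phi(K_i)}{\Phi(\t_i)}{w} \to \nconfig{\s_{i+1}}{e_{i+1}}{\Phi(K_{i+1})}{\Phi(\t_{i+1})}{w''}$. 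This is a case analysis on the reduction rules: the \ttop{let} rule only pushes a frame, which commutes with $\Phi$; the rules for application, \sampleexp, $\op^n$, \ttop{if}, and \ttop{factor} leave the continuation and stack untouched and so are trivially preserved; and the return rule pops a \ttop{let}-frame that, by the invariant, lies strictly above $(K_1, \t_1)$ and is therefore carried along by $\Phi$ rather than affected by the base swap.

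Composing the $n$ simulated steps and using $\Phi(K_1, \t_1) = (K_2, \t_2)$ yields $\nconfig{\s}{e}{K_2}{\t_2}{w_2} \to^n \nconfig{\s'}{v}{K_2}{\t_2}{w' \times w_2}$; the final weight is correct because scoring is the only rule that changes the weight and it multiplies by the same positive constants in both runs, so by the linearity of weights (Lemma~\ref{lemma:linear-weights}) the accumulated factor is $w'$ in each case. I expect the return-rule case to be the crux: I must guarantee that \emph{no} return step in the first $n$ steps pops below $(K_1, \t_1)$. This is exactly where minimality is indispensable --- a return that restored $(K_1, \t_1)$ under a value would exhibit case~(a) of Lemma~\ref{lemma:continuations-are-continuous} at an index below $n$, contradicting minimality --- and it is also where the $\cons$-encoding of the stack needs care, since $\Phi$ must act on the genuine (surjectively paired) entropy values, with the frames above the base identified by the reduction itself rather than by an ambiguous decomposition of a single entropy point.
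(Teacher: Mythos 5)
Your proposal is correct and follows essentially the same route as the paper: your relocation $\Phi$ is exactly (the graph of) the relation $R$ that the paper constructs---the base pair $(K_1,\t_1)$ sent to $(K_2,\t_2)$, pushed \ttop{let}-frames and their entropies carried along unchanged, weights scaled by $w_2/w_1$---and your single-step simulation is the paper's claim that $R$ is a bisimulation over the first $n$ steps. Your explicit use of Lemma~\ref{lemma:continuations-are-continuous} together with minimality of $n$ to rule out a return step popping the base pair just spells out what the paper leaves implicit in the phrase ``over the first $n$ steps of the given reduction sequence.''
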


\begin{proof}
  Let $R$ be the smallest relation defined by the rules
  \begin{displaymath}
    {\inferrule {} {((K_1,\t_1), (K_2,\t_2)) \in R}}
    \qquad
    {\inferrule
      {((K,\t), (K', \t')) \in R}
      {((\letcont{x}{e}{K}, \s \cons \t),
        (\letcont{x}{e}{K'}, \s \cons \t')) \in R}}
  \end{displaymath}
  Extend $R$ to be a relation on configurations by requiring the
  weights to be related by a factor of $w_2 / w_1$ and the remaining
  components of the configurations to be equal.  It is easy to see, by
  inspection of the small-step rules, that $R$ is a bisimulation over
  the first $n$ steps of the given reduction sequence.
\end{proof}

We are now ready to state the converse of Theorem~\ref{thm:bigstep-smallstep}.

\begin{definition}
  \label{defn:halts}
  We say that a configuration $\nconfig{\s}{e}{K}{\t}{w}$ \emph{halts}
  iff
  \begin{displaymath}
    \nconfig{\s}{e}{K}{\t}{w} \to^* \nconfig{\s'}{v}{\haltcont}{\t'}{w'}
  \end{displaymath}
  for some $\s'$, $v$, $\t'$ and $w'$.  
\end{definition}

\begin{theorem}[Small-Step to Big-Step]
  \label{thm:small-step-big-step}
  If
  \begin{displaymath}
    {\nconfig{\s}{e}{K}{\t}{w}} \to^*
    {\nconfig{\s''}{v''}{\haltcont}{\t''}{w''}},
  \end{displaymath} then there exist $\s'$, $v'$ and $w'$ such
  that
  \begin{displaymath}
    \bigstep{\s}{e}{v'}{w'}
  \end{displaymath}
  and
  \begin{displaymath}
    {\nconfig{\s''}{v'}{K}{\t}{w' \times w}} \to^*
    {\nconfig{\s'}{v'}{\haltcont}{\t''}{w''}} 
  \end{displaymath}
\end{theorem}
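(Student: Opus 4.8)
The plan is to impose structure on the given reduction with the Interpolation Theorem (Theorem~\ref{thm:interpolation}) and then read off a big-step derivation by induction, invoking the Genericity Theorem (Theorem~\ref{thm-generic-conts}) in the single case where the continuation grows. First I would apply interpolation to the hypothesis, obtaining a smallest $n$ together with $\s'$, $v'$, and $w'$ such that
\[
  \nconfig{\s}{e}{K}{\t}{w} \to^n \nconfig{\s'}{v'}{K}{\t}{w' \times w} \to^* \nconfig{\s''}{v''}{\haltcont}{\t''}{w''}.
\]
The tail of this sequence is exactly the second conclusion of the theorem, so it remains only to derive $\bigstep{\s}{e}{v'}{w'}$. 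I would prove this by strong induction on $n$, the length of the first, value-producing phase, quantifying over all configurations.

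The base cases and non-sequencing constructs are handled directly, using Lemma~\ref{lemma-opsem-trivial} to strip the leading step. If $e$ is already a value then $n = 0$ and a value axiom applies with $w' = 1$. For $\sampleexp$, primitive applications, and $\factorexp{\const{r}}$, the first phase is a single step to a value and the matching big-step axiom applies (with weight $1$, $1$, and $r$ respectively). For $\appexp{\lambdaexpp{x}{e_0}}{v_0}$ and the two $\ttop{if}$ forms the first step is a contraction that leaves $K$ and $\t$ fixed, so the induction hypothesis applies to the contractum (whose first phase has length $n-1$) and the matching rule assembles the conclusion.

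The crux is the $\ttop{let}$ case $e = \letexp{x}{e_1}{e_2}$, where the first step pushes a frame to give
\[
  \nconfig{\pi_L(\s)}{e_1}{\letcont{x}{e_2}{K}}{\pi_R(\s) \cons \t}{w}.
\]
Now $e_1$ evaluates under the enlarged continuation $\letcont{x}{e_2}{K}$, so the induction hypothesis cannot be applied to it directly. I would first apply interpolation to this sub-reduction (which again ends at $\haltcont$) to obtain the minimal phase $n_1 < n$ carrying $e_1$ to a value $v_1$ with factor $w_1$, and then invoke the Genericity Theorem to detach it: since that phase is independent of the ambient continuation and stack, replacing them by $\haltcont$ and an arbitrary stack yields a genuinely halt-terminating reduction of $e_1$ of first-phase length $n_1$, so the induction hypothesis gives $\bigstep{\pi_L(\s)}{e_1}{v_1}{w_1}$. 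After the return step the machine reaches $\nconfig{\pi_R(\s)}{e_2[v_1/x]}{K}{\t}{w_1 \times w}$, a prefix of the original halting reduction under the \emph{unchanged} continuation $K$, whose first phase has length $n_2 < n$; the induction hypothesis then yields $\bigstep{\pi_R(\s)}{e_2[v_1/x]}{v_2}{w_2}$ with $v_2 = v'$. The big-step $\ttop{let}$ rule assembles $\bigstep{\s}{\letexp{x}{e_1}{e_2}}{v'}{w_2 \times w_1}$, and Lemma~\ref{lemma:linear-weights} confirms $w_2 \times w_1 = w'$.

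I expect the $\ttop{let}$ case to be the main obstacle, and for a structural rather than a computational reason: the subexpressions run under continuations other than $\haltcont$, so the induction cannot recurse on them without first detaching them, which is exactly what genericity provides, and the induction must be measured by the interpolation length $n$ (not total reduction length) so that both the $e_1$- and $e_2[v_1/x]$-phases count as strictly smaller. Once the measure is fixed and genericity is in hand, the surrounding weight bookkeeping is routine via linearity.
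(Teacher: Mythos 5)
Your overall skeleton is exactly the paper's: apply the Interpolation Theorem to split off the minimal value-producing phase (which immediately yields the second conclusion of the theorem), then prove $\bigstep{\s}{e}{v'}{w'}$ by course-of-values induction on the phase length $n$, with cases on $e$; your value cases, one-step cases (\sampleexp, operations, \ttop{factor}), and contraction cases (application, \ttop{if}) are handled just as in the paper.

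The one place you diverge is the \ttop{let} case, and there the obstacle you name is not real. Since your induction statement quantifies over \emph{all} configurations and is measured by the interpolation length, the IH applies directly to $\nconfig{\pi_L(\s)}{e_1}{\letcont{x}{e_2}{K}}{\pi_R(\s)\cons\t}{w}$: this configuration halts (it lies on the original halting reduction), and its phase length $n_1$ satisfies $n = n_1 + n_2 + 2 > n_1$. Nothing in the IH privileges $\haltcont$ over the enlarged continuation $\letcont{x}{e_2}{K}$; a continuation is just a continuation. This is exactly what the paper does, and it is also exactly what \emph{you} do for the second phase, where you apply the IH to $\nconfig{\pi_R(\s)}{e_2[v_1/x]}{K}{\t}{w_1 \times w}$ under the (equally non-$\haltcont$) continuation $K$. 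So your treatment of the two phases is inconsistent: if the IH genuinely could not recurse under a non-trivial continuation, then your $e_2[v_1/x]$ step would be a gap and would need the same interpolation-plus-genericity detachment you perform for $e_1$; since it can, your genericity detour is sound but redundant. (It does go through: minimality of the detached phase transfers back and forth by genericity, and determinism pins down $v_1$ and $w_1$.) Genericity is genuinely needed elsewhere in the paper---for instance in the small-step proof of Theorem~\ref{thm-commut} and in Lemma~\ref{lemma:ev-ew-split}---but not in this proof.
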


\begin{proof}
  Given
  \begin{displaymath}
    {\nconfig{\s}{e}{K}{\t}{w}} \to^*
    {\nconfig{\s''}{v''}{\haltcont}{\t''}{w''}} 
  \end{displaymath}
  apply the Interpolation
  Theorem (Theorem~\ref{thm:interpolation}) to get $n$, $\s'$, $v$,
  and $w'$ 
  such that
  \begin{displaymath}
    {\nconfig{\s}{e}{K}{\t}{w}} \to^n {\nconfig{\s'}{v}{K}{\t}{w'
        \times w}}\to^*
    {\nconfig{\s''}{v''}{\haltcont}{\t''}{w''}}
  \end{displaymath}

  This gives us the second part of the conclusion.  To get the first
  part, we proceed by (course-of-values) induction on $n$, and then by
  cases on $e$.

  \begin{Case}{$\lambdaexp{x}{e}$}  For configurations of the form
    ${\nconfig{\s}{\lambdaexp{x}{e}}{K}{\t}{w}}$, the expression is
    already a value, so $n$  is 0.  So set $v = \lambdaexp{x}{e}$
    and $w' = 1$, and observe that
    ${\bigstep{\s}{\lambdaexp{x}{e}}{\lambdaexp{x}{e}}1}$, as desired.
    The case of constants $\const r$ is similar.    
  \end{Case}

  \begin{Case}{\sampleexp}
    We know
    \begin{displaymath}
      {\nconfig {\s}{\sampleexp}{K} {\t} {w}} \to
      {\nconfig {\pi_R(\s)} {\const{\pi_U(\pi_L(\s))}} {K} {\t} {w}} \\
    \end{displaymath}
    so the value length is 1, and we also have
    ${\bigstep {\s} {\sampleexp} {\const {\pi_U(\pi_L(\s))}} 1}$, as desired.
    The cases of $\ttop{factor}$ and of ${\op^n}$ are similar.
  \end{Case}

  \begin{Case}{$\appexp{\lambdaexpp{x}{e}}{v}$}  Assume that the
    value length of
    ${\nconfig {\s} {\appexp{\lambdaexpp{x}{\e{}}}{v}}{K}{\t}{w}}$ is $n+1$.
    So we have
    \begin{displaymath}
      {\nconfig {\s} {\appexp{\lambdaexpp{x}{\e{}}}{v}}{K}{\t}{w}}
      \to
      {\nconfig {\s} {\e{}[v/x]}{K} {\t}{w}}
      \to^n
      {\nconfig {\s'} {v'} {K} {\t}{w'\times w}}
    \end{displaymath}

    By induction, we have ${\bigstep{\s}{\e{}[v/x]}{v'}{w'}}$.  Hence,
    by the big-step rule for \l-expressions, we have
    ${\bigstep {\s}
      {\appexp {\lambdaexpp{x}{e}} {v}}
      {v'}
      {w'}}$, as desired.  The cases for conditionals are similar.    
  \end{Case}

  \begin{Case}{$\letexp{x}{\e1}{\e2}$}  Assume the value length of
    ${\nconfig{\s}{\letexp{x}{\e1}{\e2}}{K}{\t}{w}}$ is $n$.
    Then the first $n$ steps of its reduction sequence must be
    \begin{displaymath}
      \begin{array}{ll@{\quad}l}
        {\nconfig{\s}{\letexp{x}{\e1}{\e2}}{K}{\t}{w}} \\
        \tophantom
        {\nconfig
          {\pi_L(\s)}{\e1}
                 {\letcont{x}{\e2}{K}}
        {\pi_R(\s) \cons \t}
        {w}} \\
        \to^{m}
        {\nconfig {\s'} {v_1} {\letcont{x}{\e2}{K}} {\pi_R(\s) \cons
        \t} {w_1 \times w}} \\
        \tophantom
        {\nconfig {\pi_R(\s) \cons \t} {\e2[v_1/x]}
        {K} {\t} {w_1 \times w}} \\
        \to^{p}
        {\nconfig {\s''} {v} {K} {\t} {w_2 \times w_1 \times w}}        
      \end{array}
    \end{displaymath}
    where $m$ and $p$ are the value lengths of the
    configurations on the second and fourth lines, respectively.
    So $n = m + p + 2$, and we can apply the
    induction hypothesis to the two relevant configurations.  Applying
    the induction hypothesis twice, we get
    \begin{mathpar}
      {\bigstep {\pi_L(\s)} {e_1} {v_1} {w_1}}
      \and
      \text{and}
      \and
      {\bigstep{\pi_R(\s)} {e_2[v_1/x]} {v_2} {w_2}}\ .
    \end{mathpar}
    Hence, by the big-step rule for \texttt{let}, we conclude that
    \begin{displaymath}
      {\bigstep {\s} {\letexp{x}{e_1}{e_2}} v {w_2 \times w_1}}
    \end{displaymath}
    as desired.  
  \end{Case}
\end{proof}


\subsection{From Evaluations to Measures}
\label{sec:measures}

Up to now, we have considered only single runs of the machine, using
particular entropy values.  To obtain the overall meaning of the
program we need to integrate over all possible values of the entropies
$\s$ and $\t$:

\begin{definition}
  The \emph{measure} of $e$ and $K$ is the measure on the reals
  defined by
  \begin{displaymath}
    \Meas{e}{K}{A} = {\dint {\eval{\s}{e}{K}{\t}{1}{A}} {\s} {\t}}
  \end{displaymath}
  for each measurable set $A$ of the reals.
\end{definition}

This measure is similar to both Culpepper and Cobb's $\mu_e(A)$ and
Borgstr\"{o}m et al.'s $\sem{e}_{\mathrm{S}}(A)$, but whereas they
define measures on arbitrary syntactic values, our $\Meas{e}{K}{-}$ is
a measure on the reals.
Furthermore, whereas their measures represent the meanings of intermediate
expressions, our measure---due to the inclusion of the continuation
argument $K$---represents the meanings of whole programs.

The simplicity of the definition above relies on the mathematical trick
of encoding entropy stacks as entropy values; if we represented stacks
directly the number of integrals would depend on the stack depth.
Note that even for the base continuation ($K = \haltcont$) we still
integrate with respect to both $\s$ and $\t$.
Since $\Ent \not\cong \Ent^0$, there is no encoding for an empty stack
as an entropy value; we cannot just choose a single arbitrary
$\t_{\mathrm{init}}$ because $\mu_\Ent(\{\t_{\mathrm{init}}\}) = 0$.
But since evaluation respects the stack discipline, it produces the
correct result for any initial $\t_{\mathrm{init}}$. So we integrate
over \emph{all} choices of $\t_{\mathrm{init}}$, and since
$\mu_\Ent(\Ent) = 1$ the empty stack ``drops out'' of the integral.

As before, we will also need the approximants:
\begin{displaymath}
  \aMeas{n}{e}{K}{A} = {\dint {\aeval{n} {\s}{e}{K}{\t}{1}{A}} {\s} {\t}}
\end{displaymath}

For these integrals to be well-defined, of course, we need to know
that eval and its approximants are measurable.

\begin{lemma}[eval is measurable]
  \label{lemma-eval-measurable}
  For any $e$, $K$, $w \geq 0$, $A \in \SigmaReal$, and $n$,
  ${\eval{\s}{e}{K}{\t}{w}{A}}$ and ${\aeval{n}{\s}{e}{K}{\t}{w}{A}}$
  are measurable in $\s$ and $\t$.
\end{lemma}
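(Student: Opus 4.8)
The plan is to reduce the statement to measurability of the one-step transition of the abstract machine, propagate it through the finite approximants by induction on $n$, and finally pass to the limit. First I would equip the sets of expressions and continuations with a measurable structure. Although these are syntactic objects, the only non-discrete information they carry is the real constants $\const r$ occurring in them, so I would stratify the set of expressions (and likewise the continuations) into countably many \emph{skeletons}, one per shape of syntax tree, and view a skeleton with $k$ constant-leaves as a copy of $\R^k$. With the resulting $\sigma$-algebra, the configuration space $\mathcal C = \H \times \mathrm{Exp} \times \mathrm{Cont} \times \H \times \Rplus$ is a measurable space, and for fixed $e$, $K$, $w$ the assignment $(\s,\t) \mapsto \nconfig{\s}{e}{K}{\t}{w}$ is a measurable (indeed continuous) map into $\mathcal C$. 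Hence it suffices to show that $\aeval{n}{\s}{e}{K}{\t}{w}{A}$, regarded as a function of the whole configuration $c = \nconfig{\s}{e}{K}{\t}{w}$ for fixed $A$, is measurable on $\mathcal C$, and then precompose.

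The key lemma is that the one-step reduction, viewed as a partial function $\mathrm{next} : \mathcal C \rightharpoonup \mathcal C$, is measurable. This is a finite case analysis over the form of $e$ and the head of $K$, namely the rules of Figure~\ref{fig:op-sem}. Each rule maps a skeleton-cell of $\mathcal C$ into another skeleton-cell using only operations already known to be measurable: the projections $\pi_L,\pi_R$, the uniform map $\pi_U$, and the pairing $\cons$ (Property~\ref{lemma-entropy-def}); the interpretation $\deltafun$ of primitive operations (measurable by assumption); the weight update $w \mapsto r \times w$; and capture-avoiding substitution $e[v/x]$, which on a fixed skeleton is just a measurable rearrangement of the real parameters. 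The slightly delicate point is the \sampleexp rule, whose successor expression $\const{\pi_U(\pi_L(\s))}$ carries a real constant depending measurably on $\s$; this is exactly why the constants must be tracked as measurable parameters rather than treating the expressions as a discrete set, and it is accommodated by the stratified structure.

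Granting measurability of $\mathrm{next}$, I would prove by induction on $n$ that $\aeval{n}{\s}{e}{K}{\t}{w}{A}$ is measurable in $c$. For $n=0$ the function equals $w\cdot\mathbf 1[r\in A]$ on the cell where $e=\const r$ and $K=\haltcont$, and $0$ elsewhere, which is measurable because $A\in\SigmaReal$. For the step, Lemma~\ref{lemma-opsem-trivial} gives $\aeval{n+1}{\s}{e}{K}{\t}{w}{A} = \aeval{n}{\s'}{e'}{K'}{\t'}{w'}{A}$ whenever $\nconfig{\s}{e}{K}{\t}{w}\to\nconfig{\s'}{e'}{K'}{\t'}{w'}$, while on halted configurations the value is the $n=0$ value. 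Thus, off the halted cell, $\aeval{n+1}{\s}{e}{K}{\t}{w}{A}$ is the composite of $\mathrm{next}$ with the $n$-th approximant, and a composition of measurable maps is measurable. Finally, since the approximants are non-decreasing in $n$ and stabilize once a halting configuration is reached, $\eval{\s}{e}{K}{\t}{w}{A} = \sup_n \aeval{n}{\s}{e}{K}{\t}{w}{A}$, and a pointwise supremum of measurable functions is measurable; precomposing with $(\s,\t)\mapsto\nconfig{\s}{e}{K}{\t}{w}$ yields both claims.

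The main obstacle I anticipate is the bookkeeping for the measurable structure on syntax together with the verification that $\mathrm{next}$ is measurable, in particular fitting the \sampleexp case and substitution into the stratified-by-skeleton framework. Once $\mathrm{next}$ is established as a measurable partial function, the induction on $n$ and the passage to the supremum are routine.
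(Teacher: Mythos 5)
Your proposal is correct and takes essentially the same approach as the paper's proof: a measurable structure on syntax in which same-shaped terms are parameterized by their real constants (the paper obtains this from a metric following Borgstr\"om et al., which induces exactly your skeleton cells $\R^k$), measurability of the one-step transition function $\nextstep$, a countable limiting operation over the $n$-step iterates, and finally precomposition with the measurable map $(\s,\t) \mapsto \nconfig{\s}{e}{K}{\t}{w}$. The only cosmetic difference is in the passage to the limit: you take a pointwise supremum of the approximants, whereas the paper writes the preimage of a Borel set under $\rmop{eval}$ as a countable union of preimages under the maps $\finishcomp \circ \nextstep^{(n)}$ --- equivalent arguments.
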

\begin{proof}
  The proof is based on the proof from \citet{Borgstrom:2017}. See
  \theappendixref for more details.
\end{proof}

The next lemma establishes some properties of $\mu$ and the
approximants $\mu^{(n)}$. In particular, it shows that $\mu$ is the
limit of the approximants.

\begin{lemma}[measures are monotonic]
  \label{lemma:monotonic-measures}
  \label{lemma:misc-properties}
  In the following, $e$ and $K$ range over closed expressions and
  continuations, and let $A$ range over measurable sets of reals.
  \begin{enumerate}
  \item $\Meas{e}{K}{A} \ge 0$
  \item for any $m$,  $\aMeas{m}{e}{K}{A} \ge 0$
  \item if $m \le n$, then
    $\aMeas{m}{e}{K}{A} \le \aMeas{n}{e}{K}{A} \le \Meas{e}{K}{A}$
  \item $\Meas{e}{K}{A} = \sup_n\setof{\aMeas{n} {e}{K}{A}}$
  \end{enumerate}
\end{lemma}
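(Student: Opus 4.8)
The plan is to reduce all four parts to pointwise statements about $\rmop{eval}$ and its approximants (for fixed entropies $\s$ and $\t$) and then transfer these through the integrals by standard measure theory, relying on Lemma~\ref{lemma-eval-measurable} to guarantee that every integrand is measurable and hence that the integrals are well-defined.

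Parts (1) and (2) I would dispatch immediately: by definition, each of $\eval{\s}{e}{K}{\t}{1}{A}$ and $\aeval{m}{\s}{e}{K}{\t}{1}{A}$ returns either a positive weight $w'$ or $0$, so each is non-negative, and the integral of a non-negative measurable function is non-negative.

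For parts (3) and (4) the crux is that the small-step semantics is deterministic---each configuration matches at most one rule, and the two \texttt{if} rules are mutually exclusive---so for fixed $\s$ and $\t$ there is a unique reduction sequence out of $\nconfig{\s}{e}{K}{\t}{1}$. This leaves only two cases: either the sequence reaches a halting configuration $\nconfig{\s'}{\const{r}}{\haltcont}{\t'}{w'}$ in some least number of steps $k$ with $r \in A$, or it does not (diverging, getting stuck, or halting with $r \notin A$). In the first case $\aeval{n}{\s}{e}{K}{\t}{1}{A}$ equals $0$ for $n < k$ and $w'$ for $n \ge k$, while $\eval{\s}{e}{K}{\t}{1}{A} = w'$; in the second case all of these are $0$. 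I would record the resulting pointwise facts: for $m \le n$,
\begin{displaymath}
  \aeval{m}{\s}{e}{K}{\t}{1}{A} \le \aeval{n}{\s}{e}{K}{\t}{1}{A} \le \eval{\s}{e}{K}{\t}{1}{A},
\end{displaymath}
and as $n \to \infty$ the left-hand sequence increases to the right-hand value. (These can alternatively be derived from Lemma~\ref{lemma-opsem-trivial} by induction.)

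From here the transfer is routine. Part (3) follows by monotonicity of the integral applied to the first inequality above. For part (4), the approximants form a monotone increasing sequence of non-negative measurable functions with pointwise supremum $\eval{\s}{e}{K}{\t}{1}{A}$, so the Monotone Convergence Theorem gives $\Meas{e}{K}{A} = \sup_n \aMeas{n}{e}{K}{A}$; since the double integral $\dint{\cdot}{\s}{\t}$ is taken over the product measure $\mu_\Ent \times \mu_\Ent$ (Tonelli, Lemma~\ref{lemma-tonelli}), MCT applies to it directly. The only step requiring genuine care is the pointwise convergence claim, which rests entirely on determinism of the machine: it is determinism that makes $\aeval{n}$ a single monotone staircase converging to $\rmop{eval}$ rather than something that could oscillate. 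The measure-theoretic machinery is then entirely standard.
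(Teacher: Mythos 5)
Your proof is correct. The paper in fact states this lemma without any proof, treating it as routine; your argument---non-negativity of the integrand for (1)--(2), and for (3)--(4) the pointwise ``staircase'' behavior of $\rmop{eval}^{(n)}$ guaranteed by determinism of the machine, transferred through the integrals by monotonicity and the Monotone Convergence Theorem---is precisely the standard justification the paper implicitly relies on.
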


Finally, the next lemma's equations characterize how the approximant
and limit measures, $\mu^{(n)}$ and $\mu$, behave under the reductions
of the small-step machine.  Almost all the calculations in
Section~\ref{sec:log-rel} depend only on these equations.

\begin{lemma}
  The following equations hold for approximant measures:
  \label{lemma:meas-eqns}
  \label{lemma-let} \label{lemma-return}
  \label{lemma-appexp} \label{lemma-opexp} \label{lemma-ifexp}
  \label{lemma-sample} \label{lemma-factor}
  \begin{align*}
    \aMeas{p+1} {\letexp{x}{e_1}{e_2}} {K} {A}
    &= \aMeas{p} {e_1} {\letcont{x}{e_2}{K}} {A}
    \\
    \aMeas{p+1} {v} {\letcont {x}{e}{K}} {A}
    &= \aMeas{p} {e[v/x]} {K} {A}
    \\
    \aMeas{p+1} {\appexp {\lambdaexp {x}{e}} {v}} {K} {A}
    &= \aMeas{p} {e[v/x]} {K} {A}
    \\
    \aMeas{p+1}{\opexp{\op^n}{v_1, \dots, v_n}}{K}{A}
    &= \aMeas{p}{\deltafun(\op^n, v_1, \dots, v_n)}{K}{A} \quad\text{if defined}
    \\
    \aMeas{p+1}{\ifexp{\const r}{e_1}{e_2}}{K}{A}
    &= \aMeas{p}{e_1}{K}{A} \quad\text{if $r > 0$}
    \\
    \aMeas{p+1}{\ifexp{\const r}{e_1}{e_2}}{K}{A}
    &= \aMeas{p}{e_2}{K}{A} \quad\text{if $r \leq 0$}
    \\
    \aMeas{p+1} {\sampleexp} {K} {A}
    &= \smintBB{0}{1}{\aMeas {p} {\const{r}} {K} {A}}{dr}
    \\
    \aMeas{p+1} {\factorexp {\const r}} {K} {A}
    &= r \times {\aMeas {p} {\const r} {K} {A}} \quad\text{if $r > 0$}
  \end{align*}
  In addition, the analogous index-free equations hold for the
  unapproximated (limit) measure $\Meas{-}{-}{-}$.
\end{lemma}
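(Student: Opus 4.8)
The plan is to prove each approximant equation by reducing the left-hand configuration one step, applying Lemma~\ref{lemma-opsem-trivial}(1) to rewrite the $(p+1)$-step approximant pointwise as a $p$-step approximant of the reduct, and then integrating over $\s$ and $\t$. Concretely, each rule of Figure~\ref{fig:op-sem} has the form $\nconfig{\s}{e}{K}{\t}{1} \to \nconfig{\s_*}{e_*}{K_*}{\t_*}{w_*}$, and Lemma~\ref{lemma-opsem-trivial}(1) gives
\[
  \aeval{p+1}{\s}{e}{K}{\t}{1}{A} = \aeval{p}{\s_*}{e_*}{K_*}{\t_*}{w_*}{A}
\]
for every $\s$ and $\t$. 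Integrating both sides against the stock measure and reshaping the resulting entropy integral into the definition of the right-hand $\mu^{(p)}$ is then all that remains; the nontrivial content is the entropy bookkeeping, which varies by rule.

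For the application, primitive-operation, and conditional rules the step leaves $\s$, $\t$, and the weight unchanged, so the right-hand integrand is literally $\aeval{p}{\s}{e_*}{K}{\t}{1}{A}$ and integration immediately yields $\aMeas{p}{e_*}{K}{A}$; the side conditions on the conditional rules match those on the reduction rules. The $\factorexp{\const r}$ rule is the same except that the reduct carries weight $r$: I rewrite $\aeval{p}{\s}{\const r}{K}{\t}{r}{A}$ as $r \times \aeval{p}{\s}{\const r}{K}{\t}{1}{A}$ using the linearity of weights (Lemma~\ref{lemma:linear-weights}(2)) and pull $r$ out of the integral. The return rule, taken in its explicit-projection form, sends the reduct current entropy to $\pi_L(\t)$ and its stack to $\pi_R(\t)$; since the integrand then does not depend on $\s$, the $\s$-integral collapses because $\mu_\Ent(\Ent) = 1$, and the measure-preservation clause of Property~\ref{lemma-entropy-splitting} converts $\leb{\cdots(\pi_L(\t),\pi_R(\t))}{\t}$ into the double integral defining $\aMeas{p}{e[v/x]}{K}{A}$. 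The \sampleexp rule splits $\s$ into independent entropies $\s_a,\s_b$ by the same clause; the uniform-representation clause then turns $\leb{f(\pi_U(\s_a))}{\s_a}$ into $\smintBB{0}{1}{f(r)}{dr}$, and Tonelli's Theorem (Lemma~\ref{lemma-tonelli}) moves $\int_0^1 dr$ outside the surviving $(\s_b,\t)$ integrals, recovering $\smintBB{0}{1}{\aMeas{p}{\const r}{K}{A}}{dr}$.

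The hardest case is the \ttop{let} rule, whose step sets the current entropy to $\pi_L(\s)$ and pushes $\pi_R(\s)$ onto the stack, giving the integrand $\aeval{p}{\pi_L(\s)}{e_1}{\letcont{x}{e_2}{K}}{\pi_R(\s) \cons \t}{1}{A}$. Here I apply the measure-preservation clause of Property~\ref{lemma-entropy-splitting} twice, in opposite directions: first forward, to split $\s$ into two independent entropies, one of which becomes the new current entropy while the other, say $\s_b$, is to be paired with $\t$; and then backward, on the pair $(\s_b,\t)$, using that $\cons$ is surjective with projections $\pi_L,\pi_R$ (so $\pi_L(\s) \cons \pi_R(\s) = \s$), to fold $\s_b \cons \t$ back into a single integration variable over $\Ent$. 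Tonelli's Theorem licenses the reordering of the iterated integrals needed between these two steps. Checking that the result is exactly $\aMeas{p}{e_1}{\letcont{x}{e_2}{K}}{A}$ is the delicate bookkeeping I expect to be the main obstacle, since it is the one case requiring the pairing/splitting of Property~\ref{lemma-entropy-splitting} to be used non-trivially in both directions.

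Finally, the index-free equations follow by passing to the limit in $p$. By Lemma~\ref{lemma:monotonic-measures} the approximants are nonnegative and monotone increasing in $p$ with $\Meas{e}{K}{A} = \sup_p \aMeas{p}{e}{K}{A}$, so taking $\sup_p$ of both sides of each approximant equation gives the corresponding equation for $\mu$. In the \sampleexp and $\factorexp{\const r}$ cases I invoke the monotone convergence theorem to exchange the supremum with the integral (respectively with the inner $\int_0^1 dr$); this is justified precisely because the integrands increase with $p$ and are nonnegative.
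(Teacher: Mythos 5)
Your proposal is correct and takes essentially the same route as the paper's proof: unfold the approximant measure, rewrite the integrand by one machine step via Lemma~\ref{lemma-opsem-trivial}, then reshape the entropy integrals case by case using Property~\ref{lemma-entropy-splitting} (forward splitting of $\s$, and backward folding via surjectivity of $\cons$ for the \ttop{let} and return cases), Property~\ref{lemma-entropy-def}.2 plus Tonelli for \sampleexp, and linearity of weights for \ttop{factor} --- exactly the ingredients of the paper's \ttop{let} and \sampleexp calculations. The paper does not spell out the passage to the limit measures, and your argument via Lemma~\ref{lemma:monotonic-measures}.4 together with monotone convergence is a valid way to complete that part.
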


In general, the proofs of the equations of Lemma~\ref{lemma:meas-eqns}
involve unfolding the definition of the measure and applying
Lemma~\ref{lemma-opsem-trivial} under the integral. The proof for
\ttop{let} is representative:
\begin{proof}[Proof for \ttop{let}]
  \begin{align*}
    & {\aMeas {p+1} {\letexp{x}{e_1}{e_2}} {K}{A}} \\
    & = {\dint {\aeval {p+1} {\s}
                        {\letexp{x}{e_1}{e_2}}
                        {K} {\t} 1{A}}
           {\s} {\t}} \\
    & = {\dint {\aeval {p} {\pi_L(\s)} {e_1} {\letcont {x}{e_2}{K}}
                            {{\pi_R(\s)} \cons {\t}}
                            1{A}}
           {\s} {\t}}
    \tag{Lemma~\ref{lemma-opsem-trivial}} \\
    & = {\tint {\aeval {p} {\s'} {e_1} {\letcont {x}{e_2}{K}}
                            {{\s''} \cons {\t}}
                            1{A}}
           {\s'}  {\s''} {\t}}
    \tag{Property~\ref{lemma-entropy-splitting}.4 on $\s$} \\
    & = {\dint {\aeval {p} {\s'} {e_1} {\letcont {x}{e_2}{K}}
                            {\pi_L(\t') \cons \pi_R(\t')}
                            1{A}} {\s'} {\t'}}
    \tag{Property~\ref{lemma-entropy-splitting}.4 on $\t'$} \\
    & = {\dint {\aeval {p} {\s'} {e_1} {\letcont {x}{e_2}{K}}
                            {\t'}
                            1{A}}
           {\s'} {\t'}}
    \tag{${\pi_L(\t') \cons \pi_R(\t')} = \t'$} \\
    & = {\aMeas {p} {e_1} {\letcont {x}{e_2}{K}}{A}}
  \end{align*}
\end{proof}
The proof for $\ttop{factor}$ additionally uses linearity
(Lemma~\ref{lemma:linear-weights}), and the proof for $\sampleexp$
additionally uses Property~\ref{lemma-entropy-def}.2.

So far, our semantics speaks directly only about the meanings of whole
programs.
In the following sections, we
develop a collection of relations for expressions and
ultimately show that they respect the contextual ordering relation
on expression induced by the semantics of whole programs.


\section{The Logical Relation}
\label{sec:log-rel}

In this section, we define a step-indexed logical relation on values,
expressions, and continuations, and we prove the Fundamental Property
(a form of reflexivity) for our relation.

We begin by defining step-indexed logical relations on \emph{closed}
values, \emph{closed} expressions, and continuations (which are always
closed) as follows:
\begin{displaymath}
  \wider
  \begin{array}{l@{{}\iff{}}l}
    \ivalrel{}{n}{v_1}{v_2}
    & {\renewcommand\arraystretch{1.0}
        \block{
          v_1 = v_2 = \const r \hbox{ for  some } r \\
          \lor\ (\block {v_1 = {\lambdaexp{x}{e}} \land v_2 = {\lambdaexp{x}{e'}} \\
            \land\ (\forall m < n) (\forall v,v')[\valrel{}{m}{v}{v'} \implies
              \exprel{}{m}{e[v/x]}{e'[v'/x]}])}}}
    \\
    \exprel{}{n}{e}{e'}
    & \mathblock{
      (\forall m \le n)(\forall K,K')(\forall A \in \SigmaReal) \\
      \quad\protect{[\icontrel{}{m}{K}{K'} \implies \aMeas{m}{e}{K}{A} \le \Meas{e'}{K'}{A}]}
    }
    \\
    \contrel{}{n}{K}{K'}
    & \mathblock{
      (\forall m \le n)(\forall v,v')(\forall A \in \SigmaReal) \\
      \quad\protect{[\valrel{}{m}{v}{v'} \implies \aMeas{m}{v}{K}{A} \le \Meas{v'}{K'}{A}]}}
  \end{array}
\end{displaymath}
The definitions are well-founded because $\ivalrelset{}{-}$ refers to
$\iexprelset{}{-}$ at strictly smaller indexes.
Note that for all $n$, $\ivalrelset{}{n} \supseteq \ivalrelset{}{n+1}
\supseteq \ldots$, and similarly for $\iexprelset{}{}$ and
$\icontrelset{}{}$. That is, at higher indexes the relations make more
distinctions and thus relate fewer things.

We use $\g$ to range over substitutions of closed values for
variables, and we define $\ityenvrelset{}{n}$ by lifting
$\ivalrelset{}{n}$ to substitutions as follows:
\begin{displaymath}
  \ityenvrel{n}{\G}{\g}{\g'} \iff
  \begin{array}[t]{l}
    \dom(\g) = \dom(\g') = \G \\
    \land\ \forall x \in \G, {\valrel{}{n}{\g(x)}{\g'(x)}}
  \end{array}
\end{displaymath}

Last, we define the logical relations on open terms.  In each case,
the relation is on terms of the specified sort that are well-formed
with free variables in \G:
\begin{displaymath}
  \renewcommand\arraystretch{1.5}
  \begin{array}[t]{l@{{}\iff{}}l}
    \ivalrel{\G}{}{v}{v'}
    & (\forall n)(\forall \g, \g')[\ityenvrel{n}{\G}{\g}{\g'}
      \implies \ivalrel{}{n}{v\g}{v'\g'}] \\
    \iexprel{\G}{}{e}{e'}
    & (\forall n)(\forall \g, \g')[\ityenvrel{n}{\G}{\g}{\g'}
      \implies \exprel{}{n}{e\g}{e'\g'}] \\
    \icontrel{}{}{K}{K'}
    & (\forall n) \icontrel{}{n}{K}{K'}
  \end{array}
\end{displaymath}
The limit relation $\contrelset{}{}$ is not indexed by $\G$ because we
work only with closed continuations.

Our first goal is to show the so-called fundamental property of logical
relations:

$$\hastype{\G}{e}{\Exptype} \implies \exprel{\G}{}{e}{e}$$


We begin with a series of compatibility lemmas.  These show that the
logical relations form a congruence under (``are compatible with'')
the scoping rules of values, expressions, and continuations.
Note the correspondence between the scoping rules of
Figure~\ref{fig:scoping} and the compatibility rules of
Figure~\ref{fig:compatibility}.

\begin{figure}
  \begin{mathpar}
    \inferrule{x \in \G}
              {\ivalrel{\G}{}{x}{x}}

    \inferrule{\iexprel{\G,x}{}{e}{e'}}
              {\ivalrel{\G}{}{\lambdaexp{x}{e}}{\lambdaexp{x}{e'}}}

    \inferrule{}{\ivalrel{\G}{}{\const r}{\const r}}

    \inferrule{\ivalrel{\G}{}{v}{v'}}
              {\iexprel{\G}{}{v}{v'}}

    \inferrule{\ivalrel{\G}{}{v_1}{v_1'} \\
               \ivalrel{\G}{}{v_2}{v_2'}}
              {\iexprel{\G}{}{\appexp{v_1}{v_2}}{\appexp{v_1'}{v_2'}}}

    \inferrule{\exprel{\G}{}{e_1}{e_1'} \\
               \exprel{\G,x}{}{e_2}{e_2}}
              {\exprel{\G}{}{\letexp{x}{e_1}{e_2}}{\letexp{x}{e_1'}{e_2'}}}

    \inferrule{\ivalrel{\G}{}{v_i}{v_i'} \quad(i \in \{1, \dots, k\})}
              {\iexprel{\G}{}{\opexp{\op^k}{v_1, \dots, v_k}}{\opexp{\op^k}{v_1', \dots, v_k'}}}

    \inferrule{\valrel{\G}{}{v}{v'} \\
               \exprel{\G}{}{e_1}{e_1'} \\
               \exprel{\G}{}{e_2}{e_2}}
              {\exprel{\G}{}{\ifexp{v}{e_1}{e_2}}{\ifexp{v'}{e_1'}{e_2'}}}
    \\
    \inferrule{}{\iexprel{}{}{\sampleexp}{\sampleexp}}

    \inferrule{\ivalrel{\G}{}{v}{v'}}
              {\iexprel{\G}{}{\factorexp{v}} {\factorexp{v'}}}
    \\
    \inferrule{}{\contrel{}{}{\haltcont}{\haltcont}}

    \inferrule
        {\exprel{\{x\}}{}{e_1}{e_2} \\
         \icontrel{}{}{K}{K'}}
        {\icontrel{}{}{\letcont{x}{e}{K}}{\letcont{x}{e'}{K'}}}
  \end{mathpar}
  \caption{Compatibility rules for the logical relation} \label{fig:compatibility}
\end{figure}

\begin{lemma}[Compatibility] \label{lemma:compatibility}
  The implications summarized as inference rules in
  Figure~\ref{fig:compatibility} hold.
\end{lemma}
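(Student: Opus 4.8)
The plan is to verify each of the rules in Figure~\ref{fig:compatibility} separately, by unfolding the definition of the relation named in its conclusion and appealing to the measure equations of Lemma~\ref{lemma:meas-eqns}. Every rule whose conclusion is an \emph{open} relation (indexed by $\G$) is first reduced to a \emph{closed} statement: I fix an index $n$ and a pair of substitutions with $\ityenvrel{n}{\G}{\g}{\g'}$, apply them to both sides, and then prove the resulting closed relation at index $n$. Throughout I will use two bookkeeping facts that are immediate from the definitions: the relations are downward closed in the step index (membership at $n$ gives membership at every $m \le n$), and if $\ityenvrel{n}{\G}{\g}{\g'}$ and $\valrel{}{n}{v}{v'}$ then the extended substitutions satisfy $\ityenvrel{n}{\G,x}{\g[x\mapsto v]}{\g'[x\mapsto v']}$.

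The leaf rules are essentially immediate. For $\const r$ and variables, substitution produces equal constants or the images $\g(x),\g'(x)$, which are related by the constant clause or by the definition of $\ityenvrelset{}{}$. Value inclusion holds because the closed expression relation at a value unfolds to exactly the inequality guaranteed by the continuation relation: given $\valrel{}{n}{v}{v'}$ and any related $K,K'$, the defining clause of $\contrel{}{m}{K}{K'}$ instantiated at $v,v'$ supplies $\aMeas{m}{v}{K}{A}\le\Meas{v'}{K'}{A}$. The halt rule is a direct calculation: $\aMeas{m}{v}{\haltcont}{A}$ equals $\indic{A}(r)$ when $v=\const r$ and $0$ when $v$ is a closure, and likewise for $\Meas{v'}{\haltcont}{A}$, so relatedness of $v,v'$ forces the two sides to coincide.

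The remaining closed cases share one pattern: to prove $\aMeas{m}{\cdot}{K}{A}\le\Meas{\cdot}{K'}{A}$ I dispose of $m=0$ (the approximant is $0$, since no halting configuration is reachable in zero steps), and for $m\ge 1$ I rewrite the left side with the approximant equation of Lemma~\ref{lemma:meas-eqns}, rewrite the right side with the corresponding index-free equation, and invoke the relevant hypothesis \emph{at the decremented index} $m-1$. For the $\lambda$ rule I use substitution extension together with $\iexprel{\G,x}{}{e}{e'}$. For application I use that relatedness forces $v_1,v_1'$ to be either equal constants (application stuck, both measures $0$) or a matching pair of abstractions, whose defining clause yields $\exprel{}{m-1}{e_0[v_2/x]}{e_0'[v_2'/x]}$. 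The \ttop{if}, primitive-operation, and \ttop{factor} rules exploit that related values which are constants are \emph{equal} constants: the two runs branch identically, fire $\deltafun$ to the same result (using that $\deltafun$ treats all closures uniformly, so definedness and output agree on related arguments), or both get stuck; $\sampleexp$ is handled by integrating the pointwise inequality $\aMeas{m-1}{\const r}{K}{A}\le\Meas{\const r}{K'}{A}$ over $[0,1]$.

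The crux is the pair of \ttop{let} rules, which are mutually supporting. For the continuation rule I unfold $\contrel{}{}{\letcont{x}{e}{K}}{\letcont{x}{e'}{K'}}$, apply the ``return'' equation to turn $\aMeas{m}{v}{\letcont{x}{e}{K}}{A}$ into $\aMeas{m-1}{e[v/x]}{K}{A}$, and obtain $\exprel{}{m-1}{e[v/x]}{e'[v'/x]}$ from the premise $\exprel{\{x\}}{}{e}{e'}$ by feeding it the singleton substitution built from the related $v,v'$. For the \ttop{let} rule I use the \ttop{let} equation to reduce to evaluating $e_1$ against the composite continuation $\letcont{x}{e_2}{K}$, and I must then show this continuation is related to $\letcont{x}{e_2'}{K'}$, which is the continuation rule applied with the body $e_2$ under the outer substitution $\g$. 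I expect the main obstacle to be precisely this interaction: combining the outer substitution $\g$ over $\G$ with the per-value substitution of the bound variable so that $\exprel{\G,x}{}{e_2}{e_2'}$ can be invoked, while keeping the three indices---the outer $n$, the current step $m$, and the decremented $m-1$---consistent under downward closure. No single step is deep, but the \ttop{let}/continuation pair is where the step-index arithmetic and the open-to-closed substitution algebra must be handled with care.
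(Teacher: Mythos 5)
Your overall strategy---one case per rule of Figure~\ref{fig:compatibility}, reduction of the open ($\G$-indexed) rules to closed statements via related substitutions, then the measure equations of Lemma~\ref{lemma:meas-eqns} applied at a decremented index---is exactly the paper's, and your handling of the leaf rules, value inclusion, $\haltcont$, application, operations, \ttop{if}, \ttop{factor}, \sampleexp, and the continuation rule matches the paper's proofs case by case. The gap is in the one place you yourself flag but dismiss as bookkeeping: the \ttop{let} case. There you propose to obtain relatedness of the composite continuations by ``the continuation rule applied with the body $e_2$ under the outer substitution $\g$''. But the continuation rule of Figure~\ref{fig:compatibility} has premise $\exprel{\{x\}}{}{e_2\g}{e_2'\g'}$, which quantifies over \emph{all} step indices: for every $p$ and every $\ivalrel{}{p}{v}{v'}$ it demands $\iexprel{}{p}{e_2\g[v/x]}{e_2'\g'[v'/x]}$. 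From what you actually have---$\exprel{\G,x}{}{e_2}{e_2'}$ together with $\ityenvrel{n}{\G}{\g}{\g'}$---this is derivable only for $p \le n$: the substitutions are related only at index $n$, and since the relations shrink as the index grows ($\ivalrelset{}{n} \supseteq \ivalrelset{}{n+1}$), relatedness at $n$ cannot be promoted to any $p > n$. So the figure's rule cannot be instantiated here, and no downward-closure arithmetic repairs this, because the obstruction is upward, not downward.

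The missing ingredient is the paper's Lemma~\ref{lemma:pitts-4.2}, an index-relativized continuation-compatibility lemma: if for all $p \le n$ and all $\ivalrel{}{p}{v}{v'}$ one has $\iexprel{}{p}{e[v/x]}{e'[v'/x]}$, then for all $m \le n$, $\icontrel{}{m}{K}{K'}$ implies $\icontrel{}{m}{\letcont{x}{e}{K}}{\letcont{x}{e'}{K'}}$. Its proof is precisely the argument you already give for the figure's continuation rule (unfold, return equation, downward closure), so no new technique is needed; what is needed is the reformulation itself---hypothesis weakened to indices $\le n$, conclusion correspondingly indexed---and that is what your proposal lacks. With that lemma in hand, your \ttop{let} calculation closes as intended:
\begin{displaymath}
  \aMeas{m}{\letexp{x}{e_1\g}{e_2\g}}{K}{A}
  = \aMeas{m-1}{e_1\g}{\letcont{x}{e_2\g}{K}}{A}
  \le \Meas{e_1'\g'}{\letcont{x}{e_2'\g'}{K'}}{A}
  = \Meas{\letexp{x}{e_1'\g'}{e_2'\g'}}{K'}{A}.
\end{displaymath}
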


Most parts of the lemma follow by general reasoning about the
$\lambda$-calculus, the definitions of the logical relations, and
calculations involving $\mu^{(n)}$ and $\mu$ using
Lemma~\ref{lemma:meas-eqns}. The proof for application is
representative:

\begin{proof}[Proof for app]
  We must show that if ${\ivalrel{\G}{}{v_1}{v_1'}}$ and ${\ivalrel{\G}{}{v_2}{v_2'}}$, then
  ${\iexprel{\G}{}{\appexp{v_1}{v_2}}{\appexp{v_1'}{v_2'}}}$.

  Choose $n$, and assume $\ityenvrel{n}{\G}{\g}{\g'}$.  Then
  ${\ivalrel{}{n}{v_1\g\,}{v_1'\g'}}$ and $\ivalrel{}{n}{v_2\g}{v_2'\g'}$.
  We must show
  ${\iexprel{}{n}{\appexp{v_1\g\,}{v_2\g}}{\appexp{v_1'\g'}{v_2'\g'}}}$.

  If $v_1\g$ is of the form $\const r$, then $\aMeas{m}{v_1\g}{K}{A} =
  0$ for any $m$, $K$, and $A$, so the conclusion holds by
  Lemma~\ref{lemma:misc-properties}.

  Otherwise, assume $v_1\g$ is of the form \lambdaexp{x}{e}, and so
  $v_1'\g'$ is of the form \lambdaexp{x}{e'}.
  So choose $m \le n$ and $A$, and let  $\contrel{}{m}{K}{K'}$.
  We must show that
  \[\aMeas {m} {\appexp{\lambdaexp{x}{e\g}}{v_2\g}} {K}{A}
  \le \Meas {\appexp{\lambdaexp{x}{e'\g'}}{v_2'\g'}}{K'}{A}.\]

  If $m = 0$ the left-hand side is 0 and the inequality holds trivially.
  So consider $m \ge 1$.  Since all
  the relevant terms are closed and the relations on closed terms are
  antimonotonic in the index, we have
  ${\ivalrel{}{m}{\lambdaexp{x}{e\g}}{\lambdaexp{x}{e'\g'}}}$ and
  ${\ivalrel{}{m-1}{v_1\g\,}{v_1'\g'}} $.  Therefore
  ${\iexprel{}{m-1}{e\g[v_2\g/x]}{e'\g'[v_2'\g'/x]}}$.

  Now, $\configsw{\appexp{\lambdaexp{x}{e\g}\,}{v_2\g}}{K} \to
  \configsw{e\g[v_2\g/x]}{K}$, and similarly for the primed side.  So we
  have
  \begin{align*}
    \aMeas {m} {\appexp{\lambdaexp{x}{e\g}\,}{v_2\g}}{K}{A}
    &= \aMeas {m-1} {e\g[v_2\g/x]}{K}{A}  \tag{Lemma~\ref{lemma-appexp}}
    \\ &\le \Meas {e'\g'[v_2'\g'/x]}{K'}{A}  \tag{by
      ${\iexprel{}{m-1}{e\g[v_2\g/x]}{e'\g'[v_2'\g'/x]}}$}
    \\ &= \Meas {\appexp{\lambdaexp{x}{e'\g'}\,}{v_2'\g'}}{K'}{A}
  \end{align*}
\end{proof}

More detailed proofs can be found in \theappendixref.


Now we can prove the Fundamental Property:

\begin{theorem}[Fundamental Property]\
  \begin{enumerate}
  \item $\hastype{\G}{e}{\Exptype}  \implies \iexprel{\G}{}{e}{e}$
  \item $\hastype{\G}{v}{\Valtype}  \implies \ivalrel{\G}{}{v}{v}$
  \item $\hastype{}{K}{\Conttype} \implies \forall n, \icontrel{}{n}{K}{K}$
  \end{enumerate}
\end{theorem}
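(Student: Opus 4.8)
The plan is to prove all three parts simultaneously, by structural induction on $v$, $e$, and $K$ (equivalently, on the scoping derivations of Figure~\ref{fig:scoping}). A simultaneous induction is forced because the three syntactic categories are mutually recursive: a $\lambda$-abstraction's body is an expression, an application and a \ttop{let} mention values and subexpressions, and a \letcont{x}{e}{K} frame mentions an expression and a continuation. The crucial observation is that the compatibility rules of Figure~\ref{fig:compatibility} stand in exact one-to-one correspondence with the scoping rules of Figure~\ref{fig:scoping}; consequently each inductive case is discharged by the matching clause of the Compatibility Lemma (Lemma~\ref{lemma:compatibility}).

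Concretely, each case proceeds identically in shape. Given a scoping derivation ending in some rule, I apply the appropriate induction hypothesis to every premise to obtain the reflexive relatedness of each immediate subterm to itself, and then feed these reflexive instances into the corresponding compatibility rule to obtain the reflexive relatedness of the whole term. For example, from $\hastype{\G}{\lambdaexp{x}{e}}{\Valtype}$ (derived from $\hastype{\G,x}{e}{\Exptype}$) the induction hypothesis gives $\iexprel{\G,x}{}{e}{e}$, and the $\lambda$ compatibility rule then yields $\ivalrel{\G}{}{\lambdaexp{x}{e}}{\lambdaexp{x}{e}}$; the \ttop{let} case combines the hypotheses $\exprel{\G}{}{e_1}{e_1}$ and $\exprel{\G,x}{}{e_2}{e_2}$ via the \ttop{let} compatibility rule; and the \letcont case combines an expression hypothesis and a continuation hypothesis via the continuation compatibility rule. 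The base cases---variables (using $x \in \G$), constants $\const r$, $\sampleexp$, and $\haltcont$---carry no premises and follow directly from the axiom-form compatibility rules. An expression that is itself a value (the subsumption scoping rule) is handled by invoking part~(2) and then the value-to-expression compatibility rule. Part~(3) is just the expanded form of $\icontrel{}{}{K}{K}$, since the limit continuation relation is by definition the intersection over all indices $n$, and the continuation compatibility rules already establish exactly this $\forall n$ statement.

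Because the compatibility lemmas already absorb all the genuine content---the step-indexed quantifications, the antimonotonicity of the relations in the index, and the measure calculations built on Lemma~\ref{lemma:meas-eqns}---the Fundamental Property itself is essentially bookkeeping, and I anticipate no substantive obstacle. The only point requiring care is to set up the induction as a genuinely simultaneous one across the three judgment forms, so that, for instance, the application case may legitimately invoke the value hypothesis for $v_1$ and $v_2$ while the \letcont case invokes both the expression and the continuation hypotheses. In the write-up I would present one or two representative cases ($\lambda$-abstraction and \ttop{let}) in full and remark that every remaining case is discharged identically by its namesake compatibility rule.
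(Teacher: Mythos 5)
Your proposal is correct and matches the paper's proof exactly: the paper likewise proceeds by induction on the scoping derivations of Figure~\ref{fig:scoping}, discharging each case by the corresponding compatibility rule of Lemma~\ref{lemma:compatibility}. Your write-up simply makes explicit the bookkeeping (the simultaneous induction across the three judgment forms, the subsumption case, and the unfolding of the limit continuation relation for part~(3)) that the paper leaves implicit in its one-line argument.
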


\begin{proof}
  By induction on the derivation of $\hastype{\G}{e}{\Exptype}$, etc,
  applying the corresponding compatibility rule from
  Lemma~\ref{lemma:compatibility} at each point.
\end{proof}

The essential properties of the logical relation we wish to hold are
soundness and completeness with respect to the contextual ordering.
We address these properties in Section~\ref{sec:ciu-ctx} after taking
a detour to define another useful intermediate relation,
$\iciurelset{\G}{}$, and establish its equivalence to
$\iexprelset{\G}{}$.


\section{CIU Ordering}
\label{sec:ciu}

The CIU (``closed instantiation of uses'') ordering of two terms
asserts that they yield related observable behavior under a single
substitution and a single continuation.  We take ``observable
behavior'' to be a program's measure over the reals, as we did for the
logical relations.

\begin{definition}
  \label{def-ciu} \
  \begin{enumerate}
  \item If $e$ and $e'$ are closed expressions, then
    $\iciurel{}{}{e}{e'}$ iff for all closed $K$ and measurable $A$,
    $\Meas{e}{K}{A} \le \Meas{e'}{K}{A}$.
  \item If $\hastype{\G}{e}{\Exptype}$ and $\hastype{\G}{e'}{\Exptype}$,
    then $\iciurel{\G}{}{e}{e'}$ iff for
    all closing substitutions \g, $\iciurel{}{}{e\g}{e'\g}$.
  \end{enumerate}
\end{definition}

Since it requires considering only a single substitution and a single
continuation rather than related pairs, it is often easier to prove
particular expressions related by $\iciurelset{\G}{}$. But in fact,
this relation coincides with the logical relation, as we demonstrate
now. One direction is an easy consequence of the Fundamental Property.

\begin{lemma}[$\iexprelset{}{} \subseteq \iciurelset{}{}$]
  \label{lemma-e-subset-ciu}
  If $\iexprel{\G}{}{e}{e'}$ then $\iciurel{\G}{}{e}{e'}$.
\end{lemma}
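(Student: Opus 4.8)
The plan is to unfold both sides' definitions and collapse the ``related pairs'' quantification of the logical relation into the ``single instantiation'' quantification of CIU, using the Fundamental Property to supply the reflexive witnesses that CIU does not provide on its own. Concretely, to prove $\iciurel{\G}{}{e}{e'}$ I would fix an arbitrary closing substitution $\g$, a closed continuation $K$, and a measurable set $A$, and aim to show $\Meas{e\g}{K}{A} \le \Meas{e'\g}{K}{A}$.

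The hypothesis $\iexprel{\G}{}{e}{e'}$ only constrains $e\g_1$ and $e'\g_2$ for \emph{related} environments $\g_1, \g_2$ and only bounds measures taken against \emph{related} continuations, so the first step is to manufacture such related witnesses out of the single $\g$ and single $K$ at hand. This is exactly where the Fundamental Property does the work: since each $\g(x)$ is a closed value, its value part gives $\ivalrel{}{n}{\g(x)}{\g(x)}$ for every $n$, hence $\ityenvrel{n}{\G}{\g}{\g}$ for every $n$; and since $K$ is a closed continuation, its continuation part gives $\icontrel{}{n}{K}{K}$ for every $n$.

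With these reflexive witnesses in hand, instantiating $\iexprel{\G}{}{e}{e'}$ at $\g_1 = \g_2 = \g$ yields $\exprel{}{n}{e\g}{e'\g}$ for every $n$, and then instantiating the closed-expression relation at $m = n$, $K' = K$, and the given $A$ yields $\aMeas{n}{e\g}{K}{A} \le \Meas{e'\g}{K}{A}$ for every $n$. The right-hand side does not depend on $n$, so I would take the supremum over $n$ on the left and invoke Lemma~\ref{lemma:monotonic-measures}(4), which identifies $\sup_n \aMeas{n}{e\g}{K}{A}$ with $\Meas{e\g}{K}{A}$, to conclude $\Meas{e\g}{K}{A} \le \Meas{e'\g}{K}{A}$. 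Since $\g$, $K$, and $A$ were arbitrary, this is precisely $\iciurel{\G}{}{e}{e'}$.

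There is no serious obstacle here; the only subtlety to keep straight is the asymmetry baked into the definitions: the logical relation bounds the \emph{approximant} measure $\mu^{(n)}$ on the left by the \emph{limit} measure $\mu$ on the right, so the argument must pass to the supremum over $n$ via Lemma~\ref{lemma:monotonic-measures} rather than attempting to compare $\mu$ with $\mu$ directly at a fixed index. The essential content is simply the observation that the Fundamental Property furnishes the reflexive substitution $\g$ and continuation $K$ needed to specialize the logical relation's pairwise quantifiers to the single instantiation demanded by CIU.
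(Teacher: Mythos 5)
Your proposal is correct and matches the paper's proof essentially step for step: fix $\g$, $K$, and $A$; use the Fundamental Property to obtain $\ityenvrel{n}{\G}{\g}{\g}$ and $\icontrel{}{n}{K}{K}$ for all $n$; instantiate the logical relation to get $\aMeas{n}{e\g}{K}{A} \le \Meas{e'\g}{K}{A}$ for all $n$; and pass to the supremum via Lemma~\ref{lemma:monotonic-measures}(4). The only difference is that you spell out the instantiation details and the role of the approximant/limit asymmetry more explicitly than the paper does.
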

\begin{proof}
  Choose a closing substitution \g, a closed continuation $K$, and $A \in \SigmaR$.  By
  the Fundamental Property, we have for all $n$, $\ityenvrel{n}{\G}{\g}{\g}$
   and $\icontrel{}{n}{K}{K}$.  Therefore,
  for all $n$, $\aMeas {n}{e\g}{K}{A} \le \Meas{e'\g}{K}{A}$.  So
  \begin{displaymath}
    \Meas {e\g}{K}{A} = \sup_n\setof{\aMeas {n}{e\g}{K}{A}}
    \le \Meas{e'\g}{K}{A} .
  \end{displaymath}
\end{proof}

In the other direction:

\begin{lemma}[$\iexprelset{\G}{} \circ \iciurelset{\G}{} \subseteq \iexprelset{\G}{}$]
  \label{lemma-exprel-absorbtive}
  If $\iexprel{\G}{}{e_1}{e_2}$ and $\iciurel{\G}{}{e_2}{e_3}$,
  then $\iexprel{\G}{}{e_1}{e_3}$.
\end{lemma}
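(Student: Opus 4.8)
The plan is to unfold both definitions and then chain the two hypotheses at a single, common target substitution and continuation. To establish $\iexprel{\G}{}{e_1}{e_3}$, I would fix an index $n$ and a pair of substitutions $\g,\g'$ with $\ityenvrel{n}{\G}{\g}{\g'}$, reducing the goal to $\exprel{}{n}{e_1\g}{e_3\g'}$. Unfolding once more, I fix $m \le n$, continuations $K,K'$ with $\icontrel{}{m}{K}{K'}$, and a measurable set $A$, so that the entire claim collapses to the single inequality $\aMeas{m}{e_1\g}{K}{A} \le \Meas{e_3\g'}{K'}{A}$.

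First I would apply the hypothesis $\iexprel{\G}{}{e_1}{e_2}$ to exactly this data. Since $\ityenvrel{n}{\G}{\g}{\g'}$ yields $\exprel{}{n}{e_1\g}{e_2\g'}$, and since $m \le n$ with $\icontrel{}{m}{K}{K'}$, I obtain $\aMeas{m}{e_1\g}{K}{A} \le \Meas{e_2\g'}{K'}{A}$. The significance of this step is that it bridges the source data $(\g,K)$ and the target data $(\g',K')$: the remaining work is now entirely on the target side, relating $\Meas{e_2\g'}{K'}{A}$ to $\Meas{e_3\g'}{K'}{A}$, where both expressions carry the same substitution $\g'$ and are measured against the same continuation $K'$. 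Next I would invoke the CIU hypothesis: because $\g'$ is a closing substitution for $\G$ and $K'$ is a closed continuation, instantiating $\iciurel{\G}{}{e_2}{e_3}$ at the substitution $\g'$, continuation $K'$, and set $A$ gives $\Meas{e_2\g'}{K'}{A} \le \Meas{e_3\g'}{K'}{A}$. Transitivity of $\le$ then delivers $\aMeas{m}{e_1\g}{K}{A} \le \Meas{e_3\g'}{K'}{A}$, which is precisely the desired inequality.

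The argument is essentially bookkeeping, and I expect no analytic obstacle: no interchange of the approximant supremum with a limit and no measurability reasoning beyond what is already available are required. The only point demanding care is the alignment across the two steps. The CIU relation is deliberately restricted to a \emph{single} substitution and a \emph{single} continuation, so it can only be applied once the logical-relation step has landed both expressions at the common pair $(\g',K')$. Applying CIU at the source pair $(\g,K)$ would be mismatched, since it is $e_2$ under $\g'$ and $K'$—not under $\g$ and $K$—that the logical relation on $e_1$ and $e_2$ produces. Recognizing that the asymmetric, two-substitution/two-continuation logical relation feeds exactly into the symmetric, one-substitution/one-continuation CIU relation is the conceptual crux of the (otherwise routine) proof.
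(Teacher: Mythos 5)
Your proof is correct and follows essentially the same route as the paper's: unfold the logical relation to reduce the goal to a single inequality, use $\iexprel{\G}{}{e_1}{e_2}$ to cross from $(\g,K)$ to $(\g',K')$, then instantiate the CIU hypothesis at the single pair $(\g',K')$ and chain by transitivity. The point you flag as the crux—that CIU can only be applied after the logical-relation step has landed both expressions at a common substitution and continuation—is exactly the structure of the paper's argument.
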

\begin{proof}
  Choose $n$ and ${\ityenvrel{\G}{n}{\g}{\g'}}$.  We must show that
  $\iexprel{\G}{n}{e_1\g}{e_3\g'}$.  So choose $m \le n$,
  ${\icontrelalt{}{m}{K}{K'}}$, and $A \in \SigmaR$.  Now we must show 
  $\aMeas{m}{e_1\g}{K}{A} \le \Meas {e_3\g'}{K'}{A}$.

  We have  $\iexprel{\G}{}{e_1}{e_2}$
  and ${\ityenvrel{\G}{n}{\g}{\g'}}$,
  so $\iexprel{}{n}{e_1\g}{e_2\g'}$, and by  $m \le n$ we have
  $\iexprel{}{m}{e_1\g}{e_2\g'}$. So
  \begin{align*}
    {\aMeas {n} {e_1\g} {K}{A}}
    & \le {\Meas {e_2\g'} {K'}{A}}
    \tag{by $\iexprel{}{m}{e_1\g}{e_2\g'}$} \\
    & \le {\Meas {e_3\g'} {K'}{A}}
    \tag{by $\iciurel{}{}{e_2}{e_3}$} \\
  \end{align*}
  Therefore $\iexprel{\G}{}{e_1}{e_3}$.
\end{proof}

\begin{lemma}[$\iciurelset{}{} \subseteq \iexprelset{}{}$]
  \label{lemma-ciu-subset-e}
  If $\iciurel{\G}{}{e}{e'}$ then $\iexprel{\G}{}{e}{e'}$.
\end{lemma}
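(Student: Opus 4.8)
The plan is to avoid any direct manipulation of $\mu^{(n)}$ and $\mu$ here, and instead close the loop between the two relations by combining the Fundamental Property with the absorptive composition lemma just established. The key observation is that both halves of the desired conclusion are already available: reflexivity of the logical relation comes from the Fundamental Property, and the CIU hypothesis is exactly the ``extra'' step that the absorptive lemma is designed to absorb.

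Concretely, I would argue as follows. First, note that the hypothesis $\iciurel{\G}{}{e}{e'}$ presupposes, by Definition~\ref{def-ciu}, that $\hastype{\G}{e}{\Exptype}$ (and likewise for $e'$). Applying the Fundamental Property to the well-formed expression $e$ yields reflexivity in the logical relation, namely $\iexprel{\G}{}{e}{e}$. Next I would invoke Lemma~\ref{lemma-exprel-absorbtive} with the instantiation $e_1 := e$, $e_2 := e$, and $e_3 := e'$. Its two premises, $\iexprel{\G}{}{e}{e}$ and $\iciurel{\G}{}{e}{e'}$, are precisely the reflexivity fact just derived and the hypothesis of the present lemma, so its conclusion is $\iexprel{\G}{}{e}{e'}$, as required.

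I expect no real obstacle: all of the measure-theoretic and step-indexing work has been front-loaded into the Fundamental Property and into Lemma~\ref{lemma-exprel-absorbtive}, so the present argument is purely a composition of two named results. The only point that needs care is the well-formedness side condition required to apply the Fundamental Property, and that is discharged for free because the open-term CIU relation is only defined between expressions well-formed in \G. Together with Lemma~\ref{lemma-e-subset-ciu} (the $\iexprelset{}{} \subseteq \iciurelset{}{}$ direction), this establishes the coincidence $\iexprelset{\G}{} = \iciurelset{\G}{}$, which is the payoff of this section.
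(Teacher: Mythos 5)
Your proposal is correct and is essentially identical to the paper's own proof: both derive $\iexprel{\G}{}{e}{e}$ from the Fundamental Property and then apply Lemma~\ref{lemma-exprel-absorbtive} with the CIU hypothesis to absorb $e'$ and conclude $\iexprel{\G}{}{e}{e'}$. The only difference is that you additionally spell out the well-formedness side condition, which the paper leaves implicit.
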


\begin{proof}
  Assume $\iciurel{\G}{}{e}{e'}$. By the Fundamental Property, we know
  $\iexprel{\G}{}{e}{e}$.  So we
  have $\iexprel{\G}{}{e}{e}$ and $\iciurel{\G}{}{e}{e'}$.  Hence,
  by Lemma~\ref{lemma-exprel-absorbtive},  $\iexprel{\G}{}{e}{e'}$.
\end{proof}

\begin{theorem}
  \label{thm-ciu-equals-e}
  $\iciurel{\G}{}{e}{e'}$ iff $\iexprel{\G}{}{e}{e'}$.
\end{theorem}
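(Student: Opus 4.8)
The plan is to observe that this biconditional is exactly the conjunction of the two containments already established, so no new argument is needed. For the direction $\iexprel{\G}{}{e}{e'} \implies \iciurel{\G}{}{e}{e'}$ I would simply cite Lemma~\ref{lemma-e-subset-ciu}. For the converse, $\iciurel{\G}{}{e}{e'} \implies \iexprel{\G}{}{e}{e'}$, I would cite Lemma~\ref{lemma-ciu-subset-e}. Taken together these give both inclusions $\iexprelset{\G}{} \subseteq \iciurelset{\G}{}$ and $\iciurelset{\G}{} \subseteq \iexprelset{\G}{}$, hence the claimed equivalence.

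Since the theorem is a repackaging, there is no genuine obstacle at this point; the substantive reasoning has already been discharged in the supporting lemmas. It is worth recalling where that work sits. The easy inclusion rests on the Fundamental Property (which supplies $\ityenvrel{n}{\G}{\g}{\g}$ and $\icontrel{}{n}{K}{K}$ for the diagonal substitution and continuation) together with the fact that $\mu$ is the supremum of its approximants, $\mu = \sup_n \mu^{(n)}$ (Lemma~\ref{lemma:monotonic-measures}). The more delicate inclusion is obtained by combining reflexivity $\iexprel{\G}{}{e}{e}$ (again the Fundamental Property) with the absorption Lemma~\ref{lemma-exprel-absorbtive}, which lets a CIU step on the right be absorbed into a logical-relation judgment.

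If I instead wanted a self-contained argument rather than a citation, I would unfold each direction: for the forward inclusion, chase the definitions of $\iexprelset{\G}{}$ and $\iciurelset{\G}{}$, instantiate the logical relation at the diagonal via reflexivity, and conclude using $\mu = \sup_n \mu^{(n)}$; for the reverse inclusion, instantiate absorption with the triple $e$, $e$, $e'$. Given the lemmas, however, the one-line combination is the natural proof, and I expect the author's proof to read essentially as ``Immediate from Lemma~\ref{lemma-e-subset-ciu} and Lemma~\ref{lemma-ciu-subset-e}.''
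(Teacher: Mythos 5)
Your proposal is correct and is exactly the paper's proof: the theorem is obtained immediately by combining Lemma~\ref{lemma-e-subset-ciu} and Lemma~\ref{lemma-ciu-subset-e}, one for each direction. Your recap of where the real work lives (Fundamental Property plus $\mu = \sup_n \mu^{(n)}$ for one inclusion, reflexivity plus the absorption Lemma~\ref{lemma-exprel-absorbtive} for the other) also matches the paper's development.
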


\begin{proof}
  Immediate from Lemmas~\ref{lemma-e-subset-ciu} and
  \ref{lemma-ciu-subset-e}.
\end{proof}

\section{Contextual Ordering}
\label{sec:ciu-ctx}

Finally, we arrive at the contextual order relation. We define the
contextual ordering as the largest preorder that is both
\emph{adequate}---that is, it distinguishes terms that have different
observable behavior by themselves---and \emph{compatible}---that is,
closed under context formation, and we show that the contextual
ordering, the CIU ordering, and the logical relation all coincide.
Thus in order to show two terms contextually ordered, it suffices to
use the friendlier machinery of the CIU relation.

\begin{definition}[{$\ictxrelset{\G}{}$}]
  \label{defn-ctxrel}
  ${\ictxrelset{}{}}$ is the largest family of relations {$\iarbrelset{\G}{}$}
  such that:

  \begin{enumerate}
  \item {$\iarbrelset{}{}$} is adequate, that is, if $\G = \emptyset$, then
    $\iarbrel{\G}{}{e}{e'}$ implies that for all measurable
    subsets $A$ of the reals, $\Meas{e}{\haltcont}{A} \le \Meas{e'}{\haltcont}{A}$.
  \item For each \G, ${\iarbrelset{\G}{}}$ is a preorder.
  \item The family of relations $\iarbrelset{}{}$ is compatible, that
    is, it is closed under the type rules for expressions:
    \begin{enumerate}

    \item If ${\iarbrel{\G,x}{}{e}{e'}}$, then
      ${\iarbrel {\G}{} {\lambdaexp{x}{e}} {\lambdaexp{x}{e'}}}$.

    \item If $\iarbrel{\G}{}{v_1}{v_1'}$ and
      $\iarbrel{\G}{}{v_2}{v_2'}$, then
      $\iarbrel{\G}{}{\appexp{v_1}{v_2}}{\appexp{v_1'}{v_2'}}$.

    \item If ${\iarbrel{\G}{}{v}{v'}}$, then
      ${\iarbrel{\G}{}{\factorexp{v}}{\factorexp{v'}}}$.

    \item If ${\iarbrel{\G}{}{e_1}{e_1'}}$ and
      ${\iarbrel{\G, x}{}{e_2}{e_2'}}$,\\ then
      ${\iarbrel{\G}{}
        {\letexp{x}{e_1}{e_2}}
        {\letexp{x}{e_1'}{e_2'}}}$.

    \item If $\iarbrel{\G}{}{v_1}{v_1'}$, \dots,
      $\iarbrel{\G}{}{v_n}{v_n'}$,\\ then
       ${\iarbrel{\G}{}{\opexp{\op^n}{v_1, \dots, v_n}}
         {\opexp{\op^n}{v_1', \dots, v_n'}}}$.

     \item If  $\iarbrel{\G}{}{v}{v'}$,
       ${\iarbrel{\G}{}{e_1}{e_1'}}$, and
       ${\iarbrel{\G}{}{e_2}{e_2'}}$, \\ then
        ${\iarbrel{\G}{}{\ifexp{v}{e_1}{e_2}} {\ifexp{v'}{e_1'}{e_2'}}}$.

    \end{enumerate}
  \end{enumerate}
\end{definition}

Note, as usual, that the union of any family of relations satisfying these
conditions also satisfies these conditions, so the union of all of
them is the largest such family of relations.


We prove that $\exprelset{\G}{}$, $\ciurelset{\G}{}$, and
$\ctxrelset{\G}{}$ by first showing that
 $\exprelset{\G}{} \subseteq \ctxrelset{\G}{}$
and then that
 $\ctxrelset{\G}{} \subseteq \ciurelset{\G}{}$.
Then, having caught $\ctxrelset{\G}{}$ between $\exprelset{\G}{}$
and $\ciurelset{\G}{}$---two relations that we have already proven
equivalent---we conclude that all of the relations coincide.


First, we must show that $\exprelset{\G}{} \subseteq
\ctxrelset{\G}{}$. The heart of that proof is showing that
$\exprelset{\G}{}$ is compatible in the sense of
Definition~\ref{defn-ctxrel}. That is \emph{nearly} handled by the
existing compatibility rules for $\exprelset{\G}{}$
(Lemma~\ref{lemma:compatibility}), except for an occasional mismatch
between expressions and values---that is, between $\exprelset{\G}{}$
and $\valrelset{\G}{}$ in the rules. So we need a lemma to address
the mismatch (Lemma~\ref{lemma-e-implies-v-on-closed-values}), which
itself needs the following lemma due to \citet{Pitts:2010}.

\begin{lemma}
  \label{lemma-swap}
  If ${\icontrel{}{n}{K}{K'}}$ and ${\ivalrel{}n{v}{v'}}$, then
    \begin{displaymath}
      \icontrel{}
        {n+2}
        {\letcont{z}{\appexp{z}{v}}{K}}
        {\letcont{z}{\appexp{z}{v'}}{K'}}
    \end{displaymath}
\end{lemma}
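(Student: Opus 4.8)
The plan is to unfold the definition of the continuation relation at index $n+2$ and to show that the two machine steps taken when an incoming value meets $\letcont{z}{\appexp{z}{v}}{K}$ are exactly paid for by the gap of $2$ in the step indices. Concretely, I fix $m \le n+2$, a pair with $\valrel{}{m}{w}{w'}$, and a measurable set $A$, so that the goal becomes
\begin{displaymath}
  \aMeas{m}{w}{\letcont{z}{\appexp{z}{v}}{K}}{A} \le \Meas{w'}{\letcont{z}{\appexp{z}{v'}}{K'}}{A}.
\end{displaymath}
The incoming value $w$ is first returned into the continuation, producing $\appexp{w}{v}$, and then, if $w$ is a $\lambda$-abstraction, a $\beta$-step fires; these are the two steps the index $n+2$ must cover.

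First I would dispose of the cases in which the left-hand side vanishes, so the inequality holds by nonnegativity of $\mu$ (Lemma~\ref{lemma:monotonic-measures}). If $m \le 1$, then after at most the return step the machine faces the non-value $\appexp{w}{v}$ with too few steps remaining, forcing $\aMeas{m}{w}{\letcont{z}{\appexp{z}{v}}{K}}{A} = 0$. If $w = \const r$, then $w' = \const r$ as well, and $\appexp{\const r}{v}$ is stuck, so again the left-hand side is $0$.

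The main case is $m \ge 2$ with $w = \lambdaexp{x}{e}$, whence $w' = \lambdaexp{x}{e'}$. Applying the return and application equations of Lemma~\ref{lemma:meas-eqns} to both sides reduces the goal to
\begin{displaymath}
  \aMeas{m-2}{e[v/x]}{K}{A} \le \Meas{e'[v'/x]}{K'}{A}.
\end{displaymath}
To establish this, I read off from $\valrel{}{m}{\lambdaexp{x}{e}}{\lambdaexp{x}{e'}}$ the implication that $\valrel{}{m''}{v}{v'}$ entails $\exprel{}{m''}{e[v/x]}{e'[v'/x]}$ for any $m'' < m$; here $\valrel{}{m''}{v}{v'}$ follows from the hypothesis $\ivalrel{}{n}{v}{v'}$ by antimonotonicity whenever $m'' \le n$. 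Finally, since $\icontrel{}{n}{K}{K'}$ yields $\icontrel{}{m-2}{K}{K'}$ (because $m-2 \le n$), the definition of the expression relation closes the goal.

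The delicate point—and the only real obstacle—is choosing the intermediate index $m''$ so that it simultaneously satisfies $m'' < m$, $m'' \le n$, and $m'' \ge m-2$, the last so that antimonotonicity upgrades $\exprel{}{m''}{e[v/x]}{e'[v'/x]}$ to the index $m-2$ at which the continuation can be plugged in. Taking $m'' = \min(m-1,n)$ works: the first two constraints are immediate, and $m'' \ge m-2$ holds because $m-1 \ge m-2$ and, since $m \le n+2$, also $n \ge m-2$. This is precisely where the $+2$ in the statement is consumed, so verifying these three bounds together is the crux of the argument.
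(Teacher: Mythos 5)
Your proof is correct and takes essentially the same route as the paper's: unfold the continuation relation at $n+2$, dispose of the cases where the left-hand side is $0$ (constants and small indices), and in the $\lambda$-case use the return and application equations to reduce the goal to $\aMeas{m-2}{e[v/x]}{K}{A} \le \Meas{e'[v'/x]}{K'}{A}$, closed by the definition of the value relation together with antimonotonicity. The only differences are cosmetic: the paper instantiates the intermediate index directly at $m-2$ (which already satisfies $m-2 < m$ and $m-2 \le n$), so your $\min(m-1,n)$ and the extra antimonotonicity step are unnecessary detours, while on the other hand you treat the $m \le 1$ corner explicitly where the paper silently restricts to $2 \le m \le n+2$.
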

\proofinappendix

\begin{lemma}
  \label{lemma-e-implies-v-on-closed-values}
  For all closed values $v$, if $\iexprel{}{}{v}{v'}$, then
  $\ivalrel{}{}{v}{v'}$.
\end{lemma}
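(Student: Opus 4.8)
The plan is to unfold the closed relations and case split on the syntactic form of the value $v$. Note that $\iexprel{}{}{v}{v'}$ means $\exprel{}{n}{v}{v'}$ for every $n$, and the goal $\ivalrel{}{}{v}{v'}$ means $\ivalrel{}{n}{v}{v'}$ for every $n$; since $v'$ is a closed value it is either a constant or a $\lambda$-abstraction. If $v = \const r$, I would instantiate $\exprel{}{0}{\const r}{v'}$ with the self-related continuation $\haltcont$ (self-related at every index by the Fundamental Property) and the set $A = \{r\}$. Because $\nconfig{\s}{\const r}{\haltcont}{\t}{1}$ is already halted, $\aMeas{0}{\const r}{\haltcont}{\{r\}} = 1$, forcing $\Meas{v'}{\haltcont}{\{r\}} \ge 1$. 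A $\lambda$-abstraction returned to $\haltcont$ yields a closure, which is never a real and hence contributes measure $0$; so $v'$ must be some $\const{r'}$, and $\indic{\{r\}}(r') \ge 1$ then forces $r' = r$. Thus $\ivalrel{}{n}{\const r}{\const r}$ holds by the first disjunct for every $n$.

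The interesting case is $v = \lambdaexp{x}{e}$, where I must first show that $v'$ is itself a $\lambda$-abstraction. I expect this to be the main obstacle: a termination-based argument will not work, since $e$ may diverge or yield weight $0$ on every argument, so a function could be expression-related to a constant through vacuously trivial inequalities. The resolution is to exploit the $\realop$ predicate. I would use the self-related continuation $K_{?} = \letcont{z}{\opexp{\realop}{z}}{\haltcont}$. Supposing toward a contradiction that $v' = \const{r'}$, I instantiate $\exprel{}{2}{\lambdaexp{x}{e}}{\const{r'}}$ at $K = K' = K_{?}$ and $A = \{0\}$: the left configuration returns the closure to $K_{?}$, so $\realop$ produces $\const 0$ and $\aMeas{2}{\lambdaexp{x}{e}}{K_{?}}{\{0\}} = 1$, whereas the right configuration returns $\const{r'}$, so $\realop$ produces $\const 1$ and $\Meas{\const{r'}}{K_{?}}{\{0\}} = 0$, giving the impossible $1 \le 0$. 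Hence $v'$ is a $\lambda$-abstraction, and after renaming its bound variable I may write $v' = \lambdaexp{x}{e'}$.

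It then remains to verify the application clause: for every $m < n$ and every $\ivalrel{}{m}{w}{w'}$, that $\exprel{}{m}{e[w/x]}{e'[w'/x]}$. Unfolding, I fix $m' \le m$, a measurable $A$, and related continuations $\icontrel{}{m'}{K}{K'}$, and must prove $\aMeas{m'}{e[w/x]}{K}{A} \le \Meas{e'[w'/x]}{K'}{A}$. By antimonotonicity $\ivalrel{}{m'}{w}{w'}$, so Lemma~\ref{lemma-swap} supplies $\icontrel{}{m'+2}{\letcont{z}{\appexp{z}{w}}{K}}{\letcont{z}{\appexp{z}{w'}}{K'}}$. Instantiating the hypothesis $\exprel{}{m'+2}{\lambdaexp{x}{e}}{\lambdaexp{x}{e'}}$ at these continuations and at $A$ yields $\aMeas{m'+2}{\lambdaexp{x}{e}}{\letcont{z}{\appexp{z}{w}}{K}}{A} \le \Meas{\lambdaexp{x}{e'}}{\letcont{z}{\appexp{z}{w'}}{K'}}{A}$. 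Finally I simplify both sides with Lemma~\ref{lemma:meas-eqns}: on the left the return equation followed by the application equation rewrites to $\aMeas{m'}{e[w/x]}{K}{A}$, the two reduction steps exactly absorbing the index shift of $+2$, and on the right the corresponding index-free equations rewrite to $\Meas{e'[w'/x]}{K'}{A}$. This is precisely the required inequality, which discharges the clause and completes the proof.
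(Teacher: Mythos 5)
Your proof is correct and takes essentially the same route as the paper's: the same case split on the shapes of $v$ and $v'$, the same use of $\haltcont$ with a singleton set of reals and of $\realop$ to rule out mismatched shapes, and the same combination of Lemma~\ref{lemma-swap} with the measure equations (Lemma~\ref{lemma:meas-eqns}) to discharge the application clause. The only difference is bookkeeping: the paper proves the sharper indexed statement that $\iexprel{}{n+3}{v}{v'}$ implies $\ivalrel{}{n}{v}{v'}$, while you instantiate the limit hypothesis directly at the indices you need ($2$ for the $\realop$ test, $m'+2$ for the body clause), which is equally valid for the lemma as stated.
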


\begin{proof}
  We will show that for all closed values $v$, $v'$, if $\iexprel{}{n+3}{v}{v'}$, then
  $\ivalrel{}{n}{v}{v'}$, from which the lemma follows.

  If $v = \const r$ and $v' = \const {r'}$, then $r = r'$ and thus
  $\ivalrel{}{}{\const r}{\const {r'}}$ because otherwise we would have
  $\Meas{\const r}{\haltcont}{\{r\}} = I_{\{r\}}(r) = 1$ and
  $\Meas{\const {r'}}{\haltcont}{\{r\}} = I_{\{r\}}(r') = 0$,
  violating the assumption $\iexprel{}{}{\const r}{\const {r'}}$.
  
  If only one of $v$ and $v'$ is a constant, then $\iexprel{}{n+3}{v}{v'}$
  is impossible, since constants and lambda-expressions are distinguishable
  by \realop (which requires 3 steps to do so).

  So assume $v = \lambdaexp{x}{e}$ and $v' =
  \lambdaexp{x}{e'}$. To establish $\ivalrel{}{n}{v}{v'}$, choose $m <
  n$ and $\ivalrel{}{m}{u}{u'}$.  We must show that
  $\iexprel{}{m}{e[u/x]}{e'[u'/x]}$.  To do that, choose $p \le m$,
  $\icontrel{}{p}{K}{K'}$, and $A \in \SigmaR$.  We must show that
  \begin{displaymath}
    {\aMeas p {e[u/x]} {K}{A}} \le {\Meas {e'[u'/x]} {K'}{A}}
  \end{displaymath}

  Let $K_1 = {\letcont {f} {\appexp{f}{u}}{K}}$ and $K_1' =
  {\letcont{f}{\appexp{f}{u'}}{K'}}$.  By monotonicity,
    $\ivalrel{}{p}{u}{u'}$.  By Lemma~\ref{lemma-swap},
    $\icontrel{}{p+2}{K_1'}{K_1'}$.
    Furthermore, $p \le m < n$, so $p+2 \le n+1$ and therefore
  ${\iexprel{}{p+2}{\lambdaexp{x}{e}} {\lambdaexp{x}{e'}}}$.  And
      furthermore, we have
      \begin{displaymath}
        \configsw{\lambdaexp{x}{e}}{K_1} \to
        {\configsw {\appexp {\lambdaexp{x}{e}} {u}} {K}}
        \to {\configsw {e[u/x]} {K}}
      \end{displaymath}
      and similarly on the primed side.

      We can put the results together to get
      \begin{align*}
        {\aMeas p {e[u/x]} K{A}}
        & = {\aMeas {p+2} {\lambdaexp{x}{e}} {K_1}{A}} \\
        & \le {\Meas {\lambdaexp{x}{e'}} {K_1'}{A}} \\
        & = {\Meas {e'[u'/x]} {K'}{A}}
      \end{align*}
\end{proof}

\begin{theorem}
  \label{thm-exp-subset-ctx}
  $\exprelset{\G}{} \subseteq \ctxrelset{\G}{}$.
\end{theorem}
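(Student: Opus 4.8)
The plan is to lean on the maximality in Definition~\ref{defn-ctxrel}: since $\ctxrelset{\G}{}$ is the \emph{largest} family that is adequate, a preorder at each $\G$, and compatible, it suffices to check that the family $\exprelset{\G}{}$ itself enjoys all three properties. The inclusion $\exprelset{\G}{} \subseteq \ctxrelset{\G}{}$ is then immediate, and in particular we never have to reason about an arbitrary enclosing context.

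Two of the three properties are essentially already in hand. For adequacy, suppose $\G = \emptyset$ and $\iexprel{}{}{e}{e'}$. Since $\icontrel{}{m}{\haltcont}{\haltcont}$ for every $m$ by the Fundamental Property, instantiating the definition of $\exprelset{}{}$ at $K = K' = \haltcont$ gives $\aMeas{m}{e}{\haltcont}{A} \le \Meas{e'}{\haltcont}{A}$ for all $m$, and passing to the supremum via Lemma~\ref{lemma:misc-properties} yields $\Meas{e}{\haltcont}{A} \le \Meas{e'}{\haltcont}{A}$; this is just Lemma~\ref{lemma-e-subset-ciu} read off at the halt continuation. For the preorder conditions, reflexivity of $\exprelset{\G}{}$ on well-formed expressions is exactly the Fundamental Property, and transitivity follows by chaining the two facts we already proved about the CIU relation: from $\iexprel{\G}{}{e_1}{e_2}$ and $\iexprel{\G}{}{e_2}{e_3}$, Lemma~\ref{lemma-e-subset-ciu} turns the second hypothesis into $\iciurel{\G}{}{e_2}{e_3}$, and Lemma~\ref{lemma-exprel-absorbtive} then absorbs it to conclude $\iexprel{\G}{}{e_1}{e_3}$.

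The real work is compatibility. Each of the six closure conditions 3(a)--(f) of Definition~\ref{defn-ctxrel} is the image of a rule in Figure~\ref{fig:compatibility}, so the Compatibility Lemma (Lemma~\ref{lemma:compatibility}) supplies almost everything. What is left is a bookkeeping discrepancy of \emph{sorts}: Definition~\ref{defn-ctxrel} phrases every hypothesis and conclusion at the expression level (its $\iarbrelset{\G}{}$ relates expressions), whereas the rules for application, operations, conditionals, and \ttop{factor} take their value subterms as hypotheses in $\valrelset{\G}{}$, and the $\lambda$-rule dually \emph{produces} a conclusion in $\valrelset{\G}{}$. I would close this in the two obvious directions: to demote a value-level conclusion back to the expression level (the $\lambda$-case) I use the inclusion rule $\ivalrel{\G}{}{v}{v'} \implies \iexprel{\G}{}{v}{v'}$ of Figure~\ref{fig:compatibility}; and to promote an expression-level hypothesis on a value to the value-level hypothesis the rule demands I invoke Lemma~\ref{lemma-e-implies-v-on-closed-values}, first lifted from closed to open values by quantifying over closing substitutions $\g,\g'$ with $\ityenvrel{n}{\G}{\g}{\g'}$ and applying the closed statement to $v\g$ and $v'\g'$.

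I expect the promotion step to be the main obstacle, and it is exactly why the detour through Lemma~\ref{lemma-swap} was needed. Lemma~\ref{lemma-e-implies-v-on-closed-values} does not recover a value from an expression for free: extracting relatedness of the underlying closures costs a bounded number of machine steps (the construction $\letcont{z}{\appexp{z}{v}}{K}$ of Lemma~\ref{lemma-swap}, together with the fact that \realop separates constants from closures, accounts for the characteristic $n \mapsto n+3$ index shift in its proof). The delicate point is to thread this shift through the closing-substitution quantifiers so that the value-level hypotheses land at precisely the indices the Figure~\ref{fig:compatibility} calculations consume; once the index arithmetic is discharged, the remaining obligations are the same $\mu^{(n)}$/$\mu$ manipulations already carried out in Lemma~\ref{lemma:compatibility}, and the three defining properties are in place.
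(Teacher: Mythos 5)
Your skeleton matches the paper's: appeal to maximality in Definition~\ref{defn-ctxrel}, then verify adequacy, the preorder conditions, and compatibility for $\exprelset{\G}{}$; your adequacy and preorder arguments are fine. The gap is in how you discharge the value/expression mismatch in the compatibility step. You propose to lift Lemma~\ref{lemma-e-implies-v-on-closed-values} from closed to open values by quantifying over related substitutions $(\g,\g') \in \ityenvrelset{\G}{n}$ and applying the closed statement to $v\g$ and $v'\g'$. This cannot work as stated, because the index shift in that lemma goes the wrong way. The step-indexed relations are antimonotone in the index ($\ivalrelset{}{n} \supseteq \ivalrelset{}{n+1}$), so a pair $(\g,\g') \in \ityenvrelset{\G}{n}$ need not lie in $\ityenvrelset{\G}{n+3}$. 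Hence from $\iexprel{\G}{}{v}{v'}$ and $(\g,\g') \in \ityenvrelset{\G}{n}$ you obtain $\exprel{}{m}{v\g}{v'\g'}$ only for $m \le n$, whereas the indexed content of Lemma~\ref{lemma-e-implies-v-on-closed-values} (namely $\iexprel{}{n+3}{v}{v'} \implies \ivalrel{}{n}{v}{v'}$) needs expression-relatedness at index $n+3$ to deliver value-relatedness at $n$. What you can actually conclude is only $\ivalrel{}{n-3}{v\g}{v'\g'}$, and that is too weak for the rules you then want to apply: for example, the application rule of Figure~\ref{fig:compatibility} consumes $\ivalrel{}{m}{v_1\g}{v_1'\g'}$ for \emph{every} $m \le n$, in particular at $m = n$, because the $\lambda$-clause at index $m$ is what licenses relating the substituted bodies at index $m-1$. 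So the index arithmetic you defer is not bookkeeping; it is a genuine obstruction to the lifted lemma in the form you state it.

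The paper never states an open-terms version of Lemma~\ref{lemma-e-implies-v-on-closed-values}; instead it detours through CIU, whose definition involves a \emph{single} closing substitution applied to both sides and carries no step index. From $\iexprel{\G}{}{v}{v'}$ it derives $\iciurel{\G}{}{v}{v'}$ (Lemma~\ref{lemma-e-subset-ciu}), instantiates at an arbitrary closing $\g$ to get $\iciurel{}{}{v\g}{v'\g}$, converts back with Lemma~\ref{lemma-ciu-subset-e} to obtain $\iexprel{\emptyset}{}{v\g}{v'\g}$ --- closed expression-relatedness at \emph{every} index, so the $+3$ shift in Lemma~\ref{lemma-e-implies-v-on-closed-values} is harmless --- then applies the closed instance of the relevant compatibility rule from Lemma~\ref{lemma:compatibility}, and finally returns to open terms through CIU again. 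This diagonal, index-free detour is exactly what absorbs the shift that blocks your version, and it is available to you: the equality $\iexprelset{\G}{} = \iciurelset{\G}{}$ is Theorem~\ref{thm-ciu-equals-e}, whose ingredients you already invoke for transitivity. Replacing your lifted lemma with this route repairs the proof.
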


\begin{proof}
  We will show that $\exprelset{}{}$ forms a family of reflexive
  preorders that is adequate and compatible.
  Each $\exprelset{\G}{}$ is reflexive by the Fundamental Property,
  and is a preorder because it is equal to $\ciurelset{\G}{}$, which
  is a preorder.
  To show that it is adequate, observe that
  $\contrel{}{}{\haltcont}{\haltcont}$ by Lemma~\ref{lemma:compatibility},
  hence for any measurable subset $A$ of reals,
  ${\exprel{\G}{}{e}{e'}}$ implies $\Meas{e}{\haltcont}{A} = \Meas{e'}{\haltcont}{A}$.

  The $\exprelset{}{}$-compatibility rules
  (Lemma~\ref{lemma:compatibility}) are almost exactly what is needed
  for $\ctxrelset{}{}$-compatibility.  The exceptions are in the
  application, operation, \ttop{if}, and \ttop{factor} rules
  where their hypotheses refer to $\valrelset{\G}{}$ rather than
  $\exprelset{\G}{}$.  We fill the gap with
  Lemma~\ref{lemma-e-implies-v-on-closed-values}.  We show how this is
  done for $\factorexp{v}$; the other cases are similar.

    \begin{align*}
      \iexprel{\G}{}{v}{v'} 
      & \implies \iciurel{\G}{}{v}{v'} \\
      & \implies (\forall \g)(\iciurel{\emptyset}{}{v\g}{v'\g}) \\
      & \implies (\forall \g)(\iexprel{\emptyset}{}{v\g}{v'\g}) \\
      & \implies (\forall \g)(\ivalrel{\emptyset}{}{v\g}{v'\g})
      \tag{Lemma~\ref{lemma-e-implies-v-on-closed-values}} \\
      & \implies (\forall \g)
                    (\iexprel{\emptyset}{}
                      {\factorexp {v\g}}
                      {\factorexp {v'\g}})
      \tag{Lemma~\ref{lemma:compatibility}} \\
      & \implies (\forall \g)
                    (\iciurel{\emptyset}{}
                      {\factorexp {v\g}}
                      {\factorexp {v'\g}}) \\
      & \implies \iciurel{\G}{}{\factorexp v}{\factorexp {v'}} \\
      & \implies \iexprel{\G}{}{\factorexp v}{\factorexp {v'}}
    \end{align*}
\end{proof}


Next, we must show that $\ctxrelset{\G}{} \subseteq \ciurelset{\G}{}$ by
induction on the closing substitution and then induction on the
continuation. We use the following two lemmas to handle the closing substitution.

\begin{lemma}
  \label{lemma-beta-value-ciu}
  If {} ${\hastype{\G,x}{e}{\Exptype}}$ and $\hastype{\G}{v}{\Exptype}$, then
  \[
    \iciurel{\G}{}{e[v/x]}{\appexp{\lambdaexp{x}{e}}{v}}
    \text{ ~and~ }
    \iciurel{\G}{}{\appexp{\lambdaexp{x}{e}}{v}}{e[v/x]}.
  \]
\end{lemma}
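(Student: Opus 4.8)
The plan is to collapse the statement to the single measure equation for application that we already possess (Lemma~\ref{lemma:meas-eqns}), so that the real content is almost entirely definitional bookkeeping. By Definition~\ref{def-ciu}, to establish $\iciurel{\G}{}{e[v/x]}{\appexp{\lambdaexp{x}{e}}{v}}$ together with its converse it suffices to fix an arbitrary closing substitution $\g$ and show that the two closed expressions $(e[v/x])\g$ and $(\appexp{\lambdaexp{x}{e}}{v})\g$ induce the same measure under every closed continuation $K$ and every measurable $A \in \SigmaR$. Since we will in fact obtain an \emph{equality} of measures rather than a mere inequality, both directions of the CIU relation---and hence both halves of the conclusion---fall out simultaneously from one computation.

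First I would push $\g$ through the syntax. Because $\g$ substitutes closed values for the variables of $\G$, the bound variable $x$ lies outside $\dom(\g)$, and (under the usual freshness convention) $\g$ does not capture $x$; the standard substitution lemmas then give $(\appexp{\lambdaexp{x}{e}}{v})\g = \appexp{\lambdaexp{x}{e\g}}{v\g}$ and $(e[v/x])\g = (e\g)[v\g/x]$. Writing $e_0 = e\g$ and $v_0 = v\g$, the goal reduces to showing $\Meas{(e_0)[v_0/x]}{K}{A} = \Meas{\appexp{\lambdaexp{x}{e_0}}{v_0}}{K}{A}$ for all $K$ and $A$.

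This is precisely the index-free application equation of Lemma~\ref{lemma:meas-eqns}, namely $\Meas{\appexp{\lambdaexp{x}{e_0}}{v_0}}{K}{A} = \Meas{(e_0)[v_0/x]}{K}{A}$, so a single appeal to that equation discharges the goal in both directions at once.

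I do not anticipate any genuine obstacle. The only point requiring care is the substitution commutation in the second step---specifically confirming that $\g$ leaves $x$ untouched and does not capture it, so that the two syntactic substitutions compose as claimed. All of the probabilistic substance is already sealed inside the application equation of Lemma~\ref{lemma:meas-eqns}, whose proof (unfolding the definition of $\mu$ and invoking the operational step for $\beta_v$-reduction) is where the actual work was carried out.
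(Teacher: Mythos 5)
Your proposal is correct and follows essentially the same route as the paper's own proof: fix a closing substitution $\g$, reduce to the closed case, and invoke the (index-free) application equation of Lemma~\ref{lemma:meas-eqns} to get an \emph{equality} of measures, which yields both CIU directions at once. The only cosmetic difference is that the paper phrases this via the single $\beta_v$ reduction step of the configuration while you cite the measure equation directly, and your explicit check that $\g$ commutes with the substitution $[v/x]$ is a detail the paper leaves implicit.
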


\begin{proof}
  Let \g be a closing substitution for \G. Then for any \s, closed
  $K$, and $w$, by Lemmas~\ref{lemma-appexp} and
  \ref{lemma:monotonic-measures}.4 we have
    \begin{displaymath}
      {\nconfig {\s} {\appexp{\lambdaexp{x}{e\g}}{v\g}} {K} {\t} {w}}
      \to
      {\configsw {e\g[v\g/x]} {K} }
    \end{displaymath}
    Therefore for any $A \in \SigmaR$,
    ${\Meas {\appexp{\lambdaexp{x}{e\g}}{v\g}} {K}{A}} = {\Meas {e\g[v\g/x]} {K}{A}}$.
\end{proof}

\begin{lemma}
  \label{lemma-ctx-closed-under-subst}
  If ${\ctxrel{\G,x}{}{e}{e'}}$, and ${\ctxrel {\G} {} {v} {v'}}$,
  then ${\ctxrel {\G} {} {e[v/x]} {e'[v'/x]}}$.
\end{lemma}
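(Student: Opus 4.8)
The plan is to route the two substituted expressions through their $\beta_v$-redexes: build related redexes by compatibility, rewrite each redex to the corresponding substitution via the $\beta_v$ fact for the CIU relation, and then chain everything by transitivity. Concretely, I would first apply the compatibility rules for $\ctxrelset{}{}$ guaranteed by Definition~\ref{defn-ctxrel}. The $\lambda$-rule (3a) turns the hypothesis $\ctxrel{\G,x}{}{e}{e'}$ into $\ctxrel{\G}{}{\lambdaexp{x}{e}}{\lambdaexp{x}{e'}}$, and then the application rule (3b), combined with the other hypothesis $\ctxrel{\G}{}{v}{v'}$, yields
\[
  \ctxrel{\G}{}{\appexp{\lambdaexp{x}{e}}{v}}{\appexp{\lambdaexp{x}{e'}}{v'}}.
\]
Both $\lambdaexp{x}{e}$ and $v$ are syntactic values, so they are legitimate arguments to the application compatibility rule; well-formedness of all the terms in the appropriate contexts is inherited from membership in the $\ctxrelset{}{}$ family.

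Next I would convert the two redexes back into the substituted forms using $\beta_v$. Lemma~\ref{lemma-beta-value-ciu} supplies $\iciurel{\G}{}{e[v/x]}{\appexp{\lambdaexp{x}{e}}{v}}$ and $\iciurel{\G}{}{\appexp{\lambdaexp{x}{e'}}{v'}}{e'[v'/x]}$ (the relevant scoping hypotheses hold automatically, since the terms come from the $\ctxrelset{}{}$ family). To move these CIU facts into the contextual setting I would invoke the inclusion $\ciurelset{\G}{} \subseteq \ctxrelset{\G}{}$, which is already available at this point: by Theorem~\ref{thm-ciu-equals-e} we have $\ciurelset{\G}{} = \exprelset{\G}{}$, and by Theorem~\ref{thm-exp-subset-ctx} we have $\exprelset{\G}{} \subseteq \ctxrelset{\G}{}$. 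Hence both $\beta_v$ steps hold for $\ctxrelset{\G}{}$ as well.

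Finally I would stitch the pieces together. Since each $\ctxrelset{\G}{}$ is a preorder (Definition~\ref{defn-ctxrel}(2)), transitivity applied to
\begin{gather*}
  \ctxrel{\G}{}{e[v/x]}{\appexp{\lambdaexp{x}{e}}{v}}, \\
  \ctxrel{\G}{}{\appexp{\lambdaexp{x}{e}}{v}}{\appexp{\lambdaexp{x}{e'}}{v'}}, \\
  \ctxrel{\G}{}{\appexp{\lambdaexp{x}{e'}}{v'}}{e'[v'/x]}
\end{gather*}
collapses to $\ctxrel{\G}{}{e[v/x]}{e'[v'/x]}$, as required. I do not expect a genuine obstacle: the argument is purely algebraic once the ingredients are in place. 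The only points demanding care are directional bookkeeping---orienting the two $\beta_v$ facts so the chain composes (the first left-to-right, the second right-to-left, both furnished by Lemma~\ref{lemma-beta-value-ciu})---and making sure we rely only on the already-established inclusion $\ciurelset{\G}{} \subseteq \ctxrelset{\G}{}$ and never on the reverse inclusion $\ctxrelset{\G}{} \subseteq \ciurelset{\G}{}$, which is precisely the containment this lemma is being used to establish.
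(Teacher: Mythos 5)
Your proposal is correct and follows essentially the same route as the paper's own proof: compatibility of $\ctxrelset{}{}$ produces ${\ctxrel{\G}{}{\appexp{\lambdaexp{x}{e}}{v}}{\appexp{\lambdaexp{x}{e'}}{v'}}}$, Lemma~\ref{lemma-beta-value-ciu} together with the inclusion $\ciurelset{\G}{} \subseteq \ctxrelset{\G}{}$ (via Theorems~\ref{thm-ciu-equals-e} and~\ref{thm-exp-subset-ctx}) converts both $\beta_v$ facts into contextual orderings, and transitivity of the preorder chains the three pieces. Your closing remark about relying only on the already-established inclusion and not its converse is exactly the point that keeps the paper's development non-circular.
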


\begin{proof}
  From the assumptions and the compatibility of $\ctxrelset{}{}$, we
  have
    \begin{equation}
      \label{eq:a}
      {\ctxrel{\G}{} {\appexp {\lambdaexp{x}{e}} {v}} {\appexp {\lambdaexp{x}{e'}} {v'}}}
    \end{equation}

So now we have:

\begin{align*}
  & {\ciurel{\G}{} {e[v/x]} {\appexp {\lambdaexp{x}{e}} {v}}}
  \tag{Lemma~\ref{lemma-beta-value-ciu}} \\
  & \implies {\ctxrel{\G}{} {e[v/x]} {\appexp {\lambdaexp{x}{e}} {v}}}
  \tag{$\ciurelset{\G}{} \subseteq \ctxrelset{\G}{}$} \\
  & \implies {\ctxrel{\G}{} {e[v/x]} {\appexp {\lambdaexp{x}{e'}} {v'}}}
  \tag{Equation (\ref{eq:a}) and transitivity of ${\ctxrelset{\G}{}}$} \\
  & \implies \ctxrel{\G}{}{e[v/x]}{e'[v'/x]}
  \tag{Lemma~\ref{lemma-beta-value-ciu} and transitivity of ${\ctxrelset{\G}{}}$}
\end{align*}
\end{proof}


Now we are ready to complete the theorem.  Here we need to use
${\ciurelset{}{}}$ rather than ${\exprelset{}{}}$, so that we can deal
with only one continuation rather than two.

\begin{theorem}[$\ctxrelset{\G}{} \subseteq \ciurelset{\G}{}$]
  \label{thm-ctx-subset-ciu}
  If $\ctxrel{\G}{}{e}{e'}$, then $\ciurel{\G}{}{e}{e'}$
\end{theorem}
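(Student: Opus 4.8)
The plan is to unfold the definition of $\ciurelset{\G}{}$ and reduce the general (open) case to a closed case, which I then settle by induction on the continuation. By Definition~\ref{def-ciu}, establishing $\ciurel{\G}{}{e}{e'}$ amounts to showing that for every closing substitution $\g$, every closed $K$, and every $A \in \SigmaR$ we have $\Meas{e\g}{K}{A} \le \Meas{e'\g}{K}{A}$. So first I would dispose of the substitution. Each $\ctxrelset{\G}{}$ is a preorder (Definition~\ref{defn-ctxrel}) and hence reflexive, so every closed value $\g(x)$ satisfies $\ctxrel{\emptyset}{}{\g(x)}{\g(x)}$; applying Lemma~\ref{lemma-ctx-closed-under-subst} once per variable of $\G$ then yields $\ctxrel{\emptyset}{}{e\g}{e'\g}$. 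This reduces the theorem to the following closed claim: for all closed $e_0, e_0'$ with $\ctxrel{\emptyset}{}{e_0}{e_0'}$, all continuations $K$, and all $A \in \SigmaR$, one has $\Meas{e_0}{K}{A} \le \Meas{e_0'}{K}{A}$.

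I would prove this closed claim by induction on the structure of $K$, crucially keeping the expression pair universally quantified so that the induction hypothesis may be reapplied to a \emph{different} pair at a smaller continuation. The base case $K = \haltcont$ is exactly adequacy (Definition~\ref{defn-ctxrel}). For the inductive case $K = \letcont{x}{e_c}{K'}$, the idea is to absorb the top continuation frame into a \ttop{let}-expression. By the \ttop{let} equation of Lemma~\ref{lemma:meas-eqns}, $\Meas{e_0}{\letcont{x}{e_c}{K'}}{A} = \Meas{\letexp{x}{e_0}{e_c}}{K'}{A}$, and likewise on the primed side. Since $e_c$ is well-formed with free variable $x$, reflexivity gives $\ctxrel{\{x\}}{}{e_c}{e_c}$, and the \ttop{let}-compatibility rule of Definition~\ref{defn-ctxrel} combines this with $\ctxrel{\emptyset}{}{e_0}{e_0'}$ to yield $\ctxrel{\emptyset}{}{\letexp{x}{e_0}{e_c}}{\letexp{x}{e_0'}{e_c}}$. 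These are closed expressions, so the induction hypothesis applied at $K'$ to this new pair gives $\Meas{\letexp{x}{e_0}{e_c}}{K'}{A} \le \Meas{\letexp{x}{e_0'}{e_c}}{K'}{A}$; chaining the two measure equalities around this inequality closes the step. Finally, applying the closed claim to $e\g$ and $e'\g$ for each closing $\g$ delivers $\ciurel{\G}{}{e}{e'}$.

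The hard part is setting up the continuation induction correctly. The inductive step never compares $e_0$ and $e_0'$ under $K'$ directly; instead it rewrites the measure so that the top frame becomes part of a larger closed \ttop{let}-expression, and the induction hypothesis is then invoked on that new expression pair under the shorter continuation $K'$. This is exactly why the claim must quantify over all related closed expression pairs rather than fixing $e_0, e_0'$, and why the argument needs both reflexivity of $\ctxrelset{\{x\}}{}$ on the frame body and the \ttop{let} case of Lemma~\ref{lemma:meas-eqns}. It is also the reason we work here with $\ciurelset{}{}$ (a single continuation) rather than $\exprelset{}{}$ (related pairs of continuations): the frame-absorption rewriting only needs to track one continuation at a time, so a single structural induction on $K$ suffices.
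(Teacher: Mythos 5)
Your proposal is correct and follows essentially the same route as the paper's proof: reduce to closed terms via Lemma~\ref{lemma-ctx-closed-under-subst} (with reflexivity of the preorder supplying the related values), then induct on the continuation with the expression pair universally quantified, using adequacy at $\haltcont$ and, in the inductive step, absorbing the top frame into a \ttop{let} via Lemma~\ref{lemma:meas-eqns} together with \ttop{let}-compatibility and reflexivity on the frame body. Your closing observation about why a single continuation (CIU rather than $\exprelset{}{}$) makes the structural induction work is exactly the remark the paper makes before its proof.
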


\begin{proof}
   By the preceding lemma, we have $\ctxrel{}{}{e\g}{e'\g}$.   So it
  suffices to show that for all $A \in \SigmaR$,
  if $\ctxrel{\emptyset}{}{e}{e'}$ and
  $\hastype{}{K}{\Conttype}$,
  then ${\Meas{e}{K}{A}} = {\Meas{e'}{K}{A}}$.

  The proof proceeds by induction on $K$ such that
  $\hastype{}{K}{\Conttype}$.  The
  induction hypothesis on
  $K$ is: for all closed $e$,
  $e'$, if ${\ctxrel{\emptyset}{}{e}{e'}}$, then
  ${\Meas e K A = \Meas {e'} K A}$.

  If $K = \haltcont$ and ${\ctxrel{\emptyset}{}{e}{e'}}$,
  then ${\Meas{e}{\haltcont}{A} = \Meas{e'}{\haltcont}{A}}$
  by the adequacy of $\ctxrelset{\emptyset}{}$.

  For the induction step, consider $\letcont{x}{e_1}{K}$, where
  {\hastype{x}{e_1}{\Exptype}}. Choose ${\ctxrel{\emptyset}{}{e}{e'}}$.
  We must show $\Meas{e}{\letcont{x}{e_1}{K}}{A} \le \Meas{e'}{\letcont{x}{e_1}{K}}{A}$.

  By the compatibility of $\ctxrelset{}{}$, we have
  \begin{equation}
    \label{eq:c}
    {\ctxrel {\emptyset} {} {\letexp{x}{e}{e_1}}{\letexp{x}{e'}{e_1}}}
  \end{equation}

  Then we have
  \begin{align*}
    \Meas{e}{\letcont{x}{e_1}{K}}{A}
    & = \Meas{\letexp{x}{e}{e_1}}{K}{A}
    \tag{Lemma~\ref{lemma-let}} \\
    & \le \Meas{\letexp{x}{e'}{e_1}}{K}{A}
    \tag{by IH at $K$, applied to (\ref{eq:c})} \\
    & = \Meas{e'}{\letcont{x}{e_1}{K}}{A}
    \tag{Lemma~\ref{lemma-let}}
  \end{align*}
Thus completing the induction step.
\end{proof}

Summarizing the results:

\begin{theorem}
  \label{thm-ciu=ctx}
  For all \G, $\ciurelset{\G}{} = \exprelset{\G}{} = \ctxrelset{\G}{}$.
\end{theorem}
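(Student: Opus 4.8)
The plan is to obtain this final theorem as an immediate corollary of the three results that precede it, using exactly the ``sandwiching'' strategy already announced in the text: catch $\ctxrelset{\G}{}$ between $\exprelset{\G}{}$ and $\ciurelset{\G}{}$, two relations that have already been proven to coincide. Concretely, Theorem~\ref{thm-ciu-equals-e} gives $\exprelset{\G}{} = \ciurelset{\G}{}$; Theorem~\ref{thm-exp-subset-ctx} gives $\exprelset{\G}{} \subseteq \ctxrelset{\G}{}$; and Theorem~\ref{thm-ctx-subset-ciu} gives $\ctxrelset{\G}{} \subseteq \ciurelset{\G}{}$. I would simply chain these into the inclusion cycle
\[
\exprelset{\G}{} \subseteq \ctxrelset{\G}{} \subseteq \ciurelset{\G}{} = \exprelset{\G}{},
\]
which collapses all three containments to equalities. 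Since every inclusion in the cycle is forced to be an equality, we conclude $\exprelset{\G}{} = \ctxrelset{\G}{} = \ciurelset{\G}{}$ for each fixed $\G$, which is the desired statement.

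The order of citation matters only for readability, not for correctness: I would invoke Theorem~\ref{thm-exp-subset-ctx} first to push $\exprelset{\G}{}$ into $\ctxrelset{\G}{}$, then Theorem~\ref{thm-ctx-subset-ciu} to push $\ctxrelset{\G}{}$ into $\ciurelset{\G}{}$, and finally Theorem~\ref{thm-ciu-equals-e} to close the loop back to $\exprelset{\G}{}$. Because the theorem is quantified over all $\G$, the same argument applies uniformly; there is no induction or case analysis to perform here, as all the substantive reasoning about substitutions, continuations, and the genericity and interpolation machinery has already been discharged in the earlier theorems.

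There is no real obstacle at this stage---the difficulty was front-loaded into Theorems~\ref{thm-exp-subset-ctx} and~\ref{thm-ctx-subset-ciu}, particularly the need in the former for Lemma~\ref{lemma-e-implies-v-on-closed-values} (and in turn Lemma~\ref{lemma-swap}) to reconcile the expression/value mismatch in the compatibility rules, and in the latter for the induction on the closing substitution and continuation supported by Lemmas~\ref{lemma-beta-value-ciu} and~\ref{lemma-ctx-closed-under-subst}. The only thing to be careful about is that Theorem~\ref{thm-ctx-subset-ciu} is stated as a one-directional inclusion on the \emph{ordering} relations (and similarly the others), so the cycle argument establishes equality of the orderings, from which equality of the induced equivalences follows. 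Thus the proof of the final theorem is a two-line appeal to the cited results, and I would present it as such rather than expanding any underlying calculation.
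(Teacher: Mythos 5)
Your proposal is correct and matches the paper's own proof exactly: the paper also chains $\ciurelset{\G}{} = \exprelset{\G}{} \subseteq \ctxrelset{\G}{} \subseteq \ciurelset{\G}{}$ via Theorems~\ref{thm-ciu-equals-e}, \ref{thm-exp-subset-ctx}, and \ref{thm-ctx-subset-ciu} and collapses the cycle to equalities. Your observations about where the real work was front-loaded are accurate but not needed for this step.
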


\begin{proof}
  $\ciurelset{\G}{} = \exprelset{\G}{} \subseteq \ctxrelset{\G}{} \subseteq \ciurelset{\G}{}$
  by Theorems~\ref{thm-ciu-equals-e}, \ref{thm-exp-subset-ctx}, and
  \ref{thm-ctx-subset-ciu}, respectively.
\end{proof}


\section{Contextual Equivalence} \label{sec:equivs}

\begin{definition}\label{def:ctxeq}
  If $\hastype{\G}{e}{\Exptype}$ and $\hastype{\G}{e'}{\Exptype}$, we
  say $e$ and $e'$ are \emph{contextually equivalent} ($e \ctxeq e'$)
  if both ${\ctxrel{\G}{}{e}{e'}}$ and ${\ctxrel{\G}{}{e'}{e}}$.
\end{definition}

In this section we use the machinery from the last few sections to
prove several equivalence schemes. The equivalences fall into three
categories:
\begin{enumerate}
\item provable directly using CIU and Theorem~\ref{thm-ciu=ctx}
\item dependent on ``entropy-shuffling''
\item mathematical properties of \R, probability distributions, etc
\end{enumerate}
Some equivalences of the first and second kinds are listed in
Figure~\ref{figure:axioms}; Section~\ref{section:quasi-denotational} gives
some examples of the third kind.

\begin{figure}
\begin{align*}
  \appexp{\lambdaexpp{x}{e}}{v}
  &~~\ctxeq~~ e[v/x]
  \tag{$\beta_v$}
  \\
  \letexp{x}{v}{e}
  &~~\ctxeq~~ e[v/x]
  \tag{$\letvax$}
  \\
  \letexp{x}{e}{x}
  &~~\ctxeq~~ e
  \tag{$\letidax$}
  \\
  \opexp{\op}{v_1, \cdots, v_n}
  &~~\ctxeq~~ v \quad\text{where $\delta(\op, v_1, \cdots, v_n) = v$}
  \tag{$\delta$}
  \\
  \letexp{x_2}{\letexpp{x_1}{e_1}{e_2}}{e_3}
  &~~\ctxeq~~
  \letexp{x_1}{e_1}{\letexpp{x_2}{e_2}{e_3}}
  \tag{assoc}
  \\
  \letexp{x_1}{e_1}{\letexp{x_2}{e_2}{e_3}}
  &~~\ctxeq~~
  \letexp{x_2}{e_2}{\letexp{x_1}{e_1}{e_3}}
  \tag{commut}
\end{align*}

\medskip 
In (assoc), $x_1 \not\in \itop{FV}(e_3)$.
In (commut), $x_1 \not\in \itop{FV}(e_2)$ and $x_2 \not\in \itop{FV}(e_1)$.
\caption{An incomplete catalog of equivalences}
\label{figure:axioms}
\end{figure}

\subsection{$\beta_v$, $\rmop{let}_v$, and $\delta$}


The proof for $\beta_v$ demonstrates the general pattern of
equivalence proofs using CIU: first we prove the equation holds for
closed expressions, then we generalize to open terms by considering
all closing substitutions.

\begin{lemma} \label{lemma:betav-closed}
  If $\hastype{}{\appexp{\lambdaexpp{x}{e}}{v}}{\Exptype}$, then
  $\appexp{\lambdaexpp{x}{e}}{v} \ctxeq e[v/x]$.
\end{lemma}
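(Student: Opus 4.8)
The plan is to route the equivalence through the CIU relation, using Theorem~\ref{thm-ciu=ctx} to identify $\ctxrelset{}{}$ with $\ciurelset{}{}$. Since $\appexp{\lambdaexpp{x}{e}}{v}$ is closed, so is $e[v/x]$, and there is no free variable over which a closing substitution must range; the CIU condition therefore collapses to a statement about a single continuation $K$ and a single measurable set $A$. Thus it suffices to show that for every closed $K$ and every $A \in \SigmaR$,
\[
  \Meas{\appexp{\lambdaexp{x}{e}}{v}}{K}{A} = \Meas{e[v/x]}{K}{A}.
\]

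This identity is exactly the index-free version of the application equation in Lemma~\ref{lemma-appexp} (from the family of measure equations in Lemma~\ref{lemma:meas-eqns}), so I would simply cite it. At the level of the abstract machine the equation reflects the single $\beta_v$ reduction step $\configsw{\appexp{\lambdaexp{x}{e}}{v}}{K} \to \configsw{e[v/x]}{K}$, which preserves evaluation (Lemma~\ref{lemma-opsem-trivial}); integrating the preserved evaluation function over $\s$ and $\t$ yields the equality of the two measures.

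Finally, the equality of measures delivers \emph{both} CIU orderings at once: $\iciurel{}{}{\appexp{\lambdaexp{x}{e}}{v}}{e[v/x]}$ and $\iciurel{}{}{e[v/x]}{\appexp{\lambdaexp{x}{e}}{v}}$. By Theorem~\ref{thm-ciu=ctx} each of these entails the corresponding $\ctxrelset{}{}$ ordering, and by Definition~\ref{def:ctxeq} the pair of orderings is precisely $\appexp{\lambdaexpp{x}{e}}{v} \ctxeq e[v/x]$. There is essentially no obstacle in this lemma: the substantive work has already been discharged in the measure equations and in proving that the three relations coincide. The only points deserving attention are that closedness lets us bypass the quantification over substitutions, and that contextual \emph{equivalence} requires both orderings---which come for free here because Lemma~\ref{lemma-appexp} gives an equality rather than a mere inequality. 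The generalization to open terms, where one quantifies over all closing substitutions and applies the equation to each instance, is handled in the subsequent development rather than in this lemma.
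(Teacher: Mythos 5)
Your proposal is correct and follows exactly the paper's own route: the paper's proof is the one-line citation ``By Lemma~\ref{lemma:meas-eqns}, the definition of $\ciurelset{}{}$, and Theorem~\ref{thm-ciu=ctx},'' which is precisely the argument you spell out (closedness collapses CIU to a single continuation and measurable set, the index-free application equation gives equality of measures, and equality yields both orderings). Your expanded version is a faithful elaboration of the same proof, with no gaps.
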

\begin{proof}
  By Lemma~\ref{lemma:meas-eqns}, the definition of $\ciurelset{}{}$,
  and Theorem~\ref{thm-ciu=ctx}.
\end{proof}

\begin{corollary}[$\beta_v$] \label{thm:betav}
  If $\hastype{\G}{\appexp{\lambdaexpp{x}{e}}{v}}{\Exptype}$, then
  $\appexp{\lambdaexpp{x}{e}}{v} \ctxeq e[v/x]$.
\end{corollary}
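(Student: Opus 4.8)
The plan is to lift Lemma~\ref{lemma:betav-closed} from closed expressions to open expressions by quantifying over all closing substitutions, which is precisely the pattern sketched in the text immediately preceding the statement. Concretely, suppose $\hastype{\G}{\appexp{\lambdaexpp{x}{e}}{v}}{\Exptype}$. By Definition~\ref{def:ctxeq}, it suffices to establish both $\ctxrel{\G}{}{\appexp{\lambdaexpp{x}{e}}{v}}{e[v/x]}$ and the reverse. By Theorem~\ref{thm-ciu=ctx} the contextual and CIU orderings coincide, so I would instead aim to show $\ciurel{\G}{}{\appexp{\lambdaexpp{x}{e}}{v}}{e[v/x]}$ in both directions. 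By part~(2) of Definition~\ref{def-ciu}, this reduces to showing, for every closing substitution $\g$ for $\G$, that the closed expressions $(\appexp{\lambdaexpp{x}{e}}{v})\g$ and $(e[v/x])\g$ are related by $\ciurelset{}{}$ in both directions.

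The key step is then to observe that substitution commutes appropriately: $(\appexp{\lambdaexpp{x}{e}}{v})\g = \appexp{\lambdaexpp{x}{e\g}}{v\g}$ and $(e[v/x])\g = e\g[v\g/x]$, assuming (as is standard) that $x \notin \dom(\g)$ and that $\g$ introduces no capture. Once this syntactic bookkeeping is in place, each instance is exactly the closed-expression statement of Lemma~\ref{lemma:betav-closed} applied to $e\g$ and $v\g$, which yields $\appexp{\lambdaexpp{x}{e\g}}{v\g} \ctxeq e\g[v\g/x]$, i.e.\ both CIU orderings hold. Collecting these over all $\g$ gives $\ciurel{\G}{}{\appexp{\lambdaexpp{x}{e}}{v}}{e[v/x]}$ and its converse, and a final appeal to Theorem~\ref{thm-ciu=ctx} and Definition~\ref{def:ctxeq} delivers $\appexp{\lambdaexpp{x}{e}}{v} \ctxeq e[v/x]$.

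I expect the main obstacle to be almost entirely bookkeeping rather than genuine difficulty: specifically, verifying that the substitution $\g$ distributes over the application and the meta-level substitution $[v/x]$ in the way claimed, and confirming that Lemma~\ref{lemma:betav-closed} is genuinely stated for \emph{all} closed well-typed instances (so that it applies to $e\g$ and $v\g$ uniformly). Since Lemma~\ref{lemma:betav-closed} already carries all the semantic weight---it is proved from the measure equations of Lemma~\ref{lemma:meas-eqns}, the definition of $\ciurelset{}{}$, and Theorem~\ref{thm-ciu=ctx}---the corollary requires no new reasoning about measures, entropy, or the operational semantics. In the write-up I would keep the argument to a couple of sentences invoking the closed lemma under every closing substitution and appealing to part~(2) of Definition~\ref{def-ciu}, rather than reproving anything.
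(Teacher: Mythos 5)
Your proposal is correct and follows essentially the same route as the paper: the paper's proof likewise applies Lemma~\ref{lemma:betav-closed} to every closed instance under a closing substitution, concludes CIU-equivalence of the open terms via Definition~\ref{def-ciu}(2), and then invokes Theorem~\ref{thm-ciu=ctx} to get contextual equivalence. The substitution bookkeeping you flag ($(\appexp{\lambdaexpp{x}{e}}{v})\g = \appexp{\lambdaexpp{x}{e\g}}{v\g}$, etc.) is left implicit in the paper but is exactly the right thing to check.
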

\begin{proof}
  By Lemma~\ref{lemma:betav-closed}, any closed instances of these
  expressions are contextually equivalent and thus
  CIU-equivalent. Hence the open expressions are CIU-equivalent and
  thus contextually equivalent.
\end{proof}

The proofs of $\letvax$ and $\delta$ are similar.


\subsection{Rearranging Entropy}
\label{sec:shuffling}

The remaining equivalences from Figure~\ref{figure:axioms} involve
non-trivial changes to the entropy access patterns of their
subexpressions.
In this section we characterize a class of transformations on the
entropy space that are measure-preserving. In the next section we use
these functions to justify reordering and rearranging subexpression
evaluation.

\begin{definition}[measure-preserving]
A function $\phi : \Ent \to \Ent$ is measure-preserving
when for all measurable $g : \Ent \to \Rplus$,
\begin{displaymath}
  \leb{g(\phi(\s))}{\s} = \leb{g(\s)}{\s}
\end{displaymath}
Note that this definition is implicitly specific to the stock entropy
measure $\mu_\Ent$, which is sufficient for our needs.
\end{definition}

More specifically, the kinds of functions we are interested in are
ones that break apart the entropy into independent pieces using
$\pi_L$ and $\pi_R$ and then reassemble the pieces of entropy using
\cons.  Pieces may be discarded, but no piece may be used more than
once.

For example, the following function is measure-preserving:
\[
  \phi_c(\s_1 \cons (\s_2 \cons \s_3)) = \s_2 \cons (\s_1 \cons \s_3)
\]
Or equivalently, written using explicit projections:
\[
  \phi_c(\s) = \pi_L(\pi_R(\s)) \cons (\pi_L(\s) \cons \pi_R(\pi_R(\s)))
\]
We will use this function in Theorem~\ref{thm-commut} to justify
\ttop{let}-reordering.
%

To characterize such functions, we need some auxiliary definitions:
\begin{itemize}
\item A \emph{path} $p = [d_1 \comdots d_n]$ is a (possibly empty) list
of directions ($L$ or $R$).
It represents a sequence of projections, and it can be viewed as a function from
$\Ent$ to $\Ent$.
\begin{displaymath}
  [d_1 \comdots d_n](\s) = (\pi_{d_1} \circ \dots \circ \pi_{d_n})(\s)
\end{displaymath}
\item A \emph{finite shuffling function} (FSF) $\phi$ is either a path
or $\phi_1 \cons \phi_2$ where $\phi_1$ and $\phi_2$ are FSFs.
It represents the disassembly and reassembly of entropy, and it can be viewed as a
recursively defined function from $\Ent$ to $\Ent$.
  \begin{align*}
    \phi(\s) = \begin{cases}
      p(\s) & \text{if \,} \phi = p \\
      \phi_1(\s) \cons \phi_2(\s) & \text{if \,} \phi = \phi_1 \cons \phi_2
    \end{cases}
  \end{align*}
\item A sequence of paths is said to be \emph{non-duplicating} if no
  path is the suffix of another path in the sequence.
\item An FSF is said to be \emph{non-duplicating} if the sequence of
  paths appearing in its definition is non-duplicating.
\end{itemize}

\begin{lemma}
  \label{lemma-shuffling-iterate} 
  Let $p_1 \comdots p_n$ be a non-duplicating sequence of paths and $g : \Ent^n \to \Rplus$. Then
  \begin{displaymath}
    \leb{g(p_1(\s) \comdots p_n(\s))}{\s} = \iterint{g(\s_1 \comdots \s_n)}{\s_1}{\s_n}
  \end{displaymath}
\end{lemma}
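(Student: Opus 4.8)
The plan is to induct on the total length $\ell = \sum_{i=1}^n |p_i|$ of the path sequence, peeling off one projection level at a time using the measure-preservation of the pairing (Property~\ref{lemma-entropy-splitting}.4) and reassembling the iterated integrals with Tonelli (Lemma~\ref{lemma-tonelli}). The base case is $n = 1$: a single path $p_1 = [d_1 \comdots d_k]$ is the composition $\pi_{d_1} \circ \dots \circ \pi_{d_k}$, and each projection is measure-preserving. Indeed, applying Property~\ref{lemma-entropy-splitting}.4 to an integrand that ignores its second argument and using $\mu_\Ent(\Ent) = 1$ (Property~\ref{lemma-entropy-def}.1) gives $\leb{h(\pi_L(\s))}{\s} = \leb{h(\s)}{\s}$, and symmetrically for $\pi_R$; hence their composition is measure-preserving and the claim for $n = 1$ follows.

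For the inductive step with $n \ge 2$, I first observe that non-duplication forces every path to be non-empty, since the empty path is a suffix of every path. I then split the sequence according to each path's innermost projection (its last direction): writing $p_i(\s) = q_i(\pi_L(\s))$ for those paths whose last direction is $L$ and $p_i(\s) = r_j(\pi_R(\s))$ for those whose last direction is $R$, where $q_i$ and $r_j$ are the paths with their final direction removed. Substituting $\s = \pi_L(\s) \cons \pi_R(\s)$ and applying Property~\ref{lemma-entropy-splitting}.4 converts $\leb{g(p_1(\s) \comdots p_n(\s))}{\s}$ into a double integral over independent entropies $\s_1, \s_2$, in which the $q_i$ act only on $\s_1$ and the $r_j$ only on $\s_2$ (up to a fixed permutation of $g$'s arguments, and with an unused half dropping out via $\mu_\Ent(\Ent) = 1$ whenever one group is empty).

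The key invariant is that the two derived sequences $\{q_i\}$ and $\{r_j\}$ are each again non-duplicating and of strictly smaller total length: appending a common final direction to two paths preserves the suffix relation, so a duplication among the $q_i$ (or the $r_j$) would reflect back to a duplication among the original $p_i$. I can therefore apply the induction hypothesis to the inner integral over $\s_1$ (treating the $r_j(\s_2)$ as parameters) and to the integral over $\s_2$, each time replacing a paths-of-$\s$ integral by a fully iterated integral over fresh variables. Repeated use of Tonelli then reorders and merges all the resulting integrals, and since the argument permutation introduced in the split is undone by reordering the (independent) integration variables, the result is exactly $\iterint{g(\s_1 \comdots \s_n)}{\s_1}{\s_n}$.

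The main obstacle is not the conceptual content---Property~\ref{lemma-entropy-splitting}.4 does all the real work---but the bookkeeping: choosing total path length rather than $n$ as the induction measure, so that the recursion terminates even when all paths share a last direction (a case in which $n$ does not decrease), and tracking the permutation of $g$'s arguments through the nested integrals so that the final Tonelli rearrangement delivers the canonical iterated integral. Throughout, non-negativity of $g$ (it maps into $\Rplus$) guarantees that every application of Tonelli and of the measure-preservation identities is justified.
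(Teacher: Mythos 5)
Your proof is correct and takes essentially the same route as the paper's: peel off each path's innermost projection, split the paths by their final direction, use Property~\ref{lemma-entropy-splitting}.4 to move the two groups onto independent entropies, apply the induction hypothesis to each group (checking non-duplication is preserved), and finish with Tonelli (Lemma~\ref{lemma-tonelli}). The only differences are bookkeeping: the paper inducts on the length of the longest path rather than the total length (both measures strictly decrease in the step), and it treats the degenerate $n=0$ case explicitly where you instead dispatch $n=1$ directly via measure-preservation of the projections.
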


\begin{proof}
  By strong induction on the length of the longest path in the
  sequence, and by the definition of non-duplicating and
  Lemma~\ref{lemma-tonelli} (Tonelli).
\end{proof}

\begin{theorem}
  \label{thm-shuffling-pushforward} 
  If $\phi$ is a non-duplicating FSF then $\phi$ is measure preserving.
\end{theorem}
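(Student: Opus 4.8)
The plan is to decompose $\phi$ into its leaf paths and an assembly skeleton, push the integral through Lemma~\ref{lemma-shuffling-iterate}, and then collapse the resulting iterated integral back to a single one. Concretely, every FSF $\phi$ is a binary tree whose leaves are paths; reading the leaves left to right yields the sequence $p_1 \comdots p_n$ appearing in its definition, and replacing each leaf path $p_i$ by the projection onto the $i$-th coordinate turns $\phi$ into an \emph{assembly function} $\psi : \Ent^n \to \Ent$ (a tree of \cons{} operations) satisfying $\phi(\s) = \psi(p_1(\s) \comdots p_n(\s))$ for all $\s$. Because $\phi$ is non-duplicating, $p_1 \comdots p_n$ is a non-duplicating sequence of paths, so Lemma~\ref{lemma-shuffling-iterate} applies to the measurable function $g \circ \psi : \Ent^n \to \Rplus$ and gives
\[
  \leb{g(\phi(\s))}{\s} = \leb{(g \circ \psi)(p_1(\s) \comdots p_n(\s))}{\s} = \iterint{g(\psi(\s_1 \comdots \s_n))}{\s_1}{\s_n}.
\]

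The remaining work is to show that assembling $n$ independent entropies via $\psi$ and integrating over all of them recovers $\leb{g(\s)}{\s}$; here the hypothesis is no longer needed and only the structure of \cons{} matters. First I would record the single-\cons{} collapse: taking the integrand $f(\s_1 \cons \s_2)$ in Property~\ref{lemma-entropy-splitting}.4 and using surjectivity of the pairing (that is, $\pi_L(\s) \cons \pi_R(\s) = \s$) yields
\[
  \dint{f(\s_1 \cons \s_2)}{\s_1}{\s_2} = \leb{f(\pi_L(\s) \cons \pi_R(\s))}{\s} = \leb{f(\s)}{\s}.
\]
Then I would prove the general collapse $\iterint{g(\psi(\s_1 \comdots \s_n))}{\s_1}{\s_n} = \leb{g(\s)}{\s}$ by induction on the assembly tree $\psi$. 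The leaf case is trivial. For $\psi = \psi_1 \cons \psi_2$, where $\psi_1$ consumes the first $k$ coordinates and $\psi_2$ the remaining $n-k$, I would apply the inductive hypothesis for $\psi_1$ to the inner $k$ integrals with the outer variables held fixed, obtaining $\leb{g(\s' \cons \psi_2(\s_{k+1} \comdots \s_n))}{\s'}$, then the inductive hypothesis for $\psi_2$ to the remaining integrations, reducing the whole expression to $\dint{g(\s' \cons \s'')}{\s'}{\s''}$, which the single-\cons{} collapse turns into $\leb{g(\s)}{\s}$; Tonelli (Lemma~\ref{lemma-tonelli}) supplies the measurability of the partial integrals and any reordering of the nested integrals that is needed. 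Chaining this with the displayed consequence of Lemma~\ref{lemma-shuffling-iterate} completes the proof.

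The main obstacle is bookkeeping rather than analysis: making precise the correspondence between the syntax of an FSF, its leaf-path sequence, and the assembly function $\psi$, and then threading the induction on $\psi$ so that the correct block of variables is integrated at each stage with the rest treated as parameters. Measurability side conditions are routine, since $\psi$ and each $p_i$ are built from the measurable projections and pairing of Property~\ref{lemma-entropy-def}, so $g \circ \psi$ and $g \circ \phi$ are measurable whenever $g$ is. The only genuine mathematical content is the single-\cons{} collapse, which is an immediate consequence of the measure-preservation of the projections together with surjectivity of \cons{}.
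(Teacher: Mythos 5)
Your proposal is correct and takes essentially the same approach as the paper: decompose $\phi$ into its non-duplicating leaf paths plus an assembly function, apply Lemma~\ref{lemma-shuffling-iterate} to reduce to an iterated integral, and collapse that integral by induction on the FSF structure, finishing with Property~\ref{lemma-entropy-splitting}(4). The only difference is presentational --- you isolate the single-\cons{} collapse (Property~\ref{lemma-entropy-splitting}(4) together with $\pi_L(\s) \cons \pi_R(\s) = \s$) as an explicit auxiliary step, which the paper folds directly into its induction.
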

\begin{proof}
  We need to show that for any $g : \Ent \to \Rplus$,
  \begin{displaymath}
    \leb{g(\phi(\s))}{\s} = \leb{g(\s'')}{\s''}
  \end{displaymath}
  If $\phi$ has paths $p_1 \comdots p_n$, then we can decompose $\phi$ using
  $s : \Ent^n \to \Ent$ such that
  \begin{displaymath}
    \phi(\s) = s(p_1(\s) \comdots p_n(\s))
  \end{displaymath}
  where the $p_i$ are non-duplicating.
  Then by Lemma~\ref{lemma-shuffling-iterate} it is enough to show that
  \begin{displaymath}
    \iterint{g(s(\s_1 \comdots \s_n))}{\s_1}{\s_n} = \leb{g(\s'')}{\s''}
  \end{displaymath}

  We proceed by induction on $\phi$.

  \begin{itemize}
    \item case $\phi = p$. This means that $n = 1$ and $s$ is the identity
      function, so the equality holds trivially.
    \item case $\phi = \phi_1 \cons \phi_2$. If $m$ is the number of paths in $\phi_1$,
      then there must be $s_1 : \Ent^m \to \Ent$ and $s_2 : \Ent^{n-m} \to \Ent$ such that
      \begin{displaymath}
        s(\s_1 \comdots \s_m, \s_{m+1} \comdots \s_n) = s_1(\s_1 \comdots \s_m) \cons s_2(\s_{m+1} \comdots \s_n)
      \end{displaymath}
      We can conclude that
      \begin{align*}
        & \iterint{g(s(\s_1 \comdots \s_n))}{\s_1}{\s_n} \\
        & = \iterint{g(s_1(\s_1 \comdots \s_m) \cons s_2(\s_{m+1} \comdots \s_n))}{\s_1}{\s_n} \\
        & = \dint{g(\s \cons \s')}{\s}{\s'} &\just{IH twice} \\
        & = \leb{g(\s'')}{\s''} &\just{Property~\ref{lemma-entropy-splitting}(4)}
      \end{align*}
    \end{itemize}
\end{proof}


\subsection{Equivalences That Depend on Rearranging Entropy}
\label{sec:commut}

  We first prove a general theorem relating value-preserving
transformations on the entropy space:

\begin{theorem}
  \label{thm-entropy-shuffling1}
  Let $e$ and $e'$ be closed expressions, and let $\phi : \Ent \to \Ent$ be a
  measure-preserving transformation such that for all $\s$, $K$, $\t$,
  and $A$
  \begin{displaymath}
    \eval{\s}{e}{K}{\t}{1}{A} \leq
    \eval{\phi(\s)}{e'}{K}{\t}{1}{A}
  \end{displaymath}
  Then $\ctxrel{}{}{e}{e'}$.
\end{theorem}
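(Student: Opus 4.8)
The plan is to reduce the claim to a statement about the CIU relation. By Theorem~\ref{thm-ciu=ctx} the relations $\ciurelset{}{}$ and $\ctxrelset{}{}$ coincide, so it suffices to prove $\ciurel{}{}{e}{e'}$. Since $e$ and $e'$ are closed, Definition~\ref{def-ciu} reduces this to showing, for every closed continuation $K$ and every measurable $A \in \SigmaR$, that $\Meas{e}{K}{A} \le \Meas{e'}{K}{A}$. Unfolding the definition of the measure, this is exactly
\[
  \dint{\eval{\s}{e}{K}{\t}{1}{A}}{\s}{\t}
  \le
  \dint{\eval{\s}{e'}{K}{\t}{1}{A}}{\s}{\t}.
\]

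First I would integrate the pointwise hypothesis. Monotonicity of the integral turns the assumed inequality $\eval{\s}{e}{K}{\t}{1}{A} \le \eval{\phi(\s)}{e'}{K}{\t}{1}{A}$ into $\dint{\eval{\s}{e}{K}{\t}{1}{A}}{\s}{\t} \le \dint{\eval{\phi(\s)}{e'}{K}{\t}{1}{A}}{\s}{\t}$, so it remains only to make $\phi$ disappear from the right-hand side. Here I would freeze $\t$ and set $g(\s) = \eval{\s}{e'}{K}{\t}{1}{A}$; this slice is measurable in $\s$ by Lemma~\ref{lemma-eval-measurable}, so the measure-preserving property of $\phi$ gives $\leb{g(\phi(\s))}{\s} = \leb{g(\s)}{\s}$, i.e. the inner $\s$-integral is unchanged when $\phi$ is stripped off. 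Integrating this equality over $\t$ then collapses the right-hand side to $\Meas{e'}{K}{A}$, and chaining the two relations yields $\Meas{e}{K}{A} \le \Meas{e'}{K}{A}$ as required.

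The one technical point to get right---rather than a genuine obstacle---is the justification for applying a single-variable change of variables underneath the outer $\t$-integral. This is where joint measurability of $\rmop{eval}$ in $\s$ and $\t$ (Lemma~\ref{lemma-eval-measurable}) together with Tonelli (Lemma~\ref{lemma-tonelli}) enters: Tonelli licenses treating the double integral $\dint{\cdot}{\s}{\t}$ as the iterated integral with $\s$ innermost, so that the measure-preserving identity---stated for functions $\Ent \to \Rplus$---can be applied slice-by-slice for each fixed $\t$ before reintegrating. Notably, no step-indexing, induction on $K$, or direct appeal to the logical relation is needed; the whole argument rests on the already-established coincidence of the three relations and on the elementary monotonicity and change-of-variables behavior of the entropy integral.
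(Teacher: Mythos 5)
Your proposal is correct and follows essentially the same route as the paper's proof: reduce to CIU ordering via Theorem~\ref{thm-ciu=ctx}, unfold $\Meas{e}{K}{A}$ as the double entropy integral, apply the pointwise hypothesis under the integral, and then strip off $\phi$ by measure-preservation. The only difference is that the paper compresses the last step into a one-line annotation (``$\phi$ is measure-preserving''), whereas you spell out the justification---fixing $\t$, applying the single-variable measure-preserving identity slice-by-slice, with Lemma~\ref{lemma-eval-measurable} and Tonelli licensing the iterated-integral manipulation---which is exactly the detail the paper leaves implicit.
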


\begin{proof}
  Without loss of generality, assume $e$ and $e'$ are closed
  (otherwise apply a closing substitution).  By
  Theorem~\ref{thm-ciu=ctx}, it is sufficient to show that for any
  $K$ and $A$, $\Meas{e}{K}{A} \le \Meas{e'}{K}{A}$.
  We calculate:
  \begin{align*}
    \Meas{e}{K}{A}
    & = {\dint {\eval{\s}{e}{K}{\t}1{A}} {\s} {\t}} \\
    & \le {\dint {\eval{\phi(\s)}{e'}{K}{\t}1{A}} {\s} {\t}} \\
    & = {\dint {\eval{\s}{e'}{K}{\t}1{A}} {\s}{\t}}
    \tag{$\phi$ is measure-preserving} \\
    & = \Meas{e'}{K}{A}
  \end{align*}
\end{proof}

\begin{theorem}
  \label{thm:bigstep-entropy-shuffling1}
  Let $e$ and $e'$ be closed expressions, and let $\phi : \Ent \to \Ent$ be a
  measure-preserving transformation such that for all $v$ and $w$,
  \begin{displaymath}
    {\bigstep{\s}{e}{v}{w}} \implies {\bigstep{\phi(\s)}{e'}{v}{w}}.
  \end{displaymath}
  Then $\ctxrel{}{}{e}{e'}$.  
\end{theorem}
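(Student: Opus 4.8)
The plan is to reduce this statement to its small-step counterpart, Theorem~\ref{thm-entropy-shuffling1}, whose hypothesis is phrased in terms of $\rmop{eval}$ rather than $\Downarrow$. Concretely, I would show that the big-step hypothesis implies the $\rmop{eval}$ hypothesis of that theorem, namely that for all $\s$, $K$, $\t$, and $A$,
\[
  \eval{\s}{e}{K}{\t}{1}{A} \le \eval{\phi(\s)}{e'}{K}{\t}{1}{A},
\]
and then invoke Theorem~\ref{thm-entropy-shuffling1} directly to conclude $\ctxrel{}{}{e}{e'}$. (As in the other shuffling proofs, I may assume $e$ and $e'$ closed, applying a closing substitution otherwise.)

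To establish the inequality, fix $\s$, $K$, $\t$, $A$. If $\eval{\s}{e}{K}{\t}{1}{A} = 0$ the inequality is trivial, so assume it equals some $w'' > 0$, witnessed by a halting run $\nconfig{\s}{e}{K}{\t}{1} \to^* \nconfig{\s''}{r}{\haltcont}{\t''}{w''}$ with $r \in A$. By the Interpolation Theorem (Theorem~\ref{thm:interpolation}) together with the Small-Step to Big-Step Theorem (Theorem~\ref{thm:small-step-big-step}), this run factors through the point at which $e$ first reaches a value: there exist $\s'$, $v$, $w'$ with $\bigstep{\s}{e}{v}{w'}$ and a tail reduction $\nconfig{\s'}{v}{K}{\t}{w'} \to^* \nconfig{\s''}{r}{\haltcont}{\t''}{w''}$. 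Applying the theorem's hypothesis to $\bigstep{\s}{e}{v}{w'}$ yields $\bigstep{\phi(\s)}{e'}{v}{w'}$, and by the Big-Step to Small-Step Theorem (Theorem~\ref{thm:bigstep-smallstep}) there is some $\hat{\s}$ with $\nconfig{\phi(\s)}{e'}{K}{\t}{1} \to^* \nconfig{\hat{\s}}{v}{K}{\t}{w'}$.

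It then remains to splice the tail reduction of $e$ onto the run of $e'$. The configurations $\nconfig{\s'}{v}{K}{\t}{w'}$ and $\nconfig{\hat{\s}}{v}{K}{\t}{w'}$ agree in every component except the current entropy, which is exactly the component the return rule discards when a value meets a nontrivial continuation (and which does not affect $\rmop{eval}$ at all when $K = \haltcont$). Hence the same tail reduction applies from $\hat{\s}$, giving $\nconfig{\phi(\s)}{e'}{K}{\t}{1} \to^* \nconfig{\s''}{r}{\haltcont}{\t''}{w''}$ with the identical result $r \in A$ and weight $w''$; the multiplicative contribution of the tail is the same in both runs by the linearity of weights (Lemma~\ref{lemma:linear-weights}). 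Therefore $\eval{\phi(\s)}{e'}{K}{\t}{1}{A} = w'' = \eval{\s}{e}{K}{\t}{1}{A}$, which is even stronger than the inequality required.

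The main obstacle is precisely this last step: rigorously justifying that the continuation-tail reduction is independent of the dead current entropy, so that it may be transplanted from the run of $e$ onto the run of $e'$. This is the content of the dead-entropy property of the return rule, and more formally of the Genericity Theorem (Theorem~\ref{thm-generic-conts}), which I would cite to move between the two current-entropy values while holding $K$, $\t$, and the weight factor fixed. Some care is needed in the boundary case $K = \haltcont$, where there is no reduction step to appeal to and one instead observes directly that $\rmop{eval}$ depends only on the result value and weight. Everything else—the factorization of the run, the translation across the two correspondence theorems, and the weight bookkeeping—is routine given the results already established.
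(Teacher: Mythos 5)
Your proposal is correct and takes essentially the same route as the paper's own proof: reduce to Theorem~\ref{thm-entropy-shuffling1}, factor a positive-weight run of $e$ through Theorem~\ref{thm:small-step-big-step} to obtain $\bigstep{\s}{e}{v}{w'}$ plus a tail reduction, transfer across $\phi$ using the hypothesis and Theorem~\ref{thm:bigstep-smallstep}, and splice the tail onto the run of $e'$ by observing that the return rule discards the dead current entropy, with the $K=\haltcont$ boundary case handled directly. One minor correction: the paper justifies the splice exactly by your ``dead-entropy'' case analysis on $K$ (both value configurations step to the identical successor when $K$ is a \ttop{let} continuation), not by the Genericity Theorem~\ref{thm-generic-conts}, which varies the continuation, stack, and weight while holding the \emph{initial} entropy fixed and so does not directly apply to exchanging the current entropy of a value configuration.
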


\begin{proof}
  We will use Theorem~\ref{thm-entropy-shuffling1}.  
Assume $\eval{\s}{e}{K}{\t}{1}{A} = r > 0$.
Hence by Theorem~\ref{thm:small-step-big-step}, there exist quantities
$v'$, $w'$, $\s'$, and $\t'$ such that
\begin{displaymath}
  {\nconfig{\s}{e}{K}{\t}{1}} \to^* {\nconfig{\s'}{v'}{\haltcont}{\t'}{r}}
\end{displaymath}
with $v' \in A$.  By Theorem~\ref{thm:small-step-big-step} there exist
$v''$, $\s''$, and $w''$ such that
\begin{displaymath}
  {\bigstep{\s}{e}{v''}{w''}} \hbox{\ and\ }
  {\nconfig{\s''}{v''}{K}{\t}{w''}} \to^* {\nconfig{\s'}{v'}{\haltcont}{\t'}{r}} 
\end{displaymath}

By the assumption of the theorem, we have
${\bigstep{\phi(\s)}{e'}{v''}{w''}}$. 

Therefore, by Theorem~\ref{thm:bigstep-smallstep}, there is a $\s'''$
such that
\begin{displaymath}
    \nconfig{\s'}{e}{K}{t}{1} \to^* \nconfig{\s'''}{v''}{K}{\t}{w''}
  \end{displaymath}

  We claim that $\eval{\phi(s)}{e'}{K}{\t}{1}{A} = r$.  Proceed by
  cases on $K$.  We have ${\nconfig{\s''}{v''}{K}{\t}{w''}}$ $\to^*
  {\nconfig{\phi(s)}{v'}{\haltcont}{\t''}{r}}$. If $K = \haltcont$, this
  reduction must have length 0.  Therefore $v'' = v' \in A$ and $w'' =
  r$, so $\eval{\phi(s)}{e'}{K}{\t}{1}{A} = r$.

  Otherwise assume $K = \letcont{x}{e_3}{K'}$. Then both
  ${\nconfig{\s''}{v"}{K}{t}{w''}}$ and
  ${\nconfig{\s'''}{v"}{K}{t}{w''}}$ take a step to
  ${\nconfig{\pi_L(\t)}{e_3[v''/x]}{K'}{\pi_R(\t)}{w''}}$, so
  $\eval{\phi(s)}{e'}{K}{\t}{1}{A} = \eval{\s}{e}{K}{\t}{1}{A}$ $= r$,
  as desired, thus establishing the requirement of
  Theorem~\ref{thm-entropy-shuffling1}. 
\end{proof}

Now we can finally prove the commutativity theorem promised at the beginning.

\begin{theorem}
  \label{thm-commut}
  Let $e_1$ and $e_2$ be closed expressions, and
  ${\hastype{\setof{x_1,x_2}}{e_0}{\Exptype}}$.  Then
  the expressions
  \begin{displaymath}
    {\letexp{x_1}{e_1}{\letexp{x_2}{e_2}{e_0}}}
  \end{displaymath}
and
  \begin{displaymath}
    {\letexp{x_2}{e_2}{\letexp{x_1}{e_1}{e_0}}}
  \end{displaymath}
are contextually equivalent.
\end{theorem}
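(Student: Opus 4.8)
The plan is to apply the big-step shuffling principle (Theorem~\ref{thm:bigstep-entropy-shuffling1}) using the explicit rearrangement $\phi_c$ from Section~\ref{sec:shuffling},
\[
  \phi_c(\s) = \pi_L(\pi_R(\s)) \cons (\pi_L(\s) \cons \pi_R(\pi_R(\s))).
\]
Since $e_1$ and $e_2$ are closed and $e_0$ has free variables only in $\setof{x_1,x_2}$, both sides are closed, so it suffices to establish $\ctxrel{}{}{e}{e'}$ in each direction, where $e$ and $e'$ are the two \ttop{let}-nestings. First I would check that $\phi_c$ is a non-duplicating FSF: its paths are $[L,R]$, $[L]$, and $[R,R]$, selecting the left half, the left-of-right quarter, and the right-of-right quarter; none is a suffix of another, so by Theorem~\ref{thm-shuffling-pushforward} $\phi_c$ is measure-preserving, discharging the standing hypothesis of Theorem~\ref{thm:bigstep-entropy-shuffling1}.

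The core step is the big-step implication $\bigstep{\s}{e}{v}{w} \implies \bigstep{\phi_c(\s)}{e'}{v}{w}$. I would prove it by inverting the derivation for $e = \letexp{x_1}{e_1}{\letexp{x_2}{e_2}{e_0}}$. Unfolding the two nested \ttop{let} rules yields sub-derivations
\[
  \bigstep{\pi_L(\s)}{e_1}{v_1}{w_1}, \quad
  \bigstep{\pi_L(\pi_R(\s))}{e_2}{v_2}{w_2}, \quad
  \bigstep{\pi_R(\pi_R(\s))}{e_0[v_1/x_1][v_2/x_2]}{v_0}{w_0},
\]
with $v = v_0$ and $w = (w_0 \times w_2) \times w_1$ (the substitution $[v_1/x_1]$ passes through the closed $e_2$ untouched). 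I would then transplant these same three sub-derivations into a derivation for $e' = \letexp{x_2}{e_2}{\letexp{x_1}{e_1}{e_0}}$ at entropy $\phi_c(\s)$. This fits because $\phi_c$ routes the entropy exactly as needed: $\pi_L(\phi_c(\s)) = \pi_L(\pi_R(\s))$ feeds $e_2$, while $\pi_R(\phi_c(\s)) = \pi_L(\s)\cons\pi_R(\pi_R(\s))$ has projections $\pi_L(\s)$ (feeding $e_1$) and $\pi_R(\pi_R(\s))$ (feeding the body $e_0$). The reassembled derivation delivers value $v_0$ and weight $(w_0 \times w_1) \times w_2$, which equals $w$ by commutativity of multiplication; its body premise uses $e_0[v_2/x_2][v_1/x_1]$, which equals $e_0[v_1/x_1][v_2/x_2]$ because $v_1,v_2$ are closed and $x_1 \neq x_2$. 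This gives $\ctxrel{}{}{e}{e'}$.

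For the reverse inclusion I would observe that $\phi_c$ is an involution: using the surjective-pairing law $\pi_L(\s)\cons\pi_R(\s) = \s$, one computes $\phi_c(\phi_c(\s)) = \s$ (swapping the first two components twice). Hence the symmetric instance of Theorem~\ref{thm:bigstep-entropy-shuffling1}, with $e$ and $e'$ interchanged and the same $\phi_c$, yields $\ctxrel{}{}{e'}{e}$. Combining the two directions gives $e \ctxeq e'$ by Definition~\ref{def:ctxeq}.

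I expect the only real difficulty to be the entropy bookkeeping: tracking which portion of $\s$ each subexpression consumes under the nested \ttop{let} rule and verifying that $\phi_c$'s definition lines these portions up so that the identical sub-derivations can be reused. The two genuinely mathematical obligations are trivial, namely commutativity of real multiplication for the weights and commutativity of closed-value substitution for the result, and it is precisely their triviality that makes the slogan ``sampling and scoring are commutative effects'' drop out once the plumbing is arranged.
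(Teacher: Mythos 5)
Your proposal is correct and matches the paper's own (big-step) proof essentially step for step: both invoke Theorem~\ref{thm:bigstep-entropy-shuffling1} with the shuffler $\phi_c$ (shown measure-preserving via Theorem~\ref{thm-shuffling-pushforward}), invert the nested \texttt{let} derivation, and transplant the three sub-derivations using commutativity of multiplication and of closed-value substitutions, handling the reverse direction by symmetry just as the paper does. The only cosmetic difference is that you justify reusing the same $\phi_c$ for the converse via its involution property, whereas the paper simply re-runs the argument with the roles of $e_1$ and $e_2$ interchanged.
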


\begin{proof}[Proof using big-step semantics]
  Let $e$ and $e'$ denote the two expressions of the theorem.  We will
  use Theorem~\ref{thm:bigstep-entropy-shuffling1} with the function
  $\phi_c(\s_1 \cons (\s_2 \cons \s_3)) = \s_2 \cons (\s_1 \cons \s_3)$,
  which preserves entropy as shown in the
  preceding section.  We will show that if ${\bigstep{\s}{e}{v}{w}}$,
    then ${\bigstep{\phi(\s)}{e}{v}{w}}$.
      
Inverting ${\bigstep{\s}{e}{v}{w}}$, we know there must be a derivation

\begin{mathpar}
  {\inferrule*
    {{\bigstep{\pi_L(\s)}{e_1}{v_1}{w_1}} \\
      {\inferrule*
        { {\bigstep{\pi_L(\pi_R(\s))}{e_2}{v_2}{w_{21}}}
          \\
          {\bigstep{\pi_R(\pi_R(\s))}{e_0[v_1/x_1][v_2/x_2]}{v}{w_{22}}}}
        {\bigstep{\pi_R(\s)}{\letexp{x_2}{e_2}{e_0}}{v}{w_2}}
      }
      }
    {\bigstep{\s}{\letexp{x_1}{e_1}
        {\letexp{x_2}{e_2}{e_0}}}
        {v}{w}}
  }
\end{mathpar}
where $w = w_1 \times w_2 = w_1 \times (w_{21}\times w_{22})$.

Since $e_1$ and $e_2$ are closed, they evaluate to closed $v_1$ and
$v_2$, and so the substitutions $[v_1/x_1]$ and $[v_2/x_2]$
commute. Using that and the associativity and commutativity of
multiplication, we can rearrange the pieces to get
\begin{mathpar}
  {\inferrule*
    {{\bigstep{\pi_L(\pi_R(\s))}{e_2}{v_2}{w_{21}}} \\
      {\inferrule*
        {{\bigstep{\pi_L(\s)}{e_1}{v_1}{w_1}} \\
          {\bigstep{\pi_R(\pi_R(\s))}{e_0[v_2/x_2][v_1/x_1]}{v}{w_{22}}}}
        {\bigstep
          {\pi_L(\s) \cons {\pi_R(\pi_R(\s))}}
          {\letexp{x_1}{e_1}{e_0}}
          {v}
          {w_1 \times w_{22}}}}}
    {\bigstep
      {{\pi_L(\pi_R(\s))} \cons {\pi_L(\s) \cons {\pi_R(\pi_R(\s))}}}
      {\letexp{x_2}{e_2}
        {\letexp{x_1}{e_1}{e_0}}}
      {v}{w}}
    }
  \end{mathpar}
  The entropy in the last line is precisely $\phi(\s)$, so the
  requirement of Theorem~\ref{thm:bigstep-entropy-shuffling1} is
  established.
\end{proof}

Theorem~\ref{thm-commut} can also be proven directly from the
small-step semantics using the interpolation and genericity theorems
(\ref{thm:interpolation} and \ref{thm-generic-conts}) to recover the
structure that the big-step semantics makes explicit.  The proof may be
found in \theappendixref.

\begin{corollary}[Commutativity]
  Let $e_1$ and $e_2$ be expressions such that $x_1$ is not free in
  $e_2$ and $x_2$ is not free in $e_1$.  Then
  \begin{displaymath}
    {\letexpp{x_1}{e_1}{\letexp{x_2}{e_2}{e_0}}}
    ~\ctxeq~
    {\letexpp{x_2}{e_2}{\letexp{x_1}{e_1}{e_0}}}
  \end{displaymath}
\end{corollary}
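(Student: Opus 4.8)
The plan is to reduce the open-term statement to the closed case already established in Theorem~\ref{thm-commut}, following the same closing-substitution pattern used for $\beta_v$ (Corollary~\ref{thm:betav}). Let $e$ and $e'$ denote the two expressions of the corollary, and let $\G$ be a set of variables containing every free variable of $e_1$, $e_2$, and $e_0$ other than $x_1$ and $x_2$; by the freshness hypotheses ($x_1 \notin \itop{FV}(e_2)$ and $x_2 \notin \itop{FV}(e_1)$) both $e$ and $e'$ are well-formed expressions over $\G$. By Definition~\ref{def:ctxeq} it suffices to show $\ctxrel{\G}{}{e}{e'}$ and $\ctxrel{\G}{}{e'}{e}$, and by Theorem~\ref{thm-ciu=ctx} each of these is equivalent to the corresponding CIU ordering. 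By Definition~\ref{def-ciu}, the CIU orderings on open terms hold as soon as $\iciurel{}{}{e\g}{e'\g}$ (together with its converse) holds for every closing substitution $\g$ of $\G$.

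First I would fix an arbitrary closing substitution $\g$ and compute the closed instances. Since $x_1$ and $x_2$ are bound (and may be taken fresh for $\g$ by $\alpha$-renaming), the substitution distributes through the binders, giving
\begin{displaymath}
  e\g = \letexpp{x_1}{e_1\g}{\letexp{x_2}{e_2\g}{e_0\g}}
  \quad\text{and}\quad
  e'\g = \letexpp{x_2}{e_2\g}{\letexp{x_1}{e_1\g}{e_0\g}}.
\end{displaymath}
Because $\itop{FV}(e_1) \subseteq \G$ and $\itop{FV}(e_2) \subseteq \G$, the expressions $e_1\g$ and $e_2\g$ are closed, while $e_0\g$ is well-formed with free variables contained in $\{x_1, x_2\}$. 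This is exactly the hypothesis of Theorem~\ref{thm-commut}, so $e\g$ and $e'\g$ are contextually equivalent; by Theorem~\ref{thm-ciu=ctx} they are therefore CIU-equivalent in both directions, which is what we needed for this $\g$. As $\g$ was arbitrary, both CIU orderings between $e$ and $e'$ follow, and hence $e \ctxeq e'$.

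The only delicate point is the bookkeeping around closing substitutions and freshness. One must check that $\g$ genuinely closes $e_1$ and $e_2$, which is precisely where the side conditions $x_1 \notin \itop{FV}(e_2)$ and $x_2 \notin \itop{FV}(e_1)$ are used: without them $e_2\g$ or $e_1\g$ could retain a free occurrence of $x_1$ or $x_2$, and the reordered term $e'$ would not even name the intended expression. One must also arrange $\g$ to be capture-avoiding with respect to $x_1$ and $x_2$ so that it commutes with the two binders. Everything else is an immediate appeal to the closed commutativity theorem; no new measure-theoretic or entropy-shuffling reasoning is required here, that work having already been discharged in Theorem~\ref{thm-commut}.
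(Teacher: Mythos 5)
Your proposal is correct and takes essentially the same route as the paper: the paper's proof is exactly "same as Corollary~\ref{thm:betav}: since all of the closed instances are equivalent by Theorem~\ref{thm-commut}, the open expressions are equivalent," i.e., reduce to the closed case via closing substitutions and CIU. Your write-up just makes explicit the bookkeeping (distributing $\g$ through the binders, checking that the freshness side conditions make $e_1\g$ and $e_2\g$ closed) that the paper leaves implicit.
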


\begin{proof}
  Same as Corollary~\ref{thm:betav}: since all of the closed instances
  are equivalent by Theorem~\ref{thm-commut}, the open expressions are
  equivalent.
\end{proof}


The proofs of \ttop{let}-associativity and $\letidax$ follow the same
structure, except that associativity uses
$\phi_a((\s_1\cons\s_2)\cons\s_3) = \s_1\cons(\s_2\cons\s_3)$
and $\letidax$ uses $\phi_i(\s_1\cons\s_2) = \s_1$.




\subsection{Quasi-Denotational Reasoning}
\label{section:quasi-denotational}

In this section we give a powerful ``quasi-denotational'' reasoning
tool that shows that if two expressions denote the same measure,
they are contextually equivalent. This allows us to import
mathematical facts about real arithmetic and probability
distributions.

To support this kind of reasoning, we need a notion of measure for a
(closed) expression independent of a program continuation.
We define $\vmeasM{e}$ as a measure over arbitrary \emph{syntactic
  values}---not just real numbers as with $\Meas{e}{K}{-}$.
This measure corresponds directly to the $\mu_e$ of
\citet{CulpepperCobb:2017} and $\sem{e}_{\mathrm{S}}$ of \citet{Borgstrom:2017}.
The definition of $\vmeasfun$ uses a generalization of $\rmop{eval}$
from measurable sets of reals ($A$) to measurable sets of syntactic
values ($V$). This requires a measurable space for syntactic
values; we take the construction of \citet[Figure 5]{Borgstrom:2017}
mutatis mutandis.

\begin{definition}
\begin{align*}
  \vmeas{e}{V} &= \dint{ \eval{\s}{e}{\haltcont}{\t}{1}{V} }{\s}{\t}
  \\
  \eval{\s}{e}{K}{\t}{w}{V} &=
  \begin{cases}
    w' & \begin{array}[t]{l}
      \text{if $\nconfig{\s}{e}{K}{\t}{w} \to^* \nconfig{\s'}{v}{\haltcont}{\t'}{w'}$,} \\
      \text{where $v \in V$}
    \end{array} \\
    0  & \begin{array}[t]{l} \text{otherwise} \end{array}
  \end{cases}
\end{align*}
\end{definition}

Our goal is to relate an expression's measure $\vmeas{e}{-}$ with the
measure of that expression with a program continuation
($\Meas{e}{K}{-}$). Then if two expressions have the same measures, we
can use CIU to show them contextually equivalent.

First we need a lemma about decomposing evaluations. It is easiest to
state if we define the value and weight projections of evaluation:
\begin{align*}
    \ev{\s}{e}{K}{\t} &= \begin{cases}
      v & \text{when $\nconfig{\s}{e}{K}{\t}{1} \to^* \nconfig{\s'}{v}{\haltcont}{\t'}{w}$} \\
      \bot & \text{otherwise}
    \end{cases}
    \\
    \ew{\s}{e}{K}{\t} &= \begin{cases}
      w & \text{when $\nconfig{\s}{e}{K}{\t}{1} \to^* \nconfig{\s'}{v}{\haltcont}{\t'}{w}$} \\
      0 & \text{otherwise}
    \end{cases}
\end{align*}

Note that
\begin{align*}
  \eval{\s}{e}{K}{\t}{1}{V}
  &= I_V(\ev{\s}{e}{K}{\t}) \times \ew{\s}{e}{K}{\t}
  \\
  \vmeas{e}{A}
  &= \Meas{e}{\haltcont}{A} \qquad\text{for $A \in \Sigma_\R$}
\end{align*}
where $I_V$ is the characteristic function of $V$.

\begin{lemma}\label{lemma:ev-ew-split}
  \begin{align*}
    \ev{\s}{e}{K}{\t} &=
    \ev{\s'}{\ev{\s}{e}{\haltcont}{\t'}}{K}{\t}
    \\
    \ew{\s}{e}{K}{\t} &=
    \ew{\s'}{\ev{\s}{e}{\haltcont}{\t'}}{K}{\t} \times \ew{\s}{e}{\haltcont}{\t'}
  \end{align*}
\end{lemma}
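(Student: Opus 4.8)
The plan is to split the evaluation of $e$ under $K$ into the phase that reduces $e$ to a value and the phase that then runs the continuation, using the Interpolation Theorem (Theorem~\ref{thm:interpolation}) to locate the split point and the Genericity Theorem (Theorem~\ref{thm-generic-conts}) to show that this first phase does not depend on $K$ or $\t$, so that it coincides with the evaluation of $e$ under $\haltcont$ (with any stack $\t'$). The weight bookkeeping is then reconciled by the linearity of weights (Lemma~\ref{lemma:linear-weights}). Here $\s'$ denotes the entropy reached when $e$ first becomes a value, which Genericity guarantees is well defined independently of $K$ and $\t$.

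The main case is when $\nconfig{\s}{e}{K}{\t}{1}$ halts. Applying Theorem~\ref{thm:interpolation} yields a smallest $n$ and quantities $\s'$, $v$, $w'$ with
\[
  \nconfig{\s}{e}{K}{\t}{1} \to^n \nconfig{\s'}{v}{K}{\t}{w'} \to^* \nconfig{\s''}{v''}{\haltcont}{\t''}{w''}.
\]
By Theorem~\ref{thm-generic-conts}, running the same $n$ steps from $\nconfig{\s}{e}{\haltcont}{\t'}{1}$ reaches $\nconfig{\s'}{v}{\haltcont}{\t'}{w'}$, which is already halted; hence $\ev{\s}{e}{\haltcont}{\t'} = v$ and $\ew{\s}{e}{\haltcont}{\t'} = w'$, for any $\t'$. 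For the value equation I would then note that by linearity the tail reduces as $\nconfig{\s'}{v}{K}{\t}{1} \to^* \nconfig{\s''}{v''}{\haltcont}{\t''}{w''/w'}$ to the same final value $v''$, so $\ev{\s'}{v}{K}{\t} = v'' = \ev{\s}{e}{K}{\t}$. For the weight equation, the same linearity gives $\ew{\s'}{v}{K}{\t} = w''/w'$, and multiplying by $w' = \ew{\s}{e}{\haltcont}{\t'}$ recovers $w'' = \ew{\s}{e}{K}{\t}$.

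The remaining cases are degenerate. If $e$ never reduces to a value (it diverges or gets stuck before producing one), then the contrapositive of Theorem~\ref{thm:interpolation} shows $\nconfig{\s}{e}{\haltcont}{\t'}{1}$ never halts, so $\ev{\s}{e}{\haltcont}{\t'} = \bot$ and $\ew{\s}{e}{\haltcont}{\t'} = 0$; both sides of each equation are then $\bot$ and $0$ respectively, under the convention that evaluating $\bot$ diverges. If instead $e$ reaches a value in the first phase but the continuation phase diverges, then $\ev{\s}{e}{K}{\t} = \bot$ and $\ew{\s}{e}{K}{\t} = 0$, and by linearity $\nconfig{\s'}{v}{K}{\t}{1}$ also diverges, so $\ev{\s'}{v}{K}{\t} = \bot$ and $\ew{\s'}{v}{K}{\t} = 0$, matching both equations.

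I expect the main obstacle to be the careful identification of the split entropy $\s'$ as genuinely independent of $K$ and $\t$: this is exactly what Genericity buys us, so the real work is checking its hypotheses apply (in particular that the $n$ supplied by Interpolation is the \emph{smallest} step count reaching a value with the original continuation restored) and that the reduction under $\haltcont$ terminates \emph{precisely} at the interpolation point rather than continuing. The $\bot$-bookkeeping in the non-halting cases is routine but must be stated with an explicit convention for $\ev$ and $\ew$ applied to $\bot$.
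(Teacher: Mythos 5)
Your proof is in substance the paper's own proof: the paper disposes of this lemma in one line as ``reduction-sequence surgery using Theorem~\ref{thm-generic-conts}'', and your Interpolation--Genericity--linearity argument, including the two degenerate non-halting cases, is exactly that surgery written out. The weight bookkeeping ($w'' = (w''/w') \times w'$ via Lemma~\ref{lemma:linear-weights}) and the observation that a value paired with $\haltcont$ is terminal are both correct.

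One point does need patching, and it is precisely the point the paper's one-line proof chooses to emphasize. You declare $\s'$ to be \emph{the} entropy reached when $e$ first becomes a value, and you prove the two equations only for that particular $\s'$. But in the lemma statement $\s'$ and $\t'$ are arbitrary: the lemma is applied in Theorem~\ref{lemma:meas-vmeas} under an integral in which $\s'$ and $\t'$ are fresh integration variables, so the equations must hold for \emph{every} $\s'$ and $\t'$. This is why the paper's proof notes that ``the primed variables are dead: $\s'$ because $\rmop{ev}$ returns a value and $\t'$ because $\haltcont$ does not use its entropy stack.'' The repair is immediate from the semantics: since $\ev{\s}{e}{\haltcont}{\t'}$ is a value (or $\bot$), and the return rule reads only the entropy stack $\t$ while discarding the current entropy---the paper remarks that ``in the return rule the current entropy $\s$ is dead''---the quantities $\ev{\s'}{v}{K}{\t}$ and $\ew{\s'}{v}{K}{\t}$ are independent of $\s'$; hence proving the equations at your split-point entropy proves them for all $\s'$. (Your $\t'$ is genuinely arbitrary already, because Genericity permits any $\t_2$.) A second, cosmetic remark: in the case where $e$ never reaches a value, the fact that $\nconfig{\s}{e}{\haltcont}{\t'}{1}$ cannot halt follows from the contrapositive of Genericity applied with $K_1 = \haltcont$ (halting under $\haltcont$ would transfer to a value-with-$K$ configuration under $K$), not from the contrapositive of Interpolation, whose interpolant lives on the reduction sequence under $K$ itself; the surrounding argument is otherwise fine.
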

\begin{proof}
  By reduction-sequence surgery using Theorem~\ref{thm-generic-conts}.
  Note that the primed variables are dead: $\s'$ because $\rmop{ev}$
  returns a value and $\t'$ because $\haltcont$ does not use its
  entropy stack.
\end{proof}

Next we need a lemma from measure theory:

\begin{lemma} \label{lemma:int-assoc}
  If $\mu$ and $\nu$ are measures and $\nu(A) = \int I_A(f(x)) \times w(x) \;\mu(dx)$, then
  \[
  \smint{g(y)}{\nu(dy)} = \smint{g(f(x)) \times w(x)}{\mu(dx)}
  \]
\end{lemma}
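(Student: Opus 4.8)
The plan is to prove the identity by the standard measure-theoretic approximation argument, building up from indicator functions to arbitrary non-negative measurable $g$ via linearity and the Monotone Convergence Theorem. Conceptually, $\nu$ is the pushforward under $f$ of the measure having density $w$ with respect to $\mu$, so the claim is an instance of the usual change-of-variables formula; but rather than invoke that machinery I would prove it directly, which keeps the argument self-contained.

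First I would dispatch the base case $g = I_A$ for a measurable set $A$ in the codomain of $f$. Then the left-hand side is $\int I_A(y)\,\nu(dy) = \nu(A)$, and the defining hypothesis for $\nu$ rewrites this as $\int I_A(f(x))\times w(x)\,\mu(dx)$; since $g(f(x)) = I_A(f(x))$, this is exactly the right-hand side. Next, because the integral of a non-negative simple function is defined by linearity and both sides of the claimed identity are linear in $g$, the identity extends immediately to every non-negative simple function $g = \sum_i c_i I_{A_i}$.

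The main step is the passage from simple functions to an arbitrary non-negative measurable $g$. I would fix an increasing sequence of non-negative simple functions $g_n \uparrow g$, which exists by the standard approximation lemma. Then $g_n \circ f \uparrow g \circ f$ pointwise, and because $w \ge 0$ we also get $(g_n \circ f)\times w \uparrow (g \circ f)\times w$ pointwise. Applying the Monotone Convergence Theorem on both sides and using the already-established simple-function case at each $n$ yields
\[
  \int g(y)\,\nu(dy) = \lim_n \int g_n(y)\,\nu(dy) = \lim_n \int g_n(f(x))\times w(x)\,\mu(dx) = \int g(f(x))\times w(x)\,\mu(dx),
\]
which is the desired equation.

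The only delicate point is measurability bookkeeping: I must know that $f$ is measurable, so that $g \circ f$ is measurable and the composite $I_A \circ f$ makes sense, and that $g$ and $w$ are measurable and $\Rplus$-valued. All of this holds in the setting where the lemma is applied, and the non-negativity of $w$ is precisely what is needed to preserve monotonicity under multiplication, so that the Monotone Convergence Theorem applies on both sides. Beyond this, the argument is entirely routine.
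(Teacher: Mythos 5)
Your proof is correct, but it takes a more elementary route than the paper. The paper's proof is a one-liner: it observes that $\nu$ is exactly the pushforward under $f$ of the measure with density $w$ with respect to $\mu$, and then appeals to two standard facts---the change-of-variables formula for pushforward measures and the rule for integrating against a measure given by a density (the ``Radon--Nikodym'' direction)---to chain $\int g \, d\nu = \int (g\circ f) \, d(w\cdot\mu) = \int (g\circ f)\times w \, d\mu$. You identify this same decomposition conceptually but then deliberately inline the proofs of those two facts, running the standard machine: indicators, then simple functions by linearity, then monotone convergence for general non-negative $g$. Each step of your bootstrap is sound: the base case is literally the hypothesis, linearity handles simple functions, and the non-negativity of $w$ preserves the monotonicity needed for the Monotone Convergence Theorem on the $\mu$-side; moreover, as you note, the measurability of $f$ is implicit in the hypothesis even being well-formed, since it requires $I_A\circ f$ to be $\mu$-measurable for every measurable $A$. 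What your version buys is self-containedness---it needs nothing beyond MCT---at the cost of length; what the paper's version buys is brevity and a cleaner conceptual reading of $\nu$ as the pushforward of the density measure $w\cdot\mu$, at the cost of leaning on cited measure-theoretic lemmas.
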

\begin{proof}
  By the pushforward and Radon-Nikodym lemmas from measure theory.
\end{proof}

Now we are ready for the main theorem, which says that $\Meas{e}{K}{-}$ can be
expressed as an integral over $\vmeas{e}{-}$ where $K$ appears only in the
integrand and $e$ appears only in the measure of integration.

\begin{theorem}\label{lemma:meas-vmeas}
  \[
  \Meas{e}{K}{A} = \smiiint{ \eval{\s}{v}{K}{\t}{1}{A} }{\vmeas{e}{dv}}{d\s}{d\t}
  \]
\end{theorem}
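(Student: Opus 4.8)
The plan is to start from the left-hand side, $\Meas{e}{K}{A} = \dint{\eval{\s}{e}{K}{\t}{1}{A}}{\s}{\t}$, and to rewrite the integrand so that the production of a value by $e$ is separated from its subsequent consumption by $K$. First I would combine the identity $\eval{\s}{e}{K}{\t}{1}{A} = I_A(\ev{\s}{e}{K}{\t}) \times \ew{\s}{e}{K}{\t}$ with the splitting lemma (Lemma~\ref{lemma:ev-ew-split}) to obtain
\begin{displaymath}
  \eval{\s}{e}{K}{\t}{1}{A} = \eval{\s_0}{v(\s)}{K}{\t}{1}{A} \times w(\s),
\end{displaymath}
where $v(\s) = \ev{\s}{e}{\haltcont}{\t_0}$ and $w(\s) = \ew{\s}{e}{\haltcont}{\t_0}$. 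Here I use that $\haltcont$ ignores its entropy stack, so $v$ and $w$ are functions of $\s$ alone, and that the leading entropy slot is dead once the subject is a value (the return and halt rules discard it), so $\s_0$ may be any fixed dummy entropy.

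Next I would recognize $\vmeasfun(e,-)$ as a weighted pushforward of the stock measure along $v$. Unfolding the definition and using $\mu_\Ent(\Ent) = 1$ (Property~\ref{lemma-entropy-def}) to discard the trivial $\t$-integral gives
\begin{displaymath}
  \vmeas{e}{V} = \dint{I_V(v(\s)) \times w(\s)}{\s}{\t} = \leb{I_V(v(\s)) \times w(\s)}{\s}.
\end{displaymath}
This is exactly the hypothesis of Lemma~\ref{lemma:int-assoc} with the stock measure for $\mu$, $f = v$, weight $w$, and $\nu = \vmeasfun(e,-)$. Applying that lemma with $g(v) = \eval{\s_0}{v}{K}{\t}{1}{A}$ (for a fixed $\t$) converts the inner $\s$-integral into an integral against $\vmeasfun(e,-)$:
\begin{displaymath}
  \leb{\eval{\s_0}{v(\s)}{K}{\t}{1}{A} \times w(\s)}{\s}
  = \smint{\eval{\s_0}{v}{K}{\t}{1}{A}}{\vmeas{e}{dv}}.
\end{displaymath}

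Chaining these facts, and using Tonelli (Lemma~\ref{lemma-tonelli}) to treat the $\s,\t$ double integral as iterated, the left-hand side becomes $\leb{\left(\smint{\eval{\s_0}{v}{K}{\t}{1}{A}}{\vmeas{e}{dv}}\right)}{\t}$. Since the integrand no longer depends on the leading entropy slot (the subject $v$ is a value), I can reinsert a trivial $\s$-integral, which contributes a factor of $\mu_\Ent(\Ent) = 1$, landing precisely on the triple integral $\smiiint{\eval{\s}{v}{K}{\t}{1}{A}}{\vmeas{e}{dv}}{d\s}{d\t}$ of the statement.

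I expect the main obstacle to be bookkeeping rather than deep mathematics: precisely identifying which entropy components are dead at each point---the leading entropy $\s_0$ once the subject is a value, and the stack $\t_0$ under $\haltcont$---so that $v(\s)$ and $w(\s)$ are genuinely functions of $\s$ alone and the dummy integrals can be inserted or deleted freely. A secondary concern is measurability: to invoke Lemma~\ref{lemma:int-assoc} and Tonelli I must know that $v(\cdot)$, $w(\cdot)$, and the map $v \mapsto \eval{\s_0}{v}{K}{\t}{1}{A}$ on the value space are measurable, which follows from the measurability of $\rmop{eval}$ (Lemma~\ref{lemma-eval-measurable}) together with that of $\rmop{ev}$ and $\rmop{ew}$.
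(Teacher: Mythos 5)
Your proposal is correct and follows essentially the same route as the paper's proof: both factor evaluation through Lemma~\ref{lemma:ev-ew-split}, recognize $\vmeasfun(e,-)$ as a weighted pushforward so that Lemma~\ref{lemma:int-assoc} turns the entropy integral into an integral against $\vmeas{e}{dv}$, and use $\mu_\Ent(\Ent)=1$ plus the deadness of the relevant entropy slots to insert or delete trivial integrals. The only difference is bookkeeping: the paper inserts the dummy integrals over $\s'$ and $\t'$ up front and works with a quadruple integral, whereas you fix those slots as dummy constants and reinsert the trivial $\s$-integral at the end.
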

\begin{proof}
  By integral calculations and Lemma~\ref{lemma:ev-ew-split}:
  \begin{align*}
    \Meas{e}{K}{A}
    &= \dint{ \eval{\s}{e}{K}{\t}{1}{A} }{\s}{\t}
    \\
    &= \dint{ I_A(\ev{\s}{e}{K}{\t}) \times \ew{\s}{e}{K}{\t} }{\s}{\t}
    \\
    &= \qint{ I_A(\ev{\s}{e}{K}{\t}) \times \ew{\s}{e}{K}{\t} }{\s'}{\t'}{\s}{\t}
    \tag{$\mu_\Ent(\Ent) = 1$}
    \\
    &= \textstyle \iiiint \begin{array}[t]{l}
                 I_A(\ev{\s'}{\ev{\s}{e}{\haltcont}{\t'}}{K}{\t}) \times
                 \ew{\s'}{\ev{\s}{e}{\haltcont}{\t'}}{K}{\t} \\ \quad
                 \times \ew{\s}{e}{\haltcont}{\t'}
                 \;d\s' \,d\t' \,d\s \,d\t
               \end{array}
    \tag{Lemma~\ref{lemma:ev-ew-split}}
    \\
    &= \smiiint{ I_A(\ev{\s'}{v}{K}{\t}) \times \ew{\s'}{v}{K}{\t} }{\vmeas{e}{dv}}{d\s'}{d\t}
    \tag{Lemma~\ref{lemma:int-assoc}}
    \\
    &= \smiiint{ \eval{\s'}{v}{K}{\t}{1}{A} }{\vmeas{e}{dv}}{d\s'}{d\t}
  \end{align*}
\end{proof}

As a consequence, two real-valued expressions are contextually equivalent if
their expression measures agree:

\begin{theorem}[$\vmeasfun$ is quasi-denotational] \label{thm:quasi-denotational}
  If $e$ and $e'$ are closed expressions such that
  \begin{itemize}
  \item $e$ and $e'$ are almost always real-valued---that is,
    $\vmeas{e}{\mathit{Values} - \R} = 0$ and likewise for $e'$---and
  \item for all $A \in \Sigma_\R$, $\vmeas{e}{A} = \vmeas{e'}{A}$
  \end{itemize}
  then $e \ctxeq e'$.
\end{theorem}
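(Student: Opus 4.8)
The plan is to reduce the claim to CIU-equivalence and then discharge it with Theorem~\ref{lemma:meas-vmeas}. Since $e$ and $e'$ are closed, by Definition~\ref{def:ctxeq} and Theorem~\ref{thm-ciu=ctx} it suffices to establish both $\iciurel{}{}{e}{e'}$ and $\iciurel{}{}{e'}{e}$; by Definition~\ref{def-ciu} this in turn reduces to proving the single equality
\[
  \Meas{e}{K}{A} = \Meas{e'}{K}{A}
\]
for every closed continuation $K$ and every $A \in \SigmaR$ (an equality delivers both CIU inequalities simultaneously). So fix $K$ and $A$ from here on.

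The key move is to apply Theorem~\ref{lemma:meas-vmeas}, which already isolates the continuation in the integrand and the expression in the measure of integration:
\[
  \Meas{e}{K}{A} = \smiiint{ \eval{\s}{v}{K}{\t}{1}{A} }{\vmeas{e}{dv}}{d\s}{d\t},
\]
and the analogous identity for $e'$. The integrand is nonnegative and measurable, so by Tonelli (Lemma~\ref{lemma-tonelli}) I may carry out the $\s$ and $\t$ integrations first and set
\[
  F(v) = \smiint{ \eval{\s}{v}{K}{\t}{1}{A} }{d\s}{d\t},
\]
a nonnegative measurable function of the syntactic value $v$. This gives $\Meas{e}{K}{A} = \smint{F(v)}{\vmeas{e}{dv}}$ and $\Meas{e'}{K}{A} = \smint{F(v)}{\vmeas{e'}{dv}}$, so everything now reduces to showing that $F$ integrates to the same value against $\vmeasfun(e,-)$ and $\vmeasfun(e',-)$.

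To compare the two integrals I split the value space into the reals $\R$ and the non-reals $\mathit{Values} - \R$. The first hypothesis, $\vmeas{e}{\mathit{Values} - \R} = 0$, makes the non-real part a null set, so its contribution vanishes, leaving $\smint{F(v)}{\vmeas{e}{dv}} = \smintB{\R}{F(v)}{\vmeas{e}{dv}}$, and symmetrically for $e'$. The second hypothesis says $\vmeasfun(e,-)$ and $\vmeasfun(e',-)$ agree on every set in $\SigmaR$, hence their restrictions to $(\R,\SigmaR)$ are identical measures; integrating the common function $F$ against identical measures yields the same value (by the usual approximation of $F$ by simple functions). Chaining these equalities gives $\Meas{e}{K}{A} = \Meas{e'}{K}{A}$ for all $K$ and $A$, which is exactly the CIU condition we need.

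Essentially all of the real content lives in Theorem~\ref{lemma:meas-vmeas}; the rest is routine measure theory, and I do not expect a serious obstacle. The two facts to invoke with care are that integration over a null set contributes nothing (using the first hypothesis to discard the non-real closures, which is precisely why that hypothesis is present) and that measures agreeing on a $\sigma$-algebra give equal integrals of a shared measurable function (using the second hypothesis). The only genuinely technical point is confirming that $\R$ is a measurable subset of the value space $\mathit{Values}$ and that $F$ is measurable there; both rest on the measurable-space construction for syntactic values taken from \citet{Borgstrom:2017} together with the measurability of $\rmop{eval}$ (Lemma~\ref{lemma-eval-measurable}).
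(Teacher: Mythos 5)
Your proposal is correct and follows essentially the same route as the paper: reduce to a CIU equality via Theorem~\ref{thm-ciu=ctx}, express $\Meas{e}{K}{A}$ using Theorem~\ref{lemma:meas-vmeas}, and swap the measure of integration---the paper merely compresses your real/non-real splitting argument into the one-line observation that the two hypotheses together imply $\vmeasM{e} = \vmeasM{e'}$ as measures on all syntactic values, then applies Theorem~\ref{lemma:meas-vmeas} in both directions. Your Tonelli reordering is an unnecessary detour (and strictly speaking needs $\sigma$-finiteness or s-finiteness of $\vmeasM{e}$, which the paper's Lemma~\ref{lemma-tonelli} does not provide, since scoring can make $\vmeasM{e}$ an infinite measure): because the $v$-integral is already innermost in Theorem~\ref{lemma:meas-vmeas}, you can perform the measure swap pointwise for each fixed $\s$ and $\t$ without reordering anything.
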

\begin{proof}
  The two conditions together imply that $\vmeasM{e} = \vmeasM{e'}$.

  We use Theorem~\ref{thm-ciu=ctx}; we must show $\iciurel{}{}{e}{e'}$
  and $\iciurel{}{}{e'}{e}$.
  Choose a continuation $K$ and a measurable set $A \in \Sigma_\R$. Then
  \begin{align*}
    \Meas{e}{K}{A}
    &= \smiiint{ \eval{\s}{v}{K}{\t}{1}{A} }{\vmeas{e}{dv}}{d\s}{d\t}
    \tag{by Lemma~\ref{lemma:meas-vmeas}}
    \\
    &= \smiiint{ \eval{\s}{v}{K}{\t}{1}{A} }{\vmeas{e'}{dv}}{d\s}{d\t}
    \tag{$\vmeasM{e} = \vmeasM{e'}$}
    \\
    &= \Meas{e'}{K}{A}
    \tag{by Lemma~\ref{lemma:meas-vmeas} again}
  \end{align*}
  The proof of $\iciurel{}{}{e'}{e}$ is symmetric.
\end{proof}

Theorem~\ref{thm:quasi-denotational} allows us to import many useful
facts from mathematics about real numbers, real operations, and
real-valued probability distributions. For example, here are a few
equations useful in the transformation of the linear regression example
from Section~\ref{sec:probprog}:
\begin{itemize}
  \newcommand\xhs{s_0\xspace} \newcommand\xhm{m_0\xspace}
\item $x + y = y + x$
\item $(y + x) - z = x - (z - y)$
\item $(\factorexp{x});~ (\factorexp{y}) = (\factorexp{x \texttt{*} y});~ y$
\item $\opexptt{normalpdf}{x - y; 0, s} = \opexptt{normalpdf}{y; x, s}$
\item The closed-form posterior and normalizer for a normal
  observation with normal conjugate prior~\cite{Murphy07Conj}:
  \begin{align*}
    &
    \begin{array}{l}
      \letexp{m}{\opexptt{normal}{\xhm, \xhs}}{} \\
      \letexp{\_}{\factorexp{\opexptt{normalpdf}{d; m, s}}}{} \\
      m
    \end{array}
    \\
    &=
    \begin{array}{l}
      \letexp{m}{\opexptt{normal}{
          \left( \frac{1}{\xhs^2} + \frac{1}{s^2} \right)^{-1}
          \left( \frac{\xhm}{\xhs^2} + \frac{d}{s^2} \right)
          ,
          \left( \frac{1}{\xhs^2} + \frac{1}{s^2} \right)^{-1/2} }}{} \\
      \letexp{\_}{\factorexp{\opexptt{normalpdf}{d; \xhm, (\xhs^2 + s^2)^{1/2}}}}{} \\
      m
    \end{array}
  \end{align*}
  Note that we must keep the normalizer (the marginal likelihood of
  $d$); it is needed to score the hyper-parameters $\xhm$ and $\xhs$.
\end{itemize}

Section~\ref{sec:fix-dist} contains an additional application of
Theorem~\ref{thm:quasi-denotational}.

\subsection{An Application} \label{section:an-application}

Recall the example program from the introduction and the proposed
transformation:
\par\medskip
\begin{centering}
\begin{minipage}{3in}
  \begin{alltt}
    A = normal(0, 10)
    B = normal(0, 10)
    f(x) = A*x + B
    factor normalpdf(f(2) - 2.4; 0, 1)
    factor normalpdf(f(3) - 2.7; 0, 1)
    factor normalpdf(f(4) - 3.0; 0, 1)
  \end{alltt}
\end{minipage}
$\rightarrow$
\begin{minipage}{3in}
  \begin{alltt}
    A = normal(0, 10)
    factor \(Z\)(A)
    B = normal(\(M\)(A), \(S\)(A))
  \end{alltt}
\end{minipage}
\end{centering}

The core of the transformation is the last equivalence from
Section~\ref{section:quasi-denotational}, which transforms an
observation with a conjugate prior into the posterior and normalizer
(which scores the prior's hyperparameters).
But applying that transformation requires auxiliary steps to focus the
program into the requisite shape:
\begin{itemize}
\item Inline \texttt{f} to expose the dependence of the observations on \texttt{B}.
\item Rewrite the observations to the form \opexptt{normalpdf}{\_; \texttt{B}, \_}
  using facts about arithmetic and \texttt{normalpdf}.
\item Reassociate the (implicit) \texttt{let}s to isolate the
  definition of \texttt{B} and the first observation from the rest of
  the program's main \texttt{let} chain.
\end{itemize}
That sets the stage for the application of the conjugacy
transformation for one observation. Additional shuffling is required
to process subsequent observations. Moreover, each of the mathematical
rewrite rules needs help from the rules of Figure~\ref{figure:axioms} to
manage the intermediate \texttt{let} bindings required by our
language's syntax.

An alternative transformation strategy is to combine the observations
beforehand using equations about products of normal densities. The
same preliminary transformations are necessary, but the
observation-processing loop is eliminated.

\subsection{Other Equivalences}

The list of equivalences presented in this section is not
exhaustive.
On the $\lambda$-calculus side, we focused on a few broadly applicable
rules that involve only syntactic restrictions on the
sub-expressions---specifically, constraints on free variables. There
are other equivalences that require additional semantic
constraints. For example, a \texttt{let}-binding is useless and can be
dropped if the right-hand side has an expression measure of weight 1;
that is, it nearly always terminates without an error and it does not
(effectively) use \texttt{factor}. Similarly, hoisting an
argument-invariant expression out of a function body requires the same
conditions, and the expression must also be deterministic.

On the domain-specific side, Theorem~\ref{thm:quasi-denotational}
works well for programs that contain first-order islands of sampling,
scoring, and mathematical operations. But programs with mathematics
tangled up with higher-order code, it would be necessary to find
either a method of detangling them or a generalization to higher-order
expression measures.

\section{Formally Related Work} 
\label{sec:mismatch}

Our language model differs from other models of probabilistic
languages, such as that of \citet{Borgstrom:2016}, in the following
ways. Our language
\begin{itemize}
\item uses \emph{splitting} rather than \emph{sequenced} entropy,
\item requires \ttop{let}-binding of nontrivial intermediate expressions, and
\item directly models only the standard uniform distribution.
\end{itemize}

These differences, while they make our proofs easier, do \emph{not}
amount to fundamental differences in the meaning of probabilistic
programs. In this section, we show how our semantics corresponds to
other formulations.

\subsection{Splitting versus Sequenced Entropy}

\newcommand\SeqEnt{\ensuremath{\mathbb{T}}}
\newcommand\seqempty{\ensuremath{\epsilon}} 
\newcommand\muH{\ensuremath{\mu_\SeqEnt}}
\newcommand\hs{\ensuremath{t}}
\newcommand\seqto{\ensuremath{\ddot{\to}}}
\newcommand\seqconfig[4]{\langle{#1}\csep{#2}\csep{#3}\csep{#4}\rangle}
\newcommand\seqeval[5]{\ensuremath{{\rmop{e\"val}}({#1},{#2},{#3},{#4},{#5})}}
\newcommand\seqevalfun{\ensuremath{{\rmop{e\"val}}}\xspace}
\newcommand\seqaeval[6]{\ensuremath{{\rmop{e\"val}}^{({#1})}({#2},{#3},{#4},{#5},{#6})}}
\newcommand\SeqMeas[3]{\ensuremath{\ddot{\mu}({#1},{#2},{#3})\xspace}}
\newcommand\MeasFun{\ensuremath{\mu}\xspace}
\newcommand\SeqMeasFun{\ensuremath{\ddot{\mu}}\xspace}
\newcommand\SeqaMeas[4]{\ensuremath{\ddot{\mu}^{(#1)}({#2},{#3},{#4})\xspace}}

Let the sequenced entropy space $\SeqEnt$ be the space of finite
sequences (``traces'') of real numbers~\cite[Section 3.3]{Borgstrom:2017}:
\[
\SeqEnt = \bigcup_{n \geq 0} \R^n
\]
Its stock measure $\muH$ is the sum of the standard Lebesgue measures
on $\R^n$ (but restricted to the Borel algebras on $\R^n$ rather than
their completions with negligible sets). Note that $\muH$ is
infinite.
%

We write $\seqempty$ for the empty sequence and $r\cons\hs$ for the
sequence consisting of $r$ followed by the elements of $\hs$.
Integration with respect to $\muH$ has the following
property:
\[
\smint{ f(\hs) }{\muH(d\hs)}
= f(\seqempty) + \smiint{ f(r\cons\hs) }{\muH(d\hs)}{\lambda(dr)}
\]

We define $\seqto$, $\seqeval{\hs}{e}{K}{w}{A}$, and
$\SeqMeas{e}{K}{A}$\footnote{The dots are intended as a mnemonic for
  sequencing.} as the sequenced-entropy analogues of $\to$,
$\rmop{eval}$, and $\mu$. Here are some representative rules of
$\seqto$:
\begin{align*}
  \seqconfig{\hs}{\letexp{x}{e_1}{e_2}}{K}{w}
  &~\seqto~
  \seqconfig{\hs}{e_1}{\letcont{x}{e_2}{K}}{w}
  \\
  \seqconfig{\hs}{v}{\letcont{x}{e_2}{K}}{w}
  &~\seqto~
  \seqconfig{\hs}{e_2[v/x]}{K}{w}
  \\
  \seqconfig{r\cons\hs}{\sampleexp}{K}{w}
  &~\seqto~
  \seqconfig{\hs}{\const{r}}{K}{w} \quad\text{(when $0 \leq r \leq 1$)}
  \\
  \seqconfig{\hs}{\factorexp{\const{r}}}{K}{w}
  &~\seqto~
  \seqconfig{\hs}{\const{r}}{K}{w \times r} \quad\text{(when $r > 0$)}
\end{align*}
and here are the definitions of $\seqevalfun$ and $\SeqMeasFun$:
\begin{align*}
  \seqeval{\hs}{e}{K}{w}{A} &=
  \begin{cases}
    w' & \text{if $\seqconfig{\hs}{e}{K}{w} \seqto^* \seqconfig{\seqempty}{r}{\haltcont}{w'}$,
               where $r \in A$} \\
    0  & \text{otherwise}
  \end{cases}
  \\
  \SeqMeas{e}{K}{A}
  & = \smint{ \seqeval{\hs}{e}{K}{1}{A} }{\muH(d\hs)}
\end{align*}
Note that an evaluation counts only if it completely exhausts its
entropy sequence $\hs$. The approximants $\seqevalfun^{(n)}$ and
$\SeqMeasFun^{(n)}$ are defined as before; in particular, they are
indexed by number of steps, not by random numbers consumed.

In general, the entropy access pattern is so different between split
and sequenced entropy models that there is no correspondence between
individual evaluations, and yet the resulting measures are equivalent.

\begin{lemma}\label{lemma:seqeval-step}
  If $\seqconfig{\hs}{e}{K}{w} \seqto \seqconfig{\hs'}{e'}{K'}{w'}$, then
  $\seqaeval{p+1}{\hs}{e}{K}{w}{A} = \seqaeval{p}{\hs'}{e'}{K'}{w'}{A}$.
\end{lemma}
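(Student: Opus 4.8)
The plan is to follow the pattern of Lemma~\ref{lemma-opsem-trivial}, whose split-entropy analogue is proved ``by inspection.'' The single fact that makes everything go through is that $\seqto$ is \emph{deterministic}: inspection of the rules for $\seqto$ (the \ttop{let}, return, application, $\op^n$, \ttop{if}, \sampleexp, and \ttop{factor} cases) shows that the head expression and continuation of a configuration select at most one applicable rule, and each rule fixes the successor configuration uniquely---in the \sampleexp rule the consumed real is read off the front of the trace, and in the return rule the trace is left unchanged. Consequently the hypothesis $\seqconfig{\hs}{e}{K}{w} \seqto \seqconfig{\hs'}{e'}{K'}{w'}$ exhibits the \emph{unique} first step of \emph{every} reduction sequence issuing from $\seqconfig{\hs}{e}{K}{w}$.

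Granting determinism, I would argue by a length-preserving bijection between reduction sequences. First I would unfold the definition of $\seqaeval{p+1}{\hs}{e}{K}{w}{A}$: it is $w''$ if there is a reduction $\seqconfig{\hs}{e}{K}{w} \seqto^* \seqconfig{\seqempty}{\const{r}}{\haltcont}{w''}$ of length at most $p+1$ with $r \in A$ and the trace fully exhausted, and $0$ otherwise. Since the left-hand configuration admits a step it is not itself terminal, so every qualifying sequence has length at least one; by determinism each such sequence must take the hypothesized step first. Deleting that step yields a reduction $\seqconfig{\hs'}{e'}{K'}{w'} \seqto^* \seqconfig{\seqempty}{\const{r}}{\haltcont}{w''}$ of length at most $p$ reaching the \emph{same} terminal configuration; conversely, prepending the hypothesized step to any length-at-most-$p$ terminating run from $\seqconfig{\hs'}{e'}{K'}{w'}$ gives a length-at-most-$p+1$ terminating run from $\seqconfig{\hs}{e}{K}{w}$. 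Because the terminal configuration, the side conditions $r \in A$ and ``trace exhausted,'' and the final weight $w''$ are all properties of the shared tail, the value reported by the two approximants coincides.

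The remaining case is when neither side terminates within budget. If $\seqaeval{p+1}{\hs}{e}{K}{w}{A} = 0$ because no qualifying run of length at most $p+1$ exists from $\seqconfig{\hs}{e}{K}{w}$, then by the bijection above no qualifying run of length at most $p$ exists from $\seqconfig{\hs'}{e'}{K'}{w'}$, so $\seqaeval{p}{\hs'}{e'}{K'}{w'}{A} = 0$ as well, and symmetrically; the two definienda agree.

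I do not expect a genuine obstacle here---like Lemma~\ref{lemma-opsem-trivial}, this is essentially an inspection argument. The only point demanding care is establishing determinism cleanly, and in particular checking that the approximant indexes \emph{number of steps} rather than number of trace elements consumed, so that exactly one unit of budget is spent by the single hypothesized step regardless of whether it is a \sampleexp step that shortens $\hs$ or, say, a return step that leaves $\hs$ unchanged. One should also confirm that the trace-exhaustion side condition ($\seqempty$ in the halting configuration) is unaffected by deleting or prepending the first step, which it is, since it constrains only the terminal configuration.
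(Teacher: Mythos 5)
Your proof is correct and takes essentially the same route as the paper, which disposes of this lemma in a single line (``by the definition of the approximant evaluation''). Your argument---determinism of the sequenced step relation, plus the delete/prepend correspondence between terminating runs of length at most $p+1$ from the source configuration and runs of length at most $p$ from its successor---is precisely the content of that definitional unfolding, spelled out.
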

\begin{proof}
  By the definition of $\seqevalfun^{(p+1)}$.
\end{proof}

\begin{lemma} \label{lemma:seqmeas-eqns}
  The equations of Lemma~\ref{lemma:meas-eqns} also hold for
  $\SeqMeasFun^{(n)}$ and $\SeqMeasFun$.
\end{lemma}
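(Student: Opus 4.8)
The plan is to mirror the proof of Lemma~\ref{lemma:meas-eqns} essentially line for line, replacing the split-entropy step lemma (Lemma~\ref{lemma-opsem-trivial}) with its sequenced counterpart (Lemma~\ref{lemma:seqeval-step}) and unfolding the definition $\SeqMeas{e}{K}{A} = \smint{\seqeval{\hs}{e}{K}{1}{A}}{\muH(d\hs)}$, and likewise for the approximants. I would establish each of the eight approximant equations for $\SeqMeasFun^{(p+1)}$ separately, then pass to the limit to obtain the index-free versions for $\SeqMeasFun$.

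For the six administrative rules---\ttop{let}, return, application, $\op^n$, and the two \ttop{if} rules---the corresponding $\seqto$ step leaves the entropy sequence $\hs$ untouched. Hence each proof is a one-line calculation: unfold $\SeqaMeas{p+1}{e}{K}{A}$ as an integral of $\seqaeval{p+1}{\hs}{e}{K}{1}{A}$ against $\muH$, apply Lemma~\ref{lemma:seqeval-step} pointwise under the integral to rewrite the integrand as $\seqaeval{p}{\hs}{e'}{K'}{1}{A}$, and refold to $\SeqaMeas{p}{e'}{K'}{A}$. These cases are in fact strictly simpler than in the split model, since there is no entropy to divide and hence no appeal to Property~\ref{lemma-entropy-splitting} or Tonelli (Lemma~\ref{lemma-tonelli}).

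The two genuinely different cases are \sampleexp and \ttop{factor}. For \sampleexp, the step $\seqconfig{r\cons\hs}{\sampleexp}{K}{w} \seqto \seqconfig{\hs}{\const r}{K}{w}$ consumes a leading real, so I would use the recursion property of $\muH$,
\[
\smint{ f(\hs) }{\muH(d\hs)}
= f(\seqempty) + \smiint{ f(r\cons\hs) }{\muH(d\hs)}{\lambda(dr)},
\]
with $f(\hs) = \seqaeval{p+1}{\hs}{\sampleexp}{K}{1}{A}$. The $f(\seqempty)$ term vanishes because $\seqconfig{\seqempty}{\sampleexp}{K}{1}$ is stuck (there is no real to consume), and the integrand over $r \notin [0,1]$ vanishes for the same reason; on $r \in [0,1]$, Lemma~\ref{lemma:seqeval-step} rewrites $\seqaeval{p+1}{r\cons\hs}{\sampleexp}{K}{1}{A}$ as $\seqaeval{p}{\hs}{\const r}{K}{1}{A}$. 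Refolding the inner integral over $\hs$ yields exactly $\smintBB{0}{1}{\SeqaMeas{p}{\const r}{K}{A}}{dr}$. For \ttop{factor}, the step multiplies the weight by $r$; after applying Lemma~\ref{lemma:seqeval-step} to get $\seqaeval{p}{\hs}{\const r}{K}{r}{A}$, I would invoke the sequenced analogue of the linearity of weights (Lemma~\ref{lemma:linear-weights}) to pull out the factor $r$, producing $r \times \SeqaMeas{p}{\const r}{K}{A}$.

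Finally, for the index-free equations on the limit measure $\SeqMeasFun$, I would appeal to the sequenced analogue of Lemma~\ref{lemma:monotonic-measures}.4, namely $\SeqMeas{e}{K}{A} = \sup_n \SeqaMeas{n}{e}{K}{A}$, which holds by the same monotonicity argument. Each approximant equation then passes to the supremum: for the administrative and \ttop{factor} cases this is immediate, and for \sampleexp it requires pulling the supremum through the integral via monotone convergence. The main obstacle I anticipate is precisely the \sampleexp case---getting the edge contributions ($\hs = \seqempty$ and $r \notin [0,1]$) to vanish cleanly and then justifying the interchange of supremum and integration in the limit. The remaining bookkeeping (the sequenced versions of linearity and monotonicity) is routine and parallels the split-entropy development.
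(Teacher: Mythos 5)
Your proposal is correct and follows essentially the same route as the paper's proof: the administrative cases fall out of Lemma~\ref{lemma:seqeval-step} with no entropy shuffling, and the \sampleexp{} case is handled exactly as the paper does, via the recursion property of $\muH$ with the $\seqempty$ and $r \notin [0,1]$ contributions vanishing because the machine is stuck. The only (immaterial) difference is that you derive the limit equations by taking suprema of the approximant equations with monotone convergence, whereas the paper runs the same calculation directly on the unapproximated $\seqevalfun$; both are sound.
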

\begin{proof}
  By definition of $\SeqMeasFun$ and
  Lemma~\ref{lemma:seqeval-step}. In fact, in contrast to
  Lemma~\ref{lemma:meas-eqns}, most of the cases are utterly
  straightforward, because no entropy shuffling is necessary. The
  $\ttop{sample}$ case is different, because it relies on the
  structure of the entropy space:
  \begin{align*}
    \SeqMeas{e}{K}{A}
    &= \smint{ \seqeval{\hs}{\sampleexp}{K}{1}{A} }{d(\hs)}
    \\
    &= \seqeval{\seqempty}{\sampleexp}{K}{1}{A}
    + \smiint{ \seqeval{r \cons \hs}{\sampleexp}{K}{1}{A} }{\muH(dt)}{\lambda(dr)}
    \\
    &= 0 + \smiint{ I_{[0,1]}(r) \times \seqeval{\hs}{\const r}{K}{1}{A} }{\muH(dt)}{\lambda(dr)}
    \\
    &= \smintBB{0}{1}{ \SeqMeas{\const r}{K}{A} }{\lambda(dr)}
  \end{align*}
\end{proof}

\begin{theorem}[$\SeqMeasFun = \MeasFun$] \label{lemma:seqmeas}
  For all $e$, $K$, and $A\in\Sigma_\R$, $\SeqMeas{e}{K}{A} = \Meas{e}{K}{A}$.
\end{theorem}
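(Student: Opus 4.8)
The plan is to exploit the fact that, by Lemma~\ref{lemma:meas-eqns} and Lemma~\ref{lemma:seqmeas-eqns}, the split measure $\mu$ and the sequenced measure $\ddot\mu$ satisfy \emph{exactly the same} system of recursive equations---one equation per small-step reduction rule. As the surrounding text stresses, the entropy access patterns of the two models are completely different, with no run-by-run correspondence, so I will not attempt a bisimulation of reduction sequences. Instead I will show that this common system of equations determines the measure uniquely, by proving that the step-indexed approximants agree at every index. Here it is essential that both families of approximants, $\mu^{(n)}$ and $\ddot\mu^{(n)}$, are indexed by the same quantity---number of machine steps---so that their defining equations relate index $n{+}1$ to index $n$ in lockstep.

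Concretely, I would prove
\[
  \aMeas{n}{e}{K}{A} = \SeqaMeas{n}{e}{K}{A}
\]
for all $e$, $K$, and $A \in \Sigma_\R$ simultaneously, by induction on $n$. (The simultaneous quantification over $e$, $K$, $A$ matters, since the inductive step for, say, \ttop{let} invokes the hypothesis at a different expression and continuation.) For $n = 0$, a configuration can halt only if it is already of the form $\nconfig{\s}{\const r}{\haltcont}{\t}{w}$, and then both approximants evaluate to $I_A(r)$---in the sequenced model using $\muH(\{\seqempty\}) = 1$---while for every other configuration both are $0$. For the inductive step I would case on the shape of the configuration. When it matches a reduction rule, the corresponding equations from Lemma~\ref{lemma:meas-eqns} (for $\mu^{(n)}$) and Lemma~\ref{lemma:seqmeas-eqns} (for $\ddot\mu^{(n)}$) rewrite both sides in terms of index $n$ at the reduct, and the induction hypothesis closes the case. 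The terminal case ($\const r$ with $\haltcont$) and the stuck cases---for example $\factorexp{\const r}$ with $r \le 0$, an undefined $\op$, or a non-real final value---fall outside those lemmas and are dispatched directly, both sides being $I_A(r)$ or $0$. Finally, taking $\sup_n$ of both sides and invoking Lemma~\ref{lemma:monotonic-measures}(4) together with its sequenced analogue yields $\Meas{e}{K}{A} = \SeqMeas{e}{K}{A}$.

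The only case needing more than a one-line appeal to the equations is \sampleexp, where both lemmas reduce the goal to
\[
  \smintBB{0}{1}{\aMeas{n}{\const r}{K}{A}}{dr}
  \quad\text{versus}\quad
  \smintBB{0}{1}{\SeqaMeas{n}{\const r}{K}{A}}{dr},
\]
so I would apply the induction hypothesis \emph{pointwise} in $r$ and conclude the integrals coincide; this silently uses measurability of the integrands, supplied by Lemma~\ref{lemma-eval-measurable} and its sequenced counterpart. This is where the genuinely different structure of the two entropy spaces---$\pi_U$ on split entropy versus peeling the first real off a trace---would have surfaced, but it has already been absorbed into Lemma~\ref{lemma:meas-eqns} and Lemma~\ref{lemma:seqmeas-eqns}. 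I therefore expect the real conceptual obstacle to lie upstream, in establishing those two equation lemmas, leaving the present argument a clean induction on the step index.
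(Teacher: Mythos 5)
Your proposal matches the paper's proof essentially exactly: the paper also proves $\ddot{\mu}^{(n)} = \mu^{(n)}$ by induction on the step index $n$ (with the same base-case split into the terminal configuration $\const{r}$ with $\haltcont$ versus everything else), discharges the inductive step by appealing to the common equations of Lemma~\ref{lemma:meas-eqns} and Lemma~\ref{lemma:seqmeas-eqns}, and concludes by passing to the limit of the pointwise-equal approximants. Your write-up simply supplies details the paper leaves implicit (the simultaneous quantification over $e$, $K$, $A$, the stuck cases, and the pointwise use of the induction hypothesis under the integral in the \sampleexp case), all of which are correct.
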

\begin{proof}
  We first show $\SeqMeasFun^{(n)} = \MeasFun^{(n)}$ by induction on $n$.
  The base case is $\SeqaMeas{0}{e}{K}{A} = \aMeas{0}{e}{K}{A}$. There
  are two subcases: if $e = r$ and $K = \haltcont$, then both results
  are $I_A(r)$. Otherwise, both measures are $0$.
  Lemma~\ref{lemma:seqmeas-eqns} handles the inductive case.
  Finally, since the approximants are pointwise equivalent, their
  limits are equivalent.
\end{proof}

\subsection{Distributions} \label{sec:fix-dist}

The language of \citet{Borgstrom:2016} supports
multiple real-valued distributions with real parameters; sampling from a
distribution, in addition to consuming a random number, multiplies the
current execution weight by the \emph{density} of the distribution at
that point. In this section we show that $\sampleexp$ is equally
expressive, given the inverse-CDF operations.

For each real-valued distribution of interest with $n$ real-valued
parameters, we add the following to the language: a sampling form
$\opexptt{D}{v_1,\dots,v_n}$ and operations $\ttop{Dpdf}$,
$\ttop{Dcdf}$, and $\ttop{Dinvcdf}$ representing the distribution's
density function, cumulative distribution function, and inverse
cumulative distribution function, respectively. The operations take
$n+1$ arguments; by convention we write a semicolon before the
parameters. For example, $\opexptt{gammapdf}{x;k,s}$ represents the
density at $x$ of the gamma distribution with shape $k$ and scale $s$.

We define the semantics of $\ttop{D}$ using the sequenced-entropy
framework by extending $\seqto$ with the following rule schema:
\[
\seqconfig{r\cons\hs}{\opexptt{D}{r_1,\dots,r_n}}{K}{w}
~\seqto~
\seqconfig{\hs}{r}{K}{w \times w'}
\quad\text{where $w' = \opexptt{Dpdf}{r;r_1,\dots,r_n} > 0$}
\]

\begin{theorem}
  $\opexptt{D}{v_1,\dots,v_n}$ and $\opexptt{Dinvcdf}{\sampleexp;v_1,\dots,v_n}$
  are CIU-equivalent (and thus contextually equivalent).
\end{theorem}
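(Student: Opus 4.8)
The plan is to prove the stronger statement that
$\SeqMeas{\opexptt{D}{v_1,\dots,v_n}}{K}{A} = \SeqMeas{\opexptt{Dinvcdf}{\sampleexp;v_1,\dots,v_n}}{K}{A}$
for every closed continuation $K$ and every $A \in \SigmaR$; by the definition of $\iciurelset{}{}$ this yields CIU-equivalence in both directions, hence contextual equivalence by Theorem~\ref{thm-ciu=ctx}. Working with $\SeqMeasFun$ rather than $\MeasFun$ is harmless: the right-hand side is a core-language expression (recall that $\opexptt{Dinvcdf}{\sampleexp;\dots}$ abbreviates $\letexpp{u}{\sampleexp}{\opexptt{Dinvcdf}{u;v_1,\dots,v_n}}$), so its two measures agree by Theorem~\ref{lemma:seqmeas}, and we simply take $\SeqMeasFun$ as the meaning of the $D$-extended left-hand side. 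Since the operations are stuck on closures, I may assume each $v_i$ is a real constant $\const{r_i}$, as otherwise both measures are $0$ and there is nothing to prove.

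First I would compute the left-hand side. Unfolding the definition of $\SeqMeasFun$ and splitting the entropy integral with the stated $\muH$-property, the empty-trace summand vanishes (the $D$-rule requires a nonempty trace), and the step rule for $\opexptt{D}{r_1,\dots,r_n}$---followed by factoring out the resulting weight $\opexptt{Dpdf}{r;r_1,\dots,r_n}$ and evaluating the inner $\hs$-integral (Tonelli, Lemma~\ref{lemma-tonelli})---gives
\[
  \SeqMeas{\opexptt{D}{r_1,\dots,r_n}}{K}{A}
  = \smint{\opexptt{Dpdf}{r;r_1,\dots,r_n} \times \SeqMeas{\const{r}}{K}{A}}{\lambda(dr)},
\]
where the density factor absorbs the side condition, a zero density contributing a zero term just as the stuck configuration does. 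For the right-hand side I would instead apply the index-free \ttop{let}, \ttop{sample}, return, and operation equations of Lemma~\ref{lemma:seqmeas-eqns}, in that order, to peel off the continuation frame and evaluate the inner operation, obtaining
\[
  \SeqMeas{\letexpp{u}{\sampleexp}{\opexptt{Dinvcdf}{u;r_1,\dots,r_n}}}{K}{A}
  = \smintBB{0}{1}{\SeqMeas{\const{\opexptt{Dinvcdf}{u;r_1,\dots,r_n}}}{K}{A}}{\lambda(du)}.
\]

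It then remains to identify these two integrals, which is exactly the probability integral transform: writing $F$ for the CDF of the distribution with parameters $r_1,\dots,r_n$ and $f = \opexptt{Dpdf}{-;r_1,\dots,r_n}$ for its density, the quantile function $\opexptt{Dinvcdf}{-;r_1,\dots,r_n} = F^{-1}$ pushes the uniform measure on $[0,1]$ forward to the distribution with density $f$, so $\smintBB{0}{1}{g(F^{-1}(u))}{\lambda(du)} = \smint{g(x)\,f(x)}{\lambda(dx)}$ for every measurable $g : \R \to \Rplus$. Applying this with $g(x) = \SeqMeas{\const{x}}{K}{A}$---which is measurable by Lemma~\ref{lemma-eval-measurable}---turns the second display into the first, completing the proof. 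The main obstacle is making this change of variables rigorous for an arbitrary distribution $D$: the CDF need not be strictly increasing and the support need not be all of $\R$, so $F^{-1}$ must be read as the generalized (left-continuous) quantile inverse, and one must verify that the flats of $F$ together with the zero-density region form a $\lambda$-null set that affects neither integral. This is the sole genuinely analytic step; everything else is bookkeeping with the measure equations.
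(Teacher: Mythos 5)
Your proof is correct, but it takes a genuinely different route from the paper's. The paper proves the equality of the \emph{continuation-free} value measures, $\vmeasM{\opexptt{D}{\vec{v}}} = \vmeasM{\opexptt{Dinvcdf}{\sampleexp;\vec{v}}}$, applying the CDF/density change of variables only to the indicator $I_A$, and then invokes Theorem~\ref{thm:quasi-denotational} to lift this single equality to contextual equivalence; all the quantification over continuations is hidden inside that theorem (via Lemma~\ref{lemma:meas-vmeas} and the pushforward/Radon--Nikodym Lemma~\ref{lemma:int-assoc}). You instead bypass $\vmeasfun$ and the quasi-denotational theorem entirely: for each fixed $K$ and $A$ you unfold $\SeqMeasFun$ on both sides with the measure equations of Lemma~\ref{lemma:seqmeas-eqns} until each side is an integral of the same function $g(r) = \SeqMeas{\const{r}}{K}{A}$, and then apply the probability-integral-transform fact to this \emph{general} nonnegative measurable $g$ rather than to $I_A$. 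The underlying analytic content is identical, and your handling of the side issues is sound---indeed you are more explicit than the paper about the fact that $\opexptt{D}{\vec{v}}$ only has sequenced-entropy semantics, so $\SeqMeasFun$ must serve as its meaning while Theorem~\ref{lemma:seqmeas} reconciles the two measures on the core-language side. What the paper's route buys is reuse and economy: Theorem~\ref{thm:quasi-denotational} is needed elsewhere anyway (e.g.\ for the conjugacy equations), so the distribution-specific argument shrinks to a three-line change of variables. What your route buys is self-containedness: you avoid the measurable space of syntactic values and Lemma~\ref{lemma:int-assoc} altogether, at the cost of redoing the ``for all $K$'' argument inline and of needing measurability of $r \mapsto \SeqMeas{\const{r}}{K}{A}$ in $r$---note that Lemma~\ref{lemma-eval-measurable} as stated only gives measurability in the entropy arguments, so you should really appeal to the stronger form in its appendix proof, where $\rmop{eval}$ is shown measurable in all arguments including the expression, composed with the measurable map $r \mapsto \const{r}$.
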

\begin{proof}
  By Theorem~\ref{thm:quasi-denotational}. Both expressions are
  real-valued. We must show that their real measures are equal.  We
  abbreviate the parameters as $\vec{v}$. The result follows from the
  relationship between the density function and the cumulative density
  function.
  \begin{align*}
    \vmeas{\opexptt{Dinvcdf}{\sampleexp; \vec{v}}}{A}
    &= \smintBB{0}{1}{ I_A(\opexptt{Dinvcdf}{x;\vec{v}}) }{dx}
    \\
    \intertext{We change the variable of integration with
      $x = \opexptt{Dcdf}{t;\vec{v}}$ and $\frac{dx}{dt} = \opexptt{Dpdf}{t;\vec{v}}$:}
    &= \smintBB{-\infty}{\infty}{ I_A(\opexptt{Dinvcdf}{\opexptt{Dcdf}{t;\vec{v}};\vec{v}}) \times
                \opexptt{Dpdf}{t;\vec{v}} }{dt}
    \\
    &= \smintBB{-\infty}{\infty}{ I_A(t) \times \opexptt{Dpdf}{t;\vec{v}} }{dt}
    \\
    &= \vmeas{\opexptt{D}{\vec{v}}}{A}
  \end{align*}
\end{proof}


\subsection{From let-Style to Direct-Style}

\newcommand\langL{\ensuremath{\mathcal{L}}\xspace}
\newcommand\langD{\ensuremath{\mathcal{D}}\xspace}

\newcommand\tr[1]{\ensuremath{\rmop{tr}\sem{#1}}}
\newcommand\Dconfig[3]{\langle{#1}\csep{#2}\csep{#3}\rangle}
\newcommand\Dto{\to_\langD}
\newcommand\evalDfun{\ensuremath{{\rmop{eval}_\langD}}\xspace}
\newcommand\evalD[4]{\ensuremath{{\rmop{eval}_\langD}({#1},{#2},{#3},{#4})}}
\newcommand\aevalD[5]{\ensuremath{{\rmop{eval}_\langD}^{({#1})}({#2},{#3},{#4},{#5})}}

\newcommand\MeasD[2]{\ensuremath{\mu_\langD({#1},{#2})\xspace}}
\newcommand\MeasDFun{\ensuremath{\mu_\langD}\xspace}
\newcommand\aMeasD[3]{\ensuremath{\mu_\langD^{(#1)}({#2},{#3})\xspace}}

\newcommand\Hole{\ensuremath{[~]}\relax}

Let us call the language of Section~\ref{sec:syntax} \langL (for
``let'') and the direct-style analogue \langD (for ``direct'').
Once again following \citet{Borgstrom:2016}, we give the semantics of
\langD using a CS-style abstract machine, in contrast to the CSK-style
machines we have used until now~\cite{SEwPR}.

Here are the definitions of expressions and evaluation contexts for
\langD:
\begin{align*}
  e &::= v \alt \letexp{x}{e}{e} \alt \appexp{e}{e}
         \alt \opexp{\op}{e, \dots, e} \alt \ifexp{e}{e}{e}
  \\
  E &::= \Hole \alt \letexp{x}{E}{e}
         \alt \appexp{E}{e} \alt \appexp{v}{E}
         \alt \opexp{op}{v,\dots,E,e,\dots}
         \alt \ifexp{E}{e}{e}
\end{align*}
Here are some representative rules for its abstract machine:
\begin{align*}
  \Dconfig{\hs}{E[\appexp{\lambdaexpp{x}{e}}{v}]}{w}
  &\Dto \Dconfig{\hs}{E[e[v/x]]}{w}
  \\
  \Dconfig{r\cons\hs}{E[\sampleexp]}{w}
  &\Dto \Dconfig{\hs}{E[\const{r}]}{w}
\end{align*}
And here are the corresponding definitions of evaluation and measure:
\begin{align*}
  \evalD{\hs}{e}{w}{A}
  &= \begin{cases}
    w' & \text{if $\Dconfig{\hs}{e}{w} \Dto^* \Dconfig{\seqempty}{r}{w'}$, where $r \in A$} \\
    0  & \text{otherwise}
  \end{cases}
  \\
  \MeasD{e}{A}
  &= \smint{ \evalD{\hs}{e}{1}{A} }{\muH(d\hs)}
\end{align*}

To show that our \langL corresponds with \langD, we define a
translation $\tr{-} : \langD \to \langL$. More precisely, $\tr{-}$
translates \langD-expressions to \langL-expressions, such as
\begin{align*}
  \tr{r}
  &= r
  \\
  \tr{\lambdaexp{x}{e}}
  &= \lambdaexp{x}{\tr{e}}
  \\
  \tr{\appexp{e_1}{e_2}}
  &= \letexp{x_1}{\tr{e_1}}{\letexp{x_2}{\tr{e_2}}{\appexp{x_1}{x_2}}}
  \\
  \tr{\letexp{x}{e_1}{e_2}}
  &= \letexp{x}{\tr{e_1}}{\tr{e_2}}
  \\
\intertext{and it translates \langD-evaluation contexts to \langL-continuations, such as}
  \tr{\Hole}
  &= \haltcont
  \\
  \tr{E[\appexp{\Hole}{e_2}]}
  &= \letcont{x_1}{\letexp{x_2}{e_2}{\appexp{x_1}{x_2}}}{\tr{E}}
  \\
  \tr{E[\appexp{v_1}{\Hole}]}
  &= \letcont{x_2}{\appexp{v_1}{x_2}}{\tr{E}}
  \\
  \tr{E[\letexp{x}{\Hole}{e}]}
  &= \letcont{x}{e}{\tr{E}}
\end{align*}

Now we demonstrate the correspondence of evaluation and then lift it to measures.

\newcommand\DCSKconfig[4]{\seqconfig{#1}{#2}{#3}{#4}}
\newcommand\DCSKto{\ensuremath{\to_{\langD \textsc{csk}}}}

\begin{lemma}[Simulation] \label{lemma:simD}
  $\evalD{\hs}{E[e]}{w}{A} = \seqeval{\hs}{\tr{e}}{\tr{E}}{w}{A}$
\end{lemma}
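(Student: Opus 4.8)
The plan is to exhibit a (stuttering) bisimulation between the direct-style machine $\Dto$ and the let-style CSK machine $\seqto$, transported across the translation $\tr{-}$, and then read off the equality of the two evaluation functions. Both $\evalDfun$ and $\seqevalfun$ return the weight of a halting configuration that has fully consumed its entropy sequence and whose result constant lies in $A$ (returning $0$ otherwise). Since the entropy sequence $\hs$ is threaded identically by the two machines---only $\sampleexp$ consumes the head real and only $\factorexp{}$ scales the weight, in both---it suffices to show that, started from related configurations, the two machines reach related halting configurations with the same residual entropy, result, and weight, or else both fail to halt.

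Concretely, I would define a relation $\mathcal{R}$ between $\langD$-configurations and $\langL$-CSK-configurations by $\Dconfig{\hs}{E[e]}{w} \mathrel{\mathcal{R}} \seqconfig{\hs}{\tr{e}}{\tr{E}}{w}$, ranging over \emph{all} decompositions of a $\langD$-expression into an evaluation context $E$ and a focused subexpression $e$. The crucial design fact is that the clauses for $\tr{-}$ on contexts are exactly the continuations the CSK machine builds as it searches for the next redex: running $\seqconfig{\hs}{\tr{\appexp{e_1}{e_2}}}{\tr{E}}{w}$ performs administrative $\seqto$-steps (pushing \ttop{let}-continuations and descending) that reach $\seqconfig{\hs}{\tr{e_1}}{\tr{E[\appexp{\Hole}{e_2}]}}{w}$, the $\mathcal{R}$-image of the re-decomposition with $e_1$ in focus, and symmetrically for the other frames. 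I would isolate this as a sub-lemma: a non-redex focus in $\mathcal{R}$ is either a value, which the CSK machine returns into $\tr{E}$ (popping one frame and re-focusing, matching a pure re-decomposition of the same $\langD$-expression), or a compound whose head the CSK machine pushes (again re-focusing); and each administrative $\seqto$-step strictly shrinks the focused expression, so administrative stuttering cannot diverge.

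With the sub-lemma in hand, the main argument is a case analysis showing $\mathcal{R}$ is preserved: (a) if $e$ is the active redex, one $\Dto$-step (application-$\beta$, primitive operations, \ttop{if}, $\sampleexp$, $\factorexp{}$) is matched by one or more $\seqto$-steps to the $\mathcal{R}$-image of the contracted $\langD$-configuration, with identical effect on $\hs$ and $w$; (b) if $e$ is not the active redex, the CSK machine takes administrative steps that fix the $\langD$-expression but move to another related decomposition; and (c) the only related terminal configurations are $\Dconfig{\seqempty}{\const r}{w} \mathrel{\mathcal{R}} \seqconfig{\seqempty}{\const r}{\haltcont}{w}$, i.e.\ empty residual entropy and a real constant under the empty context, respectively $\haltcont$. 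Because administrative steps terminate and the contraction steps are in one-to-one correspondence, the $\langD$ machine halts from $\Dconfig{\hs}{E[e]}{w}$ iff the CSK machine halts from $\seqconfig{\hs}{\tr{e}}{\tr{E}}{w}$, with matching residual entropy, result, and weight, so $\evalD{\hs}{E[e]}{w}{A} = \seqeval{\hs}{\tr{e}}{\tr{E}}{w}{A}$. If an explicitly inductive packaging is preferred, the same content follows by relating $\aevalD{n}{\hs}{E[e]}{w}{A}$ and $\seqaeval{m}{\hs}{\tr{e}}{\tr{E}}{w}{A}$ at matched step counts (using Lemma~\ref{lemma:seqeval-step}) and taking suprema.

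The main obstacle is step (a) together with the sub-lemma: verifying that the administrative $\seqto$-steps reconstruct exactly the translated context $\tr{E'}$ around each redex, i.e.\ that the expression-clauses and the context-clauses of $\tr{-}$ fit together with the CSK rules once the intermediate \ttop{let}-variables are substituted. This is a delicate but routine case analysis over the shapes of $E$ and $e$; the only genuine care needed is variable hygiene---the freshly introduced binders $x_1,x_2$ must not capture free variables of $e_1,e_2$---so that the substitutions performed by the \ttop{let}-return rule realign the continuation with $\tr{-}$ of the re-decomposed context.
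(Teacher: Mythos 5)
Your proposal is correct, and it rests on the same core idea as the paper's proof: a simulation transported across $\tr{-}$, driven by the observation that the context clauses of $\tr{-}$ yield exactly the continuations the let-style machine builds, and that both machines consume entropy and accumulate weight identically. The difference is one of factoring. The paper first derives a CSK-style machine $\DCSKto$ for \langD from its CS-style machine, citing the standard CS-to-CSK correspondence, and then the only thing left to check is a one-step-to-many-steps match between two CSK machines---essentially a compositional reading of the expression and context clauses of $\tr{-}$. You instead relate the CS-style \langD machine directly to the \langL CSK machine, indexing your relation by all decompositions $E[e]$; this forces you to prove, inside the bisimulation, exactly the content the paper outsources to the standard derivation, namely that the administrative $\seqto$-steps implement redex search (your sub-lemma) and that this stuttering terminates. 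Your route is more self-contained; the paper's is shorter because the refocusing bookkeeping is delegated to a citable construction. One small repair to your termination argument: an administrative let-return step does \emph{not} strictly shrink the focused expression (returning $v$ to $\letcont{x}{e}{K}$ focuses $e[v/x]$, which is larger than $v$); termination instead follows because the binders introduced by $\tr{-}$ occur linearly, so a measure such as the size of the not-yet-decomposed part of the \langD expression (equivalently, the depth of the pending redex) decreases across each administrative step.
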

\begin{proof}
  From the CS-style machine above we can derive a corresponding CSK
  machine (call it $\DCSKto$); the technique is standard~\cite{SEwPR}.
  Then it is straightforward to show that
  \[
  \DCSKconfig{\hs}{e}{K}{w} \DCSKto \DCSKconfig{\hs'}{e'}{K'}{w}
  \implies
  \seqconfig{\hs}{\tr{e}}{\tr{K}}{w} \seqto^* \seqconfig{\hs'}{\tr{e'}}{\tr{K'}}{w'}
  \]
  and thus the evaluators agree.
\end{proof}  

\begin{theorem}
  $\MeasD{E[e]}{A} = \Meas{\tr{e}}{\tr{E}}{A}$
\end{theorem}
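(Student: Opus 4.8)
The plan is to obtain this as an immediate corollary by composing the two results already in hand. First I would unfold the definition of $\MeasDFun$, which expresses $\MeasD{E[e]}{A}$ as the integral $\smint{\evalD{\hs}{E[e]}{1}{A}}{\muH(d\hs)}$ over the sequenced-entropy space $\SeqEnt$ against its stock measure $\muH$. The Simulation Lemma (Lemma~\ref{lemma:simD}) asserts that the integrand equals $\seqeval{\hs}{\tr{e}}{\tr{E}}{1}{A}$ for every $\hs$, so---since the two functions agree pointwise and are integrated over the same measure space---the integrals coincide. But that integral is precisely the definition of $\SeqMeas{\tr{e}}{\tr{E}}{A}$.

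It then remains to bridge from the sequenced-entropy measure $\SeqMeasFun$ back to the split-entropy measure $\MeasFun$ that governs the rest of the development. This is exactly Theorem~\ref{lemma:seqmeas} ($\SeqMeasFun = \MeasFun$), which gives $\SeqMeas{\tr{e}}{\tr{E}}{A} = \Meas{\tr{e}}{\tr{E}}{A}$ for all expressions, continuations, and $A \in \Sigma_\R$. Chaining the three equalities then yields the claim in a two- or three-line display.

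There is no genuinely hard step here: all the substantive work has already been discharged in the two invoked results---the operational correspondence captured by the Simulation Lemma, and the measure-level agreement of the two entropy models established in Theorem~\ref{lemma:seqmeas}. The only point that warrants a moment's care is checking that $\MeasDFun$ and $\SeqMeasFun$ integrate over the \emph{same} space with the \emph{same} stock measure (both use $\SeqEnt$ with $\muH$), so that the pointwise equality supplied by the Simulation Lemma transfers unchanged to the integrals; once that observation is made, the conclusion is immediate.
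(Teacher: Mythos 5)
Your proposal is correct and follows exactly the paper's own (one-line) proof: unfold the definition of $\MeasDFun$, use the Simulation Lemma (Lemma~\ref{lemma:simD}) to rewrite the integrand pointwise, recognize the result as $\SeqMeas{\tr{e}}{\tr{E}}{A}$, and finish with Theorem~\ref{lemma:seqmeas} ($\SeqMeasFun = \MeasFun$). No differences worth noting beyond your spelling out the steps the paper leaves implicit.
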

\begin{proof}
  By definition of $\MeasDFun$ and Lemmas~\ref{lemma:seqmeas} and \ref{lemma:simD}.
\end{proof}

The equational theory for \langD is the \emph{pullback} of the \langL
equational theory over $\tr{-}$.
Compare with \citet{SabryFelleisen:1993}, which explores the pullback
of $\boldsymbol{\lambda}\beta\eta$ and related calculi over the
call-by-value CPS transformation.
For our language \langD, associativity and commutativity combine to
yield a generalization of their $\beta_\mathit{flat}$ and
$\beta_\Omega'$ equations to ``single-evaluation'' contexts $S$:

\[
\begin{altgrammar}
  S &::=& \Hole
  \alt \appexp{S}{e} \alt \appexp{e}{S}
  \alt \letexp{x}{S}{e} \alt \letexp{x}{e}{S}
  \\ &&
  \alt \opexp{\op}{e, \dots, S, e, \dots}
  \alt \ifexp{S}{e}{e}
\end{altgrammar}
\]

\begin{equation}
  \letexp{x}{e}{S[x]} ~\ctxeq~ S[e]
  \qquad\text{when $x \not\in \itop{FV}(S) \cup \itop{FV}(e)$}
  \tag{$\letsax$}
\end{equation}

\section{Informally Related Work}
\label{sec:related-work}

Our language and semantics are based on that of
\citet{CulpepperCobb:2017}, but unlike that language, which is
simply-typed, ours is untyped and thus has recursion and
nonterminating programs. Consequently, our logical relation must use
step-indexing rather than type-indexing.
Using an untyped language instead of a typed one not only introduces
recursion; it increases the universe of expressions to which the
theory applies, but it also makes the equivalence stricter, since the
untyped language admits both more expressions and more contexts.

The construction of our logical relation follows the tutorial of
\citet{Pitts:2010} on the construction of biorthogonal,
step-indexed~\citep{Ahmed:2006} logical relations. Instead of
termination, we use the program measure as the observable behavior,
following \citet{CulpepperCobb:2017}. But unlike that work, where the
meaning of an expression is a measure over arbitrary syntactic values,
we define the meaning of an expression and continuation together
(representing a whole program) as a measure over the reals.  This
allows us to avoid the complication of defining a relation on
measurable sets of syntactic values~\cite[the $\mathcal{A}$
  relation]{CulpepperCobb:2017}.

There has been previous work on contextual equivalence for
probabilistic languages with only \emph{discrete} random variables. In
particular, \citet{BizjakBirkedal:2015} define a step-indexed,
biorthogonal logical relation whose structure is similar to ours,
except that they sum where we integrate, and they use the probability
of termination as the basic observation whereas we compare measures.
Others have applied bisimulation
techniques~\citep{Crubille:2014,Sangiorgi:2016} to languages with
discrete choice; \citet{Erhard:2014} have constructed
fully abstract models for PCF with discrete probabilistic choice using
probabilistic coherence spaces.

\citet{Staton:2016} gives a denotational semantics for a higher-order,
typed language with continuous random variables, scoring, and
normalization but without recursion. Using a variant of that
denotational semantics, \citet{Staton:2017} proves the soundness of
the $\ttop{let}$-reordering transformation for a first-order language.


\begin{acks}

  This material is based upon work sponsored by the
  \grantsponsor{G1}{Air Force Research Laboratory (AFRL)}{} and the
  \grantsponsor{G2}{Defense Advanced Research Projects Agency (DARPA)}{}
  under \grantnum{G1}{Contract No.\ FA8750-14-C-0002}{}.
  The views expressed are those of the authors and do not reflect the official
  policy or position of the Department of Defense or the U.S.  Government.

  This project has received funding from the European
  Research Council (ERC) under the \grantsponsor{G3}{European Union's Horizon 2020
  research and innovation programme}{} (grant agreement \grantnum{G3}{No.\ 695412}{}).
  
\end{acks}

\bibliography{refs}

\appendix

\section{Additional Proofs} \label{sec:appendix-proofs}

\subsection{From Section~\ref{sec:op-sem}}

\begin{proof}[Proof of \ref{lemma:continuations-are-continuous}]
  %
  By induction on $i$.  At $i = 1$, property (b) is true.  So assume
  the proposition at $i$, and check it at $i+1$.   If property (a)
  holds for some $j \le i$, then (a) holds at $i+1$.   Otherwise,
  assume that property (b) holds at $i$, that is,
  assume that $(K_i,\s_i) \succeq (K_1,\s_1)$.

  If $e_i$ is not a value, then either $(K_{i+1},\t_{i+1}) =
  (K_i,\t_i)$, or else $K_{i+1} = \letcont{x}{e}{K_i}$ and $\t_{i+1} =
  \s' \cons \t_i$ for some $x$, $e$ and $\s'$,  in which case
  $(K_{i+1},\t_{i+1}) \succeq (K_1,\t_1)$ by Rule 2 above.  So the
  property holds at $i+1$.

  If $e_i$ is a value, then consider the last step in the derivation
  of $(K_i,\t_i) \succeq (K_1,\t_1)$.  If the last step was Rule 1,
  then property (a) holds at $i$ and therefore it holds at $i+1$.
  If the last step was Rule 2, then
  $(K_i, \t_i) = (\letcont{x}{e}{K'}, \s' \cons \t')$ for some $\s'$,
  where $(K', \t') \succeq (K_1, \t_1)$.
  So the configuration at step $i$ is a
  return from a \texttt{let}, and
  $(K_{i+1}, \t_{i+1}) = (K', \t') \succeq (K_1,\s_1)$
  by inversion on Rule 2, so again the property
  holds at $i+1$.
\end{proof}

\subsection{From Section~\ref{sec:measures}}

\begin{proof}[Proof of \ref{lemma-eval-measurable} (Measurability)]
The argument goes as follows.   Following \citet[Figure
5]{Borgstrom:2016}, turn the set of expressions and
continuations into a metric space by setting
$d(\const r,\const {r'}) = \lvert r-r'\rvert$;
$d(\appexp{e_1}{e_2}, \appexp{e'_1}{e'_2} = d(e_1,e_1')+d(e_2,e_2')$,
etc., setting $d(e,e') = \infty$ if $e$ and $e'$ are not the same up
to constants. Extend this to become a measurable space on
configurations by constructing the product space, using the Borel sets
for the weights and the natural measurable space on the entropy
components.  Note that in this space, singletons are measurable sets.

The next-configuration function $\nextstep : \Config \to \Config$ is
measurable; the proof follows the pattern of \citet[Lemmas
  72--84]{Borgstrom:2017}.
%
It follows that the $n$-fold composition of \nextstep,
$\nextstep^{(n)}$ is measurable, as is $\finishcomp \circ
\nextstep^{(n)}$, where $\finishcomp$ extracts the weight of halted
configurations.

Now we consider the measurability of $\rmop{eval}$.
Let $B$ be a Borel set in the reals and set
\begin{align*}
  C &=\setof{(\s, e, K, \t, w) \suchthat {\eval{\s}{e}{K}{\t}{w}{A}} \in B} \\
    &= \bigcup_n((\finishcomp \circ \nextstep^{(n)})^{-1}(B))
\end{align*}
Since $C$ is equal the countable union of measurable sets, it is
measurable, and thus $\rmop{eval}$ is measurable with respect to the
product space of all of its arguments.
To show $\eval{\s}{e}{K}{\t}{w}{A}$ is measurable with respect to $(\s, \t)$
for any fixed $e$, $K$, $w$, and $A$, we note that
\[
\left( \s,\t \mapsto \eval{\s}{e}{K}{\t}{w}{A} \right)
= \rmop{eval} \circ \left(\s,\t \mapsto (\s, e, K, \t, w, A)\right)
\]
The function $\left(\s,\t \mapsto (\s, e, K, \t, w, A)\right)$ is
measurable, as it is just a product of constant and identity
functions. Thus the composition is measurable.
\end{proof}

\begin{proof}[Proof for \ref{lemma:meas-eqns} (\sampleexp)]
  \begin{align*}
    & {\aMeas {p+1} {\sampleexp} {K}{A}} \\
    & = {\dint {\aeval {p+1} {\s} {\sampleexp} {K} {\t} 1{A}}
               {\s} {\t}} \\
    & = {\dint {\aeval {p} {\pi_R{\s}}
                       {\const {\pi_U(\pi_L(\s))}}
                       {K}
                       {\t}
                       1{A}}
               {\s} {\t}}
    \tag{Lemma~\ref{lemma-opsem-trivial}} \\
    & = {\tint {\aeval {p} {\s_{2}}
                       {\const {\pi_U(\s_{1}) }}
                       {K}
                       {\t}
                       1{A}}
               {\s_{1}} {\s_{2}} {\t}}
    \tag{Property~\ref{lemma-entropy-splitting}.4} \\
    & = {\tint {\aeval {p} {\s_{2}}
                       {\const {\pi_U(\s_{1}) }}
                       {K}
                       {\t}
                       1{A}}
               {\s_{2}} {\t} {\s_{1}}}
    \tag{Lemma~\ref{lemma-tonelli} twice} \\
    & = {\leb {\aMeas {p} {\const {\pi_U(\s_{1})}} {K}{A}} {\s_{1}}} \\
    & = \smintBB{0}{1}{\aMeas{p}{\const{r}}{K}{A}}{dr}
    \tag{Property~\ref{lemma-entropy-def}.2}
  \end{align*}
\end{proof}


\subsection{From Section~\ref{sec:log-rel}}

\begin{proof}[Proof of \ref{lemma:compatibility} (Variables)]
  We must show that for all $n$ and $(\g, \g') \in
  \ityenvrelset{\G}{n}$, $(\g(x),\g'(x)) \in \ivalrelset{}{n}$.  But that is
  true by the definition of $\ityenvrelset{\G}{n}$.
\end{proof}

\begin{proof}[Proof of \ref{lemma:compatibility} ($\l$)]
  Without loss of generality, assume
  $x \not\in \G$, and hence for any $\g$, $(\lambdaexp{x}{e})\g =
  \lambdaexp{x}{e\g}$.
  We must show, for all $n$, if $\ityenvrel{n}{\G}{\g}{\g'}$, then
  ${\ivalrel{}{n}{\lambdaexp{x}{e\g}}{\lambdaexp{x}{e'\g'}}}$.

  Following the definition of ${\ivalrelset{}{n}}$, choose $m < n$ and
  ${\ivalrel{}{m}{v}{v'}}$.  We must show that
  ${\iexprel{}{m}{e\g[v/x]}{e'\g'[v'/x]}}$.
  Since $m < n$,
  we have $\ityenvrel{m}{\G}{\g}{\g'}$ and ${\ivalrel{}{m}{v}{v'}}$.
  Therefore  $\ityenvrel{m}{\G,x}{\g[v/x]}{\g'[v'/x]}$, so
  ${\iexprel{}{m}{e\g[v/x]}{e'\g'[v'/x]}}$ by the definition of ${\iexprelset{}{m}}$.
\end{proof}

\begin{proof}[Proof of \ref{lemma:compatibility} (Value-compatibility implies expression-compatibility)]
  Choose $n$ and $(\g, \g') \in \ityenvrelset{\G}{n}$, so we have
  $\valrel{}{n}{v\g}{v'\g'}$.  We must show that
  $\exprel{}{n}{v\g}{v'\g'}$.

  Following the definition of $\iexprelset{}{n}$, choose $m \leq n$,
  $(K, K') \in \icontrelset{}{m}$, and $A$.
  Since $m \le n$, we have
  $\valrel{}{m}{v\g}{v'\g'}$, so $\aMeas{m}{v}{K}{A} \le
  \Meas{v'}{K'}{A}$.  Since we have this for all $m \le n$, we conclude that
  $\exprel{}{n}{v\g}{v'\g'}$.
\end{proof}

\begin{proof}[Proof of \ref{lemma:compatibility} (application)]
  Choose $n$, and assume $\ityenvrel{n}{\G}{\g}{\g'}$.  Then
  ${\ivalrel{}{n}{v_1\g\,}{v_1'\g'}}$ and \\
  $\ivalrel{}{n}{v_2\g}{v_2'\g'}$.
  We must show
  ${\iexprel{}{n}{\appexp{v_1\g\,}{v_2\g}}{\appexp{v_1'\g'}{v_2'\g'}}}$

  If $v_1\g$ is of the form $\const r$, then $\aMeas{m}{v_1\g}{K}{A} = 0$
  for any $m$, $K$, and $A$, so by Lemma~\ref{lemma:misc-properties} the
  conclusion holds.

  Otherwise, assume $v_1\g$ is of the form \lambdaexp{x}{e}, and so
  $v_1'\g'$ is of the form \lambdaexp{x}{e'}.
  So choose $m \le n$ and $A$, and let  $\contrel{}{m}{K}{K'}$.
  We must show that
  \[\aMeas {m} {\appexp{\lambdaexp{x}{e\g}}{v_2\g}} {K}{A}
  \le \Meas {\appexp{\lambdaexp{x}{e'\g'}}{v_2'\g'}}{K'}{A}.\]

  If $m = 0$ the left-hand side is 0.  So assume $m \ge 1$.  Since all
  the relevant terms are closed and the relations on closed terms are
  antimonotonic in the index, we have
  ${\ivalrel{}{m}{\lambdaexp{x}{e\g}}{\lambdaexp{x}{e'\g'}}}$ and
  ${\ivalrel{}{m-1}{v_1\g\,}{v_1'\g'}} $.  Therefore
  ${\iexprel{}{m-1}{e\g[v_2\g/x]}{e'\g'[v_2'\g'/x]}}$.

  Now, $\configsw{\appexp{\lambdaexp{x}{e\g}\,}{v_2\g}}{K} \to
  \configsw{e\g[v_2\g/x]}{K}$, and similarly for the primed side.  So we
  have
  \begin{align*}
    \aMeas {m} {\appexp{\lambdaexp{x}{e\g}\,}{v_2\g}}{K}{A}
    &= \aMeas {m-1} {e\g[v_2\g/x]}{K}{A}  \tag{Lemma~\ref{lemma-appexp}}
    \\ &\le \Meas {e'\g'[v_2'\g'/x]}{K'}{A}  \tag{by
      ${\iexprel{}{m-1}{e\g[v_2\g/x]}{e'\g'[v_2'\g'/x]}}$}
    \\ &= \Meas {\appexp{\lambdaexp{x}{e'\g'}\,}{v_2'\g'}}{K'}{A}
  \end{align*}
\end{proof}

\begin{proof}[Proof of \ref{lemma:compatibility} (operations)]
  Choose $n$, $\ityenvrel{n}{\G}{\g}{\g'}$, $m \leq n$,
  $(K, K') \in \icontrelset{}{m}$, and $A$.

  Since the arguments are related, for each $i$ either $v_i\g$ and
  $v_i'\g'$ are the same real number $\const{r_i}$ or both are
  closures. So either
  \[
  \deltafun(\op^k, v_1\g, \dots, v_k\g) = \deltafun(\op^k, v_1'\g', \dots, v_k'\g')
  = \const{r}
  \]
  or both are undefined, in which case both measures are 0. Assuming
  the result is defined and $m > 0$,
  \begin{align*}
    & \aMeas{m}{\opexp{\op^k}{v_1\g, \dots, v_k\g}}{K}{A}
    \\ &= \aMeas{m-1}{\const{r}}{K}{A} \tag{Lemma~\ref{lemma-opexp}}
    \\ &\leq \Meas{\const{r}}{K'}{A} \tag{by definition of $\icontrelset{}{m}$}
    \\ &= \Meas{\opexp{\op^k}{v_1'\g', \dots, v_1'\g'}}{K'}{A}
  \end{align*}
\end{proof}


\begin{proof}[Proof of \ref{lemma:compatibility} ($\haltcont$)]
  We must show that for any $m$ and any $(v,v') \in \ivalrelset{}{n}$,
  $\aMeas{m}{v}{\haltcont}{A} \le \Meas{v'}{\haltcont}{A}$.
  But $\ivalrel{}{n}{v}{v'}$ implies either $v = v' = \const r$ or
  both $v$ and $v'$ are \l-expressions, in which case both sides of
  the inequality are 0.
\end{proof}


\begin{proof}[Proof of \ref{lemma:compatibility} (continuations)]
Choose $n$, $m \leq n$, $(v,v') \in \ivalrelset{}{m}$, and $A \in \Sigma_\R$.
We need to show
$\aMeas{m}{v}{\letcont{x}{e}{K}}{A} \leq \Meas{v'}{\letcont{x}{e'}{K'}}{A}$.
Assume $m > 0$, otherwise trivial. 

By Lemma~\ref{lemma:meas-eqns}, the left-hand side is $\aMeas{m-1}{e[v/x]}{K}{A}$
and the right-hand side is $\Meas{e'[v'/x]}{K}{A}$.
The inequality follows from $\exprel{\{x\}}{}{e}{e'}$.
\end{proof}

To establish compatibility for \ttop{let}, we need some finer information:

\begin{lemma}
  \label{lemma:pitts-4.2}
  Given $n$, and $e$, $e'$ with a single free variable $x$, with the
  property that
  \begin{displaymath}
    (\forall p \le n)(\forall v,v')
    [\ivalrel{}{p}{v}{v'} \implies \iexprel{}{p}{e[v/x]}{e'[v'/x]}]
  \end{displaymath}
  Then for all $m \le n$,
  \begin{displaymath}
    \icontrel{}{m}{K}{K'} \implies \icontrel{}{m}{\letcont{x}{e}{K}}{\letcont{x}{e'}{K'}}
  \end{displaymath}
\end{lemma}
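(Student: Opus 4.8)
The plan is to unfold the definition of the continuation relation at the extended continuations and then reduce the inequality, via the \ttop{let}-return equation for the approximant measures, to the assumed property relating $e$ and $e'$. To establish $\icontrel{}{m}{\letcont{x}{e}{K}}{\letcont{x}{e'}{K'}}$ I would fix an index $q \le m$, closed values $v,v'$ with $\ivalrel{}{q}{v}{v'}$, and a measurable set $A$, and show
\[
  \aMeas{q}{v}{\letcont{x}{e}{K}}{A} \le \Meas{v'}{\letcont{x}{e'}{K'}}{A}.
\]

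First I would dispose of the case $q = 0$: since $v$ is a value and the continuation is not $\haltcont$, the halting reduction requires at least the return step, so $\aMeas{0}{v}{\letcont{x}{e}{K}}{A} = 0$, and the inequality holds by nonnegativity (Lemma~\ref{lemma:monotonic-measures}). For $q = p+1$, I would apply the return equations of Lemma~\ref{lemma:meas-eqns} to both sides, rewriting the left-hand side as $\aMeas{p}{e[v/x]}{K}{A}$ and the right-hand side (using the index-free version for $\mu$) as $\Meas{e'[v'/x]}{K'}{A}$. The goal then becomes $\aMeas{p}{e[v/x]}{K}{A} \le \Meas{e'[v'/x]}{K'}{A}$.

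To close this, I would apply the two hypotheses at the shifted index $p$, using the antimonotonicity of the relations noted just after their definition. Since $p < q \le m \le n$, from $\ivalrel{}{q}{v}{v'}$ we obtain $\ivalrel{}{p}{v}{v'}$, and because $p \le n$ the assumed property yields $\iexprel{}{p}{e[v/x]}{e'[v'/x]}$. Similarly, from $\icontrel{}{m}{K}{K'}$ and $p \le m$ we obtain $\icontrel{}{p}{K}{K'}$. Instantiating the definition of the expression relation $\iexprel{}{p}{e[v/x]}{e'[v'/x]}$ with these continuations at index $p$ and the set $A$ delivers exactly $\aMeas{p}{e[v/x]}{K}{A} \le \Meas{e'[v'/x]}{K'}{A}$, completing the argument.

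I expect the only delicate point to be the index bookkeeping: the return step spends one unit of the step budget, dropping the approximant index from $q$ to $p = q - 1$, and it is precisely this drop that brings the continuations within the budget of the expression relation (quantified over indices $\le p$) while keeping $p$ within both $n$ (so the hypothesis on $e,e'$ applies) and $m$ (so $K,K'$ remain related). No measure-theoretic subtlety arises, since every analytic fact required is already packaged in Lemma~\ref{lemma:meas-eqns} and the monotonicity facts of Lemma~\ref{lemma:monotonic-measures}; this is the standard continuation-extension step in a biorthogonal, step-indexed logical relation.
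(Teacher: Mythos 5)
Your proof is correct and follows essentially the same route as the paper's: unfold the continuation relation, dispose of the zero-index case by nonnegativity, rewrite both sides with the return equation of Lemma~\ref{lemma:meas-eqns} (approximant on the left, index-free on the right), and close by instantiating the assumed expression relation at the decremented index with the antimonotonically-demoted value and continuation pairs. The index bookkeeping you flag as the delicate point is exactly the paper's justification step ($\ivalrelset{}{p} \subseteq \ivalrelset{}{p-1}$ and $\icontrelset{}{m} \subseteq \icontrelset{}{p-1}$), so there is nothing to add.
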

\begin{proof}
  Choose $m \le n$ and $(K,K') \in \icontrelset{}{m}$. To show
  $\icontrel{}{m}{\letcont{x}{e}{K}}{\letcont{x}{e'}{K'}}$, choose
  $p \le m$, $(v,v')\in \ivalrelset{}{p}$, and $A$.

  We must show
  $\aMeas{p}{v}{\letcont{x}{e}{K}}{A} \le \Meas{v'}{\letcont{x}{e'}{K'}}{A}$.

  If $p = 0$, the result is trivial.  So assume $p > 0$ and calculate:
  \begin{align*}
    & \aMeas{p}{v}{\letcont{x}{e}{K}}{A}
    \\& = \aMeas{p-1}{e[v/x]}{K}{A}  \tag{Lemma~\ref{lemma-return}}
    \\& \le \Meas {e'[v'/x]} {K'}{A}
    \\& = \Meas{v'}{\letcont{x}{e'}{K'}}{A}
  \end{align*}
  where the inequality follows from
  $(v,v') \in \ivalrelset{}{p} \subseteq \ivalrelset{}{p-1}$ and
  $(K,K') \in \icontrelset{}{m} \subseteq {\contrelset{}{p}} \subseteq \icontrelset{}{p-1}$.
\end{proof}

Now we can prove compatibility under \ttop{let}.

\begin{proof}[Proof of \ref{lemma:compatibility} (\ttop{let})]
  Choose $n$ and  $(\g, \g') \in \ityenvrelset{\G}{n}$.  So we have
  ${\iexprel{}{m}{e_1\g}{e'_1\g'}}$ for all $m \le n$.

  \sloppy Furthermore, if $m \le n$ and $\ivalrel{}{m}{v}{v'}$, then
  ${\ityenvrel{\G,x}{m}{\g[x:=v]}{\g'[x:=v']}}$.  Therefore
  ${\iexprel{}{m}{e_2\g[x:=v]}{e'_2\g'[x:=v']}}$.   So $(e_2\g, e'_2\g')$
  satisfies the hypothesis of Lemma~\ref{lemma:pitts-4.2}.

  So choose $m \le n$ and $(K,K') \in \icontrelset{}{m}$.  Without
  loss of generality, assume $m > 0$. Then by Lemma~\ref{lemma:pitts-4.2} we have
  \begin{equation}
    {\icontrel{}{m}
      {\letcont{x}{e\g}{K}}
      {\letcont{x}{e'\g'} {K'}}}\label{eq:aa}
  \end{equation}
  Choose $A$. Now we can calculate:
  \begin{align*}
    {\aMeas{m}{\letexp{x}{e_1\g}{e_2\g}}{K}{A}}
    & = {\aMeas{m-1}{e_1\g}{\letcont{x}{e_2\g}{K}}{A}}  \tag{Lemma~\ref{lemma-let}}
    \\& \le {\Meas {e_1'\g'}{\letcont{x}{e_2'\g'}{K'}}{A}}
    \\& = {\Meas {\letexp{x}{e_1'\g'}{e_2\g'}}{K}{A}}
  \end{align*}
  where the  inequality follows from
  ${\iexprel{}{m}{e_1\g}{e'_1\g'}}$ and (\ref{eq:aa}).
\end{proof}

\begin{proof}[Proof of \ref{lemma:compatibility} (\ttop{if})]
  Choose $n$, $(\g, \g') \in \ityenvrelset{\G}{n}$, $m \leq n$,
  $(K, K') \in \icontrelset{}{m}$, and $A \in \SigmaR$.
  Assume that $m > 0$, otherwise the result follows trivially.

  Suppose $v\g = v'\g' = \const r$, and if $r > 0$. Then
  \begin{align*}
    \aMeas{m}{\ifexp{v\g}{e_1\g}{e_2\g}}{K}{A}
    &=    \aMeas{m-1}{e_1\g}{K}{A}         \tag{Lemma~\ref{lemma-ifexp}}
    \\ &\leq \Meas{e_1'\g}{K'}{A}
    \\ &=    \Meas{\ifexp{v'\g'}{e_1'\g'}{e_2'\g'}}{K'}{A}
  \end{align*}
  Likewise for $r \leq 0$ and $e_2, e_2'$.

  Otherwise, neither $v\g$ nor $v'\g'$ is a real constant, and both
  expressions are stuck and have measure 0.
\end{proof}

Everything so far is just an adaptation of the deterministic case.
Now we consider our two effects.

\begin{proof}[Proof of \ref{lemma:compatibility} (\ttop{factor})]
Choose $n$ and $\ityenvrel{\G}{n}{\g}{\g'}$.  We must show 
${\iexprel{}{n}{\factorexp{v\g}} {\factorexp{v'\g'}}}$.  Since
${\ivalrel{\G}{}{v}{v'}}$, it must be that
${\ivalrel{}{n}{v\g}{v'\g'}}$.  So either $v\g = v'\g' = \const r$ for
  some $r$, or $v\g$ is a lambda-expression.

Assume $v\g = v'\g' = \const r$ for some $r > 0$. Choose $1 \le m \le n$, 
${\icontrelalt{}{m}{K}{K'}}$, and $A \in \SigmaR$.  Then we have
\begin{align*}
  {\aMeas {m} {\factorexp{\const r}} {K}{A}}
  &= r \times \aMeas{m-1}{\const r}{K}{A}   \tag{Lemma~\ref{lemma-factor}}
  \\ &\le r \times \Meas{\const r}{K'}{A}
  \\ &= {\Meas {\factorexp {\const r}} {K'}{A}}
\end{align*}
So ${\iexprel{}{n}{\factorexp{\const r}}{\factorexp{\const r}}}$ as desired.

If $v\g$ is $\const r$ for $r \leq 0$ or a lambda-expression, then ${\factorexp{v\g}}$ is stuck, so
${\aMeas {m} {\factorexp{\const r}} {K}{A}} = 0$ and the desired result
holds again.
\end{proof}


\begin{proof}[Proof of \ref{lemma:compatibility} (\sampleexp)]
It will suffice to show that for all $m$, $\icontrelalt{}{m}{K}{K'}$,
and $A \in \SigmaR$,
\begin{displaymath}
  \aMeas {m}{\sampleexp}{K}{A} \le \Meas{\sampleexp}{K'}{A}
\end{displaymath}

At $ m = 0$, the left-hand side is 0 and the result is trivial.
If $m > 0$, then
\begin{align*}
  \aMeas m {\sampleexp}{K}{A}
  &= {\lebs {\aMeas {m-1} {\const {\pi_U(\s)}} {K}{A}}}    \tag{Lemma~\ref{lemma-sample}}
  \\ &\le {\leb {\Meas {\const {\pi_U(\s)}} {K'}{A}} {\s}}
  \\ &= \Meas {\sampleexp}{K'}{A}
\end{align*}
\end{proof}

\subsection{From Section~\ref{sec:ciu-ctx}}

\begin{proof}[Proof of \ref{lemma-swap}]
  Let $K_1$ denote {\letcont{z}{\appexp{z}{v}}{K}}, and
  let $K_1'$ denote {\letcont{z}{\appexp{z}{v'}}{K'}}.  To show
  $\icontrel{}{n+2}{K_1}{K'_1}$, choose $2 \le m \le n+2$, 
  $\ivalrel{}{m}{u}{u'}$, and $A \in \SigmaR$.  We must show
  \begin{displaymath}
    \aMeas {m}{u}{K_1}{A} \le \Meas {u'}{K_1'}{A}
  \end{displaymath}

  There are two possibilities for $\ivalrel{}{m}{u}{u'}$:

  1. $u = u' = \const r$.  Then ${\nconfig{\s}{\const r}{K_1}{\t}{w}} \to
  {\nconfig{\s}{\appexp{\const r}{v}}{K}{\t}{w}}$, which is stuck, so
  $\aMeas {m}{u}{K_1}{A} = 0 \le \Meas {u'}{K_1'}{A}$.

  2. $u = \lambdaexp{x}{e}$ and $u' = \lambdaexp{x}{e'}$ where for all
  $p < m$ and all $\ivalrel{}{p}{u_1}{u_1'}$,
  $\iexprel{}{p}{e[u_1/x]}{e'[u_1'/x]}$.

  Now, for any \s and $w$, we have
  \begin{align*}
    & \configsw{\lambdaexp{x}{e}}{K_1}  \\
    & \to \configsw{\appexp{\lambdaexp{x}{e}}{v}}{K} \\
    & \to {\configsw {e[v/x]} {K}}
  \end{align*}
  so $\aMeas{m}{\lambdaexp{x}{e}}{K_1}{A} = \aMeas{m-2}{e[v/x]}{K}{A}$, and
  similarly on the primed side (but with $\Meas{-}{-}{-}$ in place of
  $\aMeas{m}{-}{-}{-}$ and with equality in place of $\le$).

  Next, observe $m-2 \le n$, so $\ivalrel{}{m-2}{v}{v'}$ and hence
  $\iexprel{}{m-2}{e[v/x]}{e'[v'/x]}$ by the property of $e$ and
  $e'$ above.  Therefore, $\aMeas{m-2}{e[v/x]}{K}{A} \le \Meas{e'[v'/x]}{K'}{A}$.

  Putting the pieces together, we have
  \begin{align*}
    & {\aMeas {m}{\lambdaexp{x}{e}}{K_1}{A}} \\
    & = {\aMeas {m-2} {e[v/x]} {K}{A}} \\
    & \le {\Meas {e'[v'/x]}{K'}{A}} \\
    & = {\Meas {\lambdaexp{x}{e'}}{K_1'}{A}}
  \end{align*}
 \end{proof}


\subsection{From Section~\ref{sec:shuffling}}

\begin{proof}[Proof of \ref{lemma-shuffling-iterate}]
  By Tonelli's Theorem (Lemma~\ref{lemma-tonelli}) and the fact that
  $g$ is arbitrary we can freely rearrange the parameters to $g$
  without loss of generality.  In particular, all paths ending with
  $L$ are assumed to come before any paths ending with $R$.
  
  Let $l$ be the length of the longest path or 0 if $n = 0$. We proceed by strong induction on $l$.
  \begin{itemize}
  \item case $l = 0$. For every $i$, we know that $p_i$ must be the empty list. Since
    we also know that the paths are non-duplicating it follows that $n \leq 1$.
    If $n = 1$ then the equality holds trivially, and if $n = 0$ then by
    Property~\ref{lemma-entropy-splitting}(1) we get
    \begin{displaymath}
      \leb{g()}{\s} = g()
    \end{displaymath}
  \item case $l > 0$.
    Since at least one path is not the empty list, it follows from
    non-duplication that no path is the empty list.  For each $i \in
    \setof{1 \comdots n}$, let $q_i$ be $p_i$ with the last direction
    removed.
  
    Assume without loss of generality that $p_1$ \ldots $p_k$ end
    with $L$ and $p_{k+1}$ \ldots $p_n$ end with $R$, so $p_i(\s) =
    q_i(\pi_L(\s))$ for $1 \le i \le k$ and and $p_i(\s) =
    q_i(\pi_R(\s))$ for $k+1 \le i \le n$.  Then we can conclude:
    \begin{align*}
      &\leb{g(p_1(\s) \comdots p_n(\s))}{\s} \\
      &= \leb{g(q_1(\pi_L(\s)) \comdots q_k(\pi_L(\s)), q_{k+1}(\pi_R(\s)) \comdots q_n(\pi_R(\s)))}{\s} \\
      &= \dint{g(q_1(\s') \comdots q_k(\s'), q_{k+1}(\s'') \comdots q_n(\s''))}{\s'}{\s''} \tag{Property~\ref{lemma-entropy-splitting}(4)} \\
      \intertext{Since every $q_i$ is strictly shorter than $p_i$, we can apply the induction hypothesis, first at $\s'$ and then at $\s''$.}
      &= \leb{\left(\iterint{g(\s_1 \comdots \s_k, q_{k+1}(\s'') \comdots q_n(\s''))}{\s_1}{\s_k}\right)}{\s''} \tag{IH} \\
      &= \iterint{g(\s_1 \comdots \s_n)}{\s_1}{\s_n} \tag{IH}
   \end{align*}
  \end{itemize}
\end{proof}

\subsection{From Section~\ref{sec:commut}}
\label{sec:commut-appendix}

\begin{proof}[Proof of \ref{thm-commut} using Small-Step Semantics]
Let $e$ and $e'$ denote the two expressions of the theorem.
We will use Theorem~\ref{thm-entropy-shuffling1}.  To do so, we will
  consider the evaluations of $e$ and $e'$.   For each
  evaluation, we will use the Interpolation Theorem to define
  waypoints in the evaluation.  We then use the Genericity Theorem to
  establish that the ending configurations are the same.

We begin by watching the first expression evaluate in an arbitrary
continuation $K$, saved entropy $\t$, and initial weight $w$:

\begin{displaymath}
  \wider
  \begin{array}{l@{{}}ll}
     & {\nconfig{\s}{\letexp{x_1}{e_1}{\letexp{x_2}{e_2}{e_0}}}{K}{\t}{w}}
    \\
    \to{} & {\nconfig {\pi_L(\s)} {e_1}
             {\letcont{x_1}{\letexp{x_2}{e_2}{e_0}}{K}}
             {\pi_R(\s) \cons \t} {w}} \\
    \to^*{} & {\nconfig {\s_1} {v_1} {\letcont{x_1}{\letexp{x_2}{e_2}{e_0}}{K}}
                        {\pi_R(\s) \cons \t} {w_1 \times w}}
          &\just{Interpolation} \\
    \to   & {\nconfig {\pi_R(\s)} {\letexp{x_2}{e_2}{e_0[v_1/x_1]}} {K}
              {\t} {w_1 \times w}} \\
    \to   & {\nconfig {\pi_L(\pi_R(\s))} {e_2}
                {\letcont {x_2}{e_0[v_1/x_1]}{K}}
                {\pi_R(\pi_R(\s)) \cons \t} {w_1 \times w}} \\
    \to^*  & {\nconfig {\s_2} {v_2}
                 {\letcont {x_2}{e_0[v_1/x_1]}{K}}
                 {\pi_R(\pi_R(\s)) \cons \t} {w_2 \times w_1 \times
                      w}}    &\just{Interpolation} \\
    \to    & {\nconfig {\pi_R(\pi_R(\s))} {e_0[v_1/x_1][v_2/x_2]}
                 {K} {\t} {w_2 \times w_1 \times w}}
  \end{array}
\end{displaymath}

Next, we outline the analogous computation with the second expression $e'$,
starting in a different entropy $\s'$, but with the same continuation
$K$, saved entropy $\t$, and weight $w$.  We proceed under the
assumption that $e'$ reduces to a value; we will validate this
assumption later.

\begin{displaymath}
  \wider
  \begin{array}{l@{{}}ll}
    & {\nconfig{\s'}{\letexp{x_2}{e_2}{\letexp{x_1}{e_1}{e_0}}}{K}{\t}{w}}
    \\
    \to{} & {\nconfig {\pi_L(\s')} {e_2}
             {\letcont{x_2}{\letexp{x_1}{e_1}{e_0}}{K}}
             {\pi_R(\s') \cons \t} {w}} \\
    \to^*{} & {\nconfig {\s'_2} {v'_2} {\letcont{x_2}{\letexp{x_1}{e_1}{e_0}}{K}}
                        {\pi_R(\s') \cons \t} {w'_2 \times w}}
          &\just{Interpolation} \\
    \to   & {\nconfig {\pi_R(\s')} {\letexp{x_1}{e_1}{e_0[v'_2/x_2]}} {K}
              {\t} {w'_2 \times w}} \\
    \to   & {\nconfig {\pi_L(\pi_R(\s'))} {e_1}
                {\letcont {x_1}{e_0[v'_2/x_2]}{K}}
                {\pi_R(\pi_R(\s')) \cons \t} {w'_2 \times w}} \\
    \to^*  & {\nconfig {\s'_1} {v'_1}
                 {\letcont {x_1}{e_0[v'_2/x_2]}{K}}
                 {\pi_R(\pi_R(\s')) \cons \t} {w'_1 \times w'_2 \times
                      w}}    &\just{Interpolation} \\
    \to    & {\nconfig {\pi_R(\pi_R(\s'))} {e_0[v'_2/x_2][v'_1/x_1]}
                 {K} {\t} {w'_1 \times w'_2 \times w}}
  \end{array}
\end{displaymath}

To get these computations to agree, we choose $\s'$ so that the
entropies for $e_1$, $e_2$ and the substitution instances of $e_0$
are the same in both calculations.   So we choose $\s'$ such that
\begin{displaymath}
  \begin{array}{l@{{}={}}ll}
    \pi_L(\pi_R(\s')) & \pi_L(\s) &\just{entropy for $e_1$} \\
    \pi_L(\s')        & \pi_L(\pi_R(\s))  &\just{entropy for $e_2$} \\
    \pi_R(\pi_R(\s')) &     \pi_R(\pi_R(\s))  &\just{entropy for $e_0$} \\
  \end{array}
\end{displaymath}
This can be accomplished by using $\phi_c$ from
Section~\ref{sec:shuffling}; we set
\begin{displaymath}
  \s' = \phi_c(\s) = \pi_L(\pi_R(\s)) \cons (\pi_L(\s) \cons \pi_R(\pi_R(\s)))
\end{displaymath}

Applying the genericity theorem at $e_1$ we conclude that that $e_1$
reduces to a value at entropy $\pi_L(\pi_R\s')) = \pi_L(\s)$ and
continuation ${\letcont {x_1}{e_0[v'_2/x_2]}{K}}$, that
$v_1 = v_1'$,
and that $w_1 = w_1'$.  Similarly applying the genericity theorem at $e_2$ we
conclude that $e_2$ reduces to a value $v_2' = v_2$ and $w_2 = w_2'$.  So the two calculations
culminate in identical configurations.

So we conclude that
\begin{displaymath}
  {\eval{\s}{e}{K}{\t}1{A}} = {\eval{\phi_c(\s)}{e'}{K}{\t}1{A}}
\end{displaymath}

$\phi_c$ is a non-duplicating FSF, so it is measure-preserving by
Theorem~\ref{thm-shuffling-pushforward}.
Then by Theorem~\ref{thm-entropy-shuffling1}, $\ctxrel{}{}{e}{e'}$.  The
converse holds by symmetry.
\end{proof}

\section{Example Transformation Sketch} \label{sec:appendix-opt}

This section gives a sketch of the transformation of the linear
regression example from Section~\ref{sec:probprog}.

Here is the initial program:

\begin{alltt}
  A = normal(0, 10)
  B = normal(0, 10)
  f(x) = A*x + B
  factor normalpdf(f(2) - 2.4; 0, 1)
  factor normalpdf(f(3) - 2.7; 0, 1)
  factor normalpdf(f(4) - 3.0; 0, 1)
\end{alltt}

\noindent Our first goal is to reshape the program to expose the
conjugacy relationship between \texttt{B}'s prior and the
observations. Specifically, we need the observations to be expressed
with \texttt{B} as the mean. The following steps perform the reshaping.

Inline \ttop{f} using $\letvax$:

\begin{alltt}
  A = normal(0, 10)
  B = normal(0, 10)
  factor normalpdf((A*2 + B) - 2.4; 0, 1)
  factor normalpdf((A*3 + B) - 2.7; 0, 1)
  factor normalpdf((A*4 + B) - 3.0; 0, 1)
\end{alltt}

\noindent Rewrite using $((y + x) - z) = (x - (z - y))$ three
times. (This combines associativity and commutativity of $+$ as well
as other facts relating $+$ and $-$.)

\begin{alltt}
  A = normal(0, 10)
  B = normal(0, 10)
  factor normalpdf(B - (2.4 - A*2); 0, 1)
  factor normalpdf(B - (2.7 - A*3); 0, 1)
  factor normalpdf(B - (3.0 - A*4); 0, 1)
\end{alltt}

\noindent Rewrite using $\ttop{normalpdf}(x-y; 0, s) = \ttop{normalpdf}(y; x, s)$
three times.

\begin{alltt}
  A = normal(0, 10)
  B = normal(0, 10)
  factor normalpdf(2.4 - A*2; B, 1)
  factor normalpdf(2.7 - A*3; B, 1)
  factor normalpdf(3.0 - A*4; B, 1)
\end{alltt}

\noindent Now the conjugacy relationship is explicit. The equation
from Section~\ref{section:quasi-denotational} gives a closed-form
solution to the posterior with respect to a single observation, and it
applies to a specific shape of expression. Our new goal is to
reshape the program so the conjugacy transformation can be applied to
the first observation.

First, apply $\letvax$ in reverse, \emph{after} the first occurrence of
\texttt{B}:

\begin{alltt}
  A = normal(0, 10)
  B = normal(0, 10)
  factor normalpdf(2.4 - A*2; B, 1)
  B1 = B
  factor normalpdf(2.7 - A*3; B1, 1)
  factor normalpdf(3.0 - A*4; B1, 1)
\end{alltt}

\noindent Use $\letsax$ to pull out the argument to the first $\ttop{normalpdf}$:

\begin{alltt}
  A = normal(0, 10)
  err1 = 2.4 - A*2
  B = normal(0, 10)
  factor normalpdf(err1; B, 1)
  B1 = B
  factor normalpdf(2.7 - A*3; B1, 1)
  factor normalpdf(3.0 - A*4; B1, 1)
\end{alltt}

\noindent Apply let-associativity twice (viewing the \texttt{factor}
statement as an implicit let with an unused variable).

\begin{alltt}
  A = normal(0, 10)
  err1 = 2.4 - A*2
  B1 = \{ B = normal(0, 10)
         factor normalpdf(err1; B, 1)
         B \}
  factor normalpdf(2.7 - A*3; B1, 1)
  factor normalpdf(3.0 - A*4; B1, 1)
\end{alltt}

\noindent Now the right-hand side of the \texttt{B1} binding is in the
right shape. We apply the conjugacy rule from
Section~\ref{section:quasi-denotational}:

\begin{alltt}
  A = normal(0, 10)
  err1 = 2.4 - A*2
  B1 = \{ B = normal(\( \left( \frac{1}{10^2} + \frac{1}{1^2} \right)^{(-1)} \left( \frac{0}{10^2} + \frac{\texttt{err1}}{1^2} \right) \), \( \left( \frac{1}{10^2} + \frac{1}{1^2} \right)^{(-1/2)} \))
         factor normalpdf(err1; 0, \( (10^2 + 1^2)^{(1/2)} \))
         B \}
  factor normalpdf(2.7 - A*3; B1, 1)
  factor normalpdf(3.0 - A*4; B1, 1)
\end{alltt}

\noindent We have processed one observation. Now we must clean up and
reset the program so the remaining observations can be processed. (One
of the properties of conjugacy is that the posterior has the same form
as the prior, so observations can be absorbed incrementally.)

We use $\letsax$ to give names to the new mean and scale parameters
for \texttt{B}, let-associativity to ungroup the results, and
$\letvax$ to eliminate the \texttt{B1} binding. Finally, we use
commutativity to move the new \texttt{factor} expression up and out of
the way.

\begin{alltt}
  A = normal(0, 10)
  err1 = 2.4 - A*2
  factor normalpdf(err1; 0, \( (10^2 + 1^2)^{(1/2)} \))
  m1 = \( \left( \frac{1}{10^2} + \frac{1}{1^2} \right)^{(-1)} \left( \frac{0}{10^2} + \frac{\texttt{err1}}{1^2} \right) \)
  s1 = \( \left( \frac{1}{10^2} + \frac{1}{1^2} \right)^{(-1/2)} \)
  B = normal(m1, s1)
  factor normalpdf(2.7 - A*3; B, 1)
  factor normalpdf(3.0 - A*4; B, 1)
\end{alltt}

\noindent That completes the processing of the first observation. In its
place we have \texttt{B} drawn from the posterior distribution (with
respect to that observation) and a \texttt{factor} expression to
score \texttt{A} independent of the choice of \texttt{B}.
We can now repeat the process for the remaining observations.

Alternatively, we could have imported a transformation that used the
closed-form formula for the posterior and normalizer given multiple
observations. That would have led to different shaping steps.

\end{document}
